\newtheorem{thm}{Theorem}
\newtheorem{prop}[thm]{Proposition}
\newtheorem{lemma}[thm]{Lemma}
\newtheorem{cor}[thm]{Corollary}
\theoremstyle{remark}
\newtheorem{rem}[thm]{Remark}
\theoremstyle{remark}
\theoremstyle{remark}
\newcommand{\Ker}{\operatorname{Ker}}
\newcommand{\td}{T^{(2)}}
\newcommand{\tn}{T^{(n)}}
\newcommand{\fn}{f^{(n)}}
\newcommand{\Fnhc}{{\mathcal{F}}_n{({\h})}}
\newcommand{\Fhc}{{\mathcal{F}}{({\h})}}
\newcommand{\Fhcfin}{{\mathcal{F}_{\mathrm{fin}}}{({\h})}}
\newcommand{\h}{\mathcal{H}}
\newcommand{\hcm}{\mathcal{H}^{\otimes n}}
\newcommand{\N}{\mathbb{N}}
\newcommand{\pn}{\mathcal{P}_n}
\newcommand{\prd}{\partial^{\dagger}}
\newcommand{\pr}{\partial}
\newcommand{\Ran}{\operatorname{Ran}}
\newcommand{\vp}{\varphi}
\begin{document}

\begin{center}{\Large \bf
  Fock representations  of $Q$-deformed commutation relations
}\end{center}

{\large Marek Bo\.zejko}\\
Institute of Mathematics Polish Academy of Science. Kopernika 18, 51-617 Wroclaw, Poland\\
e-mail: \texttt{bozejko@math.uni.wroc.pl}\vspace{2mm}

{\large Eugene Lytvynov}\\ Department of Mathematics,
Swansea University, Singleton Park, Swansea SA2 8PP, U.K.\\
e-mail: \texttt{e.lytvynov@swansea.ac.uk}\vspace{2mm}

{\large Janusz Wysocza\'nski}\\
Instytut Matematyczny, Uniwersytet Wroc{\l}awski, Pl.\ Grunwaldzki 2/4, 50-384 Wroc{\l}aw, Poland\\
e-mail: \texttt{jwys@math.uni.wroc.pl}\vspace{2mm}


\vspace{2mm}

{\small

\begin{center}
{\bf Abstract}
\end{center}
\noindent
We consider Fock representations of the $Q$-deformed commutation relations
$$\partial_s\partial^\dag_t=Q(s,t)\partial_t^\dag\partial_s+\delta(s,t), \quad s,t\in T.$$
Here
 $T:=\mathbb R^d$   (or more generally $T$ is a locally compact Polish space), the function $Q:T^2\to \mathbb C$ satisfies $|Q(s,t)|\le1$ and $Q(s,t)=\overline{Q(t,s)}$, and
$$\int_{T^2}h(s)g(t)\delta(s,t)\,\sigma(ds)\sigma(dt):=\int_T h(t)g(t)\,\sigma(dt),$$
  $\sigma$ being a fixed reference measure on $T$. In the case where $|Q(s,t)|\equiv 1$, the  $Q$-deformed commutation relations describe a generalized statistics studied by Liguori and Mintchev (1995). These generalized statistics contain anyon statistics as a special case (with $T=\mathbb R^2$ and a special choice of the function $Q$).
The related $Q$-deformed Fock space $\mathcal F(\mathcal H)$ over $\mathcal H:=L^2(T\to\mathbb C,\sigma)$  is constructed. An explicit form of the  orthogonal projection of $\mathcal H^{\otimes n}$ onto the $n$-particle space $\mathcal F_n(\mathcal H)$ is derived. A scalar product in $\mathcal F_n(\mathcal H)$ is given by an operator $\mathcal P_n\ge0$ in $\mathcal H^{\otimes n}$ which is strictly positive on $\mathcal F_n(\mathcal H)$. We realize the smeared operators $\partial_t^\dag$ and  $\partial_t$ as creation and annihilation operators in $\mathcal F(\mathcal H)$, respectively.
 Additional $Q$-commutation relations are obtained between the creation operators   and between the annihilation operators. They are of the form $\partial^\dag_s\partial^\dag_t=Q(t,s)\partial^\dag_t\partial^\dag_s$, $\partial_s\partial_t=Q(t,s)\partial_t\partial_s$, valid for those $s,t\in T$ for which $|Q(s,t)|=1$.
 }
\vspace{2mm}

{\bf 2010 MSC:} 47L10, 47L55, 	47L90, 	81R10

\section{Introduction}
The aim of the paper is to construct Fock representations of the $Q$-commutation relations
\begin{equation}\label{QCR_0}
\pr_s\prd_t=Q(s,t)\prd_t\pr_s+\delta(s,t),\quad s,t\in T.
\end{equation}
Here $T=\mathbb R^d$, or more generally, $T$ is a locally compact Polish space, the function $Q:T^2\to \mathbb C $ is Hermitian, i.e., $Q(s,t)=\overline{Q(t,s)}$, and satisfies $|Q(s,t)|\le1$,  $\partial_t$ and  $\partial_t^\dag$ are operator-valued distributions, adjoint of each other, and
$$ \int_{T^2} \delta(s,t) f(s,t)\,\sigma(ds)\,\sigma(dt):=\int_T f(t,t)\,\sigma(dt),$$
where $\sigma$ is a fixed  Radon meaure on $X$ (typically $\sigma(dt)=dt$ being the Lebesgue measure if $T=\mathbb R^d$). We will call \eqref{QCR_0} the {\it $Q$-deformed commutation relations}, or just $Q$-CR.

For a function $Q$ satisfying $|Q(s,t)|\equiv 1$, a Fock representation of the $Q$-CR was constructed by Liguori and Mintchev \cite{liguoriminchevCMP}. In that  case, creation operators $\partial_t^\dag$ and annihilation operators $\partial_t$ satisfy the additional commutation relations:
\begin{equation}\label{gyut86t5}
\prd_s\prd_t=Q(t,s)\prd_t\prd_s,\quad \pr_s\pr_t=Q(t,s)\pr_t\pr_s.\end{equation}
The term {\it Fock representation}  means that, for each annihilation operator, one has $\pr_t\Omega=0$, where $\Omega$ is the vacuum vector.

In the present study, relations \eqref{gyut86t5} will hold for those $s,t\in T$ which  satisfy $|Q(s,t)|=1$. Note that, under the assumption that the function $Q$ is Hermitian, the commutation relations \eqref{gyut86t5} are consistent if and only if $|Q(s,t)|=1$.

For the first time, an interpolation between the canonical (bosonic) commutation relations (CCR) and the canonical (fermionic) anticommutation relations (CAR) was rigorously constructed in  \cite{bozejkospeicherCMP1991}.
Let $\mathcal H$ be a separable Hilbert space and let $q\in(-1,1)$.  On a $q$-deformed Fock space over $\mathcal H$, Bo\.zejko and Speicher \cite{bozejkospeicherCMP1991} constructed $q$-creation operators $a^{+}(f)$ (in fact $a^{+}(f)$ were \textit{free} creation operators), and  $q$-annihilation operators $a^{-}(f):=(a^+(f))^*$,  for $f\in \h$, which satisfy the $q$-commutation relations:
\begin{equation}\label{iyt867}
a^-(f)a^+(g)=qa^+(g)a^-(f) + (f, g)_{\h}, \quad f,g \in \h.
\end{equation}
The limits $q= 1$ and $q=-1$ correspond to the boson and fermion statistics, respectively, thus giving the CCR and CAR.
The case $q=0$ corresponds to the creation and annihilation operators acting in the full Fock space; these operators are particularly important for models of free probability, see e.g.\ \cite{NiSp,bozejko-lytvynovCMP2009}. Aspects of noncommutative probability related to the general $q$-commutation relations \eqref{iyt867} were discussed e.g.\ in \cite{bozejkospeicherCMP1991,bozejkokummererspeicherCMP1997,anshelevichDocMath2001}.

By using probabilistic methods,  Speicher \cite{speicherLMPh1993} proved existence of a representation of  the (discrete) $q_{ij}$-commut\-ation relations  of the form
\begin{equation}\label{ghdtrdeyk}
\partial_i\partial^\dag_j=q_{ij}\partial^\dag_j\partial_i+\delta_{ij}
\end{equation}
with $-1 \leq q_{ij}=q_{ji}\leq 1$,  $i,j \in \N$, and $(\partial^\dag_i)^*=\partial_i$.
 Bo\.zejko and  Speicher  \cite{bozejkospeicherMathAnn1994}
 constructed a Fock representation of the following commutation relations between creation operators $\partial^\dag_j$ and annihilation operators $\partial_i$, with $i, j\in \N$:
\begin{equation}\label{gtf67er}
\partial_i\partial_j^\dag=\sum_{k,l}q^{ik}_{jl}\,\partial^\dag_k\partial_l+ \delta_{i,j}.
\end{equation}
 They showed that, if the operator
$\Psi$ given by the matrix $(q^{ik}_{jl})_{i,j,k,l}$ is
self-adjoint, satisfies the braid relations, and has norm $<1$,  then there exists a Fock  representation of the  commutation relations \eqref{gtf67er}. As a consequence, they obtained a Fock  representation of the $q_{ij}$-commutation relations \eqref{ghdtrdeyk} even for complex $q_{ij}$ with $\overline{q_{ij}}=q_{ji}$ and  $\sup_{i,j}|q_{ij}|=\|\Psi\|<1$. By taking the weak limit of corresponding operator algebras,  Bo\.zejko and  Speicher  \cite{bozejkospeicherMathAnn1994} also derived existence
of a representation of the $q_{ij}$-commutation relations \eqref{ghdtrdeyk} with $\sup_{i,j}|q_{ij}|=\|\Psi\|=1$.
Also J{\o}rgensen, Schmitt and Werner  \cite{jorgensenschmittwernerPacific1994,jorgensenschmittwernerJFA1995} considered representations of the   commutation relations~\eqref{gtf67er}.

In the case where $\|\Psi\|=1$,  J{\o}rgensen, Proskurin, and Samo\v{\i}lenko \cite{jorgensenproskurinsamoilenkoPacificJM} found, for $n\ge2$, the kernel of the nonnegative operator  which determines the scalar product in the $n$-particle space of the Fock space corresponding to the commutation relations \eqref{gtf67er}. The papers \cite{bozejkospeicherMathAnn1994} and  \cite{jorgensenproskurinsamoilenkoPacificJM} taken together give then a Fock representation of the commutation relations \eqref{ghdtrdeyk} with $\sup_{i,j}|q_{ij}|=1$.

Properties of the algebras generated by such operators were studied by many authors. In the context of $C^*$-algebras, let us  mention  the works by Dykema and Nica \cite{dykemanicaJReineAngew1993} and Kennedy and Nica \cite{kennedynicaCMP2011} (who studied relations of the $C^*$-algebras generated by the $q$-commutation relations with the Cuntz algebra), J{\o}rgensen, Schmitt and Werner  \cite{jorgensenschmittwernerPacific1994,jorgensenschmittwernerJFA1995} (who studied the Wick order generated $C^*$-algebras), Proskurin and Samoilenko \cite{proskurinsamoilenkoJMS2010} (who studied general Wick *-algebras).  There are also a number of studies of the $q$-commutation relations in the context of von Neumann algebras, in particular, by Lust-Piquard \cite{lustpiquard1999} (who studied properties of the Riesz transform), Kr\'olak \cite{krolakCMP2000}, Nou \cite{nouMathAnn2004}, {\'S}niady \cite{sniadyCMP2004}, Ricard \cite{ricardCMP2004} (who studied factoriality problems), Shlyakhtenko \cite{shlakhtenkoIMRN2004} (who studied Voiculescu's free entropy for families of $q$-Gaussian operators), and Bo\.zejko \cite{bozejkoDemMath2012} (who studied positivity of the symmetrization operators constructed through a self-adjoint Yang--Baxter operator $\Psi\geq -1$). Also Dabrowski \cite{Da2014}, Guionnet and Shlyakhtenko \cite{guionnetshlyakhtenkoInventMath2014}, and  Nelson and Zeng \cite{NZ2015a,NZ2015b} proved that $q$-factors or, more generally, $q_{ij}$-factors are isomorphic to the free group factors ($q=0$) for small values of $q$ or $q_{ij}$, respectively. Another possible generalization of the commutation relations \eqref{iyt867} related to the group of signed permutations can be found in  \cite{BEH2015}

All the above mentioned investigations are of \textit{discrete} type, so that the set $T$ is at most countable.
As we have already mentioned above, in the \textit{continuous} setting, a Fock representation of the $Q$-CR \eqref{QCR_0}, \eqref{gyut86t5}, called a {\it generalized statistics}, was constructed by Liguori and Mintchev \cite{liguoriminchevCMP}, see also \cite{GMS1980,GMS1981,goldinsharpPhRevLett1996,goldinmajidJMP2004,bozejkolytvynovwysoczanskiCMP2012}. A rigorous meaning of these commutation relations is given by smearing them with functions from $\mathcal H:=L^2(T\to\mathbb C,\sigma)$. More precisely,
defining for $f\in\mathcal H$ operators $a^+(f):=\int_T \sigma(dt)\, f(t)\partial^\dag_t$ and $a^-(f):=\int_T \sigma(dt)\, \overline{f(t)}\partial_t $, we get the
commutation relations:
\begin{align*}
a^-(f)a^+(g)&=\int_{T^2} \sigma(ds)\,\sigma(dt) \,\overline{f(s)}g(t)Q(s,t)\partial_t^\dag\partial_s+\int_T \overline{f(t)}g(t)\,\sigma(dt),
\\
 a^+(f)a^+(g)&=\int_{T^2} \sigma(ds)\,\sigma(dt) \,f(s)g(t)Q(t,s)\partial_t^\dag\partial_s^\dag, \\
a^-(f)a^-(g)&=\int_{T^2} \sigma(ds)\,\sigma(dt) \,\overline{f(s)g(t)}Q(t,s)\partial_t\partial_s,
\end{align*}
where $f,g\in\mathcal H$. (Of course, the operator-valued integrals in these relations should be given a rigorous meaning.)

From the physical point of view, the most important case  of a generalized statistics is the anyon statistics, where $T=\mathbb R^2$ and the function $Q(s,t)$ is determined by a complex parameter $q$ with $|q|=1$, namely,
\begin{equation}\label{ytre6ie5i9}
Q(s,t) = \begin{cases}
q,&\text{if }s^1<t^1,\\
\bar q, &\text{if }s^1>t^1.
\end{cases}
\end{equation}
Here, $s=(s^1,s^2), t=(t^1,t^2)\in\mathbb R^2$. Note that the value of the function $Q$ on the set $\{(s,t)\in T^2\mid s^1=t^1\}$ does not matter for the Fock representation of the $Q$-CR.
For an explanation as to why such commutation relations describe an anyon statistic, we refer the reader to Liguori and Mintchev's paper \cite{liguoriminchevCMP}
and to Goldin and Sharp's paper \cite{goldinsharpPhRevLett1996}.

Goldin and Majid \cite{goldinmajidJMP2004} proved the following anyonic exclusion principle, which generalizes Pauli's exclusion principle for fermions: If $q^m=1$ and $q\ne1$, then the creation operators $a^+(f)$ in the Fock representation of the anyon commutation relations satisfy $a^+(f)^m=0$, or equivalently, the $Q$-symmetrization of the function $f^{\otimes m}$ is equal to zero.

In \cite{Ly2016}, non-Fock representations of the anyon commutation relations have been constructed, whose vacuum states are gauge-invariant quasi-free. Note that, for those representations, the (real) value of the function $Q(s,t)$ for $s=t$ must be specified.

Let us  mention that anyon systems have also been considered in the discrete setting, i.e., when $T\subset\mathbb N$, see e.g.\  \cite{FSSS,LS1993,goldinmajidJMP2004}. It should be, however, mentioned that, when discussing the anyons in the discete setting, Goldin and Majid \cite{goldinmajidJMP2004}
dropped the assumption that the annihilation operator is adjoint  of the creation operator, and proved an anyonic exclusion principle for their model.

In this paper, we study
 the continuous case with a function $Q$ satisfying $|Q(s,t)|\le 1$. This natural choice of  $Q$ contains generalized statistics and the relations \eqref{iyt867} as special cases. We would also like to draw the reader's attention to the study by Merberg \cite{merberg2012}, where the case $Q: T^2\to (-1,1)$ was considered and factoriality of the related von Neumann algebras generated by the $Q$-Gaussian operators was discussed.

In Section 2 we present a construction of the Fock representation of the $Q$-CR \eqref{QCR_0}. To this end, we construct a certain
 $Q$-deformed Fock space over $\mathcal H=L^2(T\to\mathbb C,\sigma)$, denoted by $\mathcal F(\mathcal H)$. We  describe the $n$-particle subspaces, $\mathcal F_n(\mathcal H)$, of  $\mathcal F(\mathcal H)$. As a set, each $\mathcal F_n(\mathcal H)$ is a subset of $\mathcal H^{\otimes n}=L^2(T^n\to\mathbb C,\sigma^{\otimes n})$ and consists of all functions $f^{(n)}\in \mathcal H^{\otimes n}$ that are {\it $Q$-quasisymmetric}, meaning that,   a.e.\ for each $k\in\{1,\dots,n-1\}$,
  \begin{equation}\label{Q-symmetric functions}
f^{(n)}(t_1, \dots , t_n) = Q(t_k,
t_{k+1})f(t_1, \dots , t_{k+1}, t_k,  \dots , t_n)
\end{equation}
 provided $|Q(t_k,t_{k+1})|=1$.  We derive an explicit formula for the orthogonal projection of $\mathcal H^{\otimes n}$ onto $\mathcal F_n(\mathcal H)$. A scalar product in $\mathcal F_n(\mathcal H)$ is given by an operator $\mathcal P_n\ge0$ in $\mathcal H^{\otimes n}$ which is strictly positive on $\mathcal F_n(\mathcal H)$.
   We then realize $a^+(f)$, $a^-(f)$ ($f\in\mathcal H$) as creation and annihilation  operators acting in the $Q$-deformed Fock space $\mathcal F(\mathcal H)$. These operators satisfy
 the $Q$-CR \eqref{QCR_0}. Additionally, due to the $Q$-symmetry \eqref{Q-symmetric functions} in each $\mathcal F_n(\mathcal H)$, we get the following commutation relations between the creation operators and between the annihilation operators:
\begin{align}
\prd_s\prd_t&=Q(t,s)\prd_t\prd_s,\quad\text{if }|Q(s,t)|=1,\notag\\
\pr_s\pr_t&=Q(t,s)\pr_t\pr_s,\quad \text{if }|Q(s,t)|=1.\label{vyi76}
\end{align}

We  note that, by choosing $T$ to be a discrete set and $\sigma$ to be the counting measure on $T$, one can apply our results in a discrete setting. In fact,  the explicit description of the $n$-particle space $\mathcal F_n(\mathcal H)$, explicit formula for the orthogonal projection of    $\mathcal H^{\otimes n}$ onto $\mathcal F_n(\mathcal H)$, and the additional commutation relations \eqref{vyi76} appear to be new results even in the discrete setting.

We finish  Section~2 with a proposition that shows that
discrete anyons of fermion type satisfy  the anyonic exclusion principle, compare with \cite{goldinmajidJMP2004}.

In Section 3, we prove the results formulated in Section~2.

\section{Construction of the Fock representation of\\ $Q$-CR}
In this section, we will construct a Fock representation of the commutation relation \eqref{QCR_0}, and we will note that  the additional commutation relations \eqref{vyi76} then also hold.

\subsection{Operator $\mathcal P_n$} \label{ydr6ur5ei}

Let $T$ be a locally compact Polish space, let $\mathcal B(T)$ denote the Borel $\sigma$-algebra on $T$, and let $\sigma$ be  a Radon measure on $(T,\mathcal B(T))$. Let $E\in\mathcal B(T^2)$ be a symmetric subset of $T^2$:
 if $(s, t)\in E$ then $(t, s)\in E$. We assume  that $\sigma^{\otimes 2}(E)=0$. Denote $T^{(2)}:=T^2\setminus E$, which is also a symmetric set.
 We fix a
complex-valued measurable function
\[
Q:\td \to \{z\in \mathbb{C}: |z|\leq 1 \}
\] which is Hermitian: for all $(s, t)\in \td$, we have $Q(s, t) = \overline{Q(t, s)}$. This function is defined $\sigma^{\otimes 2}$-almost everywhere on
$T^2$.

\begin{rem}
 The case where $|Q(s,t)|=1$ for all $(s,t)\in T^{(2)}$  corresponds to a generalized statistics studied by Liguori and Mintchev \cite{liguoriminchevCMP}. The special case where $T=\mathbb R^2$, $\sigma(dt)=dt$ is the Lebesgue measure on $T$, $E=\{(s,t)\in T^2\mid s^1=t^1\}$, and  the function $Q$ is defined by formula \eqref{ytre6ie5i9} with $q\in\mathbb C$, $|q|=1$,  corresponds to anyon statistics, see  \cite{liguoriminchevCMP, goldinsharpPhRevLett1996,goldinmajidJMP2004}.
The choice $Q(s,t)=q$  for all $(s,t)\in T^{(2)}=T^2$ with $q\in(-1,1)$  corresponds to the $q$-commutations \eqref{iyt867}, see \cite{bozejkospeicherCMP1991}. \end{rem}



 Let us consider an operator $\Psi$ which transforms a
measurable function $f^{(2)}: T^{(2)}\rightarrow {\mathbb C}$ into the function
\begin{equation}\label{tye654i}
(\Psi f^{(2)})(s,  t):= Q(s, t) f^{(2)}(t, s),\quad (s,t)\in T^{(2)}.
\end{equation}

Analogously to $\td$, we define, for $n\ge3$,
\[
\tn:=\big\{(t_1, \dots , t_n)\in T^n : \text{$(t_i, t_j)\notin E$ for all  $1\leq i < j \leq n$} \big\}.
\]
It is clear that $\sigma^{\otimes n}(T^n\setminus \tn)=0$. The
operator $\Psi$ can be extended to a transformation of functions $\fn : \tn \to \mathbb C$ by
setting, for $k\in \{1,\dots,n-1\}$,
\begin{equation}\label{operators Psi_j}
(\Psi_k\fn)(t_1, \dots , t_n):= Q(t_k, t_{k+1}) \fn(t_1, \dots ,
t_{k-1}, t_{k+1}, t_{k}, t_{k+2}, \dots , t_n ).
\end{equation}

Let $\mathcal H:=L^2(T\to\mathbb C,\sigma)$ be the complex $L^2$-space over $T$. We agree that the scalar product $(\cdot,\cdot)_{\mathcal H}$ is antilinear in the first dot and linear in the second.
For $n\ge2$, the $n$th tensor power of $\mathcal H$, denoted by $\mathcal H^{\otimes n}$, can be identified with the complex $L^2$-space $L^2(T^{(n)}\to\mathbb C,\sigma^{\otimes n})$. Each $\Psi_k$ is a  contraction in $\hcm$.
The following trivial lemma shows that the operators $\Psi_k$  are  self-adjoint and satisfy the
braid relations.

\begin{lemma}\label{ioho90u8y8}
The operators $\Psi_k$ satisfy:
\begin{align}
\Psi_k^*&=\Psi_k,\notag \\
\Psi_k\Psi_l & =  \Psi_l\Psi_k \quad \text{if }|k-l|\ge 2,\notag\\
\Psi_k\Psi_{k+1}\Psi_k & =   \Psi_{k+1}\Psi_{k}\Psi_{k+1}.\label{Yang-Baxter}
\end{align}
\end{lemma}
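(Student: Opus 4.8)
The plan is to verify all three identities by direct, elementary computation with the function $Q$, working a.e.\ on $\tn$. The only structural facts needed are that $\tn$ is invariant under permutations of the coordinates (since $E$ is symmetric and $\tn$ is cut out by the conditions $(t_i,t_j)\notin E$ for all $i<j$) and that $\sigma^{\otimes n}(T^n\setminus\tn)=0$, so that substitutions permuting variables are legitimate in the integrals defining $\hcm=L^2(\tn\to\mathbb C,\sigma^{\otimes n})$; boundedness of $\Psi_k$, already noted before the lemma, legitimizes the manipulation with scalar products.

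For self-adjointness I would write, for $\fn,g^{(n)}\in\hcm$,
\[
(\Psi_k\fn,g^{(n)})_{\hcm}=\int_{\tn}\overline{Q(t_k,t_{k+1})\,\fn(t_1,\dots,t_{k+1},t_k,\dots,t_n)}\;g^{(n)}(t_1,\dots,t_n)\,\sigma^{\otimes n}(dt),
\]
and then substitute $t_k\leftrightarrow t_{k+1}$. After the substitution $\fn$ is evaluated at $(t_1,\dots,t_n)$, $g^{(n)}$ at the swapped arguments, and the conjugated scalar factor becomes $\overline{Q(t_{k+1},t_k)}=Q(t_k,t_{k+1})$ by the Hermitian property of $Q$; the resulting integral is precisely $(\fn,\Psi_k g^{(n)})_{\hcm}$, proving $\Psi_k^*=\Psi_k$.

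For the relation $\Psi_k\Psi_l=\Psi_l\Psi_k$ with $|k-l|\ge2$, the key observation is that the transpositions of the coordinate pairs $(t_k,t_{k+1})$ and $(t_l,t_{l+1})$ act on disjoint sets of variables. Consequently, applying $\Psi_l$ and then $\Psi_k$, or $\Psi_k$ and then $\Psi_l$, both yield $Q(t_k,t_{k+1})Q(t_l,t_{l+1})\fn$ evaluated at the arguments obtained by performing both swaps: the factor $Q(t_l,t_{l+1})$ is unaffected by the swap of $t_k$ and $t_{k+1}$, and conversely. For the braid relation it suffices to track the three coordinates $t_k,t_{k+1},t_{k+2}$, the others being spectators; writing $f=f(x,y,z)$ for these three arguments, a step-by-step composition gives
\[
(\Psi_k\Psi_{k+1}\Psi_k f)(x,y,z)=Q(x,y)Q(x,z)Q(y,z)\,f(z,y,x),
\]
\[
(\Psi_{k+1}\Psi_k\Psi_{k+1} f)(x,y,z)=Q(y,z)Q(x,z)Q(x,y)\,f(z,y,x),
\]
and the two right-hand sides coincide since the scalar prefactors are the same product of three complex numbers and both compositions realize the order-reversing permutation $(x,y,z)\mapsto(z,y,x)$.

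I do not expect any genuine obstacle here, which is why the lemma is labelled trivial: the equality of the two triple products of $Q$-values uses only commutativity of multiplication in $\mathbb C$, and the permutation bookkeeping is routine. The single point that requires a modicum of care is the order in which the elementary swaps are applied in the braid computation, but once the reduction to three variables is made the identity is immediate.
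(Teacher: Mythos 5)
Your proof is correct and is exactly the direct verification the paper has in mind when it calls the lemma trivial and omits the proof: the self-adjointness computation via the measure-preserving swap $t_k\leftrightarrow t_{k+1}$ together with $Q(s,t)=\overline{Q(t,s)}$, and the braid relation reduced to the three active coordinates, both check out (each side of \eqref{Yang-Baxter} produces the prefactor $Q(x,y)Q(x,z)Q(y,z)$ and the order-reversing substitution). No gaps.
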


Let $S_n$ denote the symmetric group on $\{1,\dots,n\}$.
Represent  a permutation $\pi\in S_n$ as an arbitrary  product of adjacent transpositions,
\begin{equation}\label{representation of permutation}
\pi = \pi_{j_1}\dotsm \pi_{j_m},
\end{equation}
 where
$\pi_j:=(j, j+1)\in S_n$ for $1\leq j \leq n-1$.
 A permutation $\pi
\in S_n$ can be represented (not in a unique way, in general) as a
reduced product of a minimal number of adjacent  transpositions,  i.e., in the form  \eqref{representation of permutation} with a minimal $m$.
This number $m$ is
then called the length of $\pi$, denoted by $|\pi|$.  It is well known that $|\pi|$ is equal to the number of inversions of $\pi$, i.e., the number of $1\le i< j\le n$ such that $\pi(i)>\pi(j)$.

The mapping $
\pi_k\mapsto \Psi_{\pi_k}:=\Psi_k
$ can be multiplicatively extended
to $S_n$ by setting
\begin{equation}\label{operators Psi_pi}
S_n\ni\pi \mapsto \Psi_{\pi}:=\Psi_{j_1}\dotsm \Psi_{j_m}.
\end{equation}
Although
representation (\ref{representation of permutation}) of $\pi\in S_n$ in a reduced form is not unique,
the formulas (\ref{Yang-Baxter}) yield that the extension
(\ref{operators Psi_pi}) is well defined, i.e., it does not depend on
the representation. (This fact also follows from the proof of Proposition~\ref{R_pi(f) formula on tensors} below.)

We will use the notations $\mathbf t^{(n)}:=(t_1,\dots,t_n)\in T^{(n)}$, $\mathbf t_\pi^{(n)}:=(t_{\pi(1)},\dots,t_{\pi(n)})$ for $\pi\in S_n$.

\begin{prop}
\label{R_pi(f) formula on tensors}
For each $\pi\in S_n$ and $f^{(n)}\in\mathcal H^{\otimes n}$, we have
\begin{equation}\label{R_pi(f)}
(\Psi_\pi f^{(n)})(\mathbf t^{(n)})=Q_{\pi^{-1}}(\mathbf t^{(n)})f^{(n)}(\mathbf t^{(n)}_\pi),\end{equation}
where
\begin{equation}\label{R_pi_wzor}
Q_{\pi}(\mathbf t^{(n)}):=\!\!\! \prod_{\substack{ 1\le i < j \le n\\[1mm]
\pi(i) > \pi(j)}}
\!\!\! Q(t_i, t_j),\quad \mathbf t^{(n)}\in T^{(n)}.\end{equation}
\end{prop}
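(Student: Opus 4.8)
The plan is to prove the formula \eqref{R_pi(f)} by induction on the length $|\pi|$ of the permutation, exploiting the fact that a reduced word for $\pi$ of length $m = |\pi|$ decomposes through a left multiplication by an adjacent transposition. First I would settle the base case: if $|\pi| = 0$, then $\pi$ is the identity, $\Psi_\pi = \mathbb I$, and the product in \eqref{R_pi_wzor} is empty, so $Q_{\pi^{-1}} \equiv 1$ and both sides of \eqref{R_pi(f)} reduce to $f^{(n)}(\mathbf t^{(n)})$. For the inductive step, suppose the formula holds for all permutations of length $m-1$, and let $\pi$ have length $m \ge 1$. Then there is an index $k \in \{1,\dots,n-1\}$ such that $\pi = \pi_k \rho$ with $|\rho| = m-1$, where $\pi_k = (k,k+1)$; equivalently one picks $k$ with $\pi^{-1}(k) > \pi^{-1}(k+1)$, so that passing from $\rho$ to $\pi$ creates exactly one new inversion. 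By the multiplicative definition \eqref{operators Psi_pi} one has $\Psi_\pi = \Psi_k \Psi_\rho$.

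Next I would compute $\Psi_k \Psi_\rho f^{(n)}$ by substituting the inductive hypothesis for $\Psi_\rho$ and then applying the definition \eqref{operators Psi_j} of $\Psi_k$. Applying $\Psi_k$ to the function $\mathbf t^{(n)} \mapsto Q_{\rho^{-1}}(\mathbf t^{(n)}) f^{(n)}(\mathbf t^{(n)}_\rho)$ swaps the variables $t_k \leftrightarrow t_{k+1}$ in both factors and multiplies by $Q(t_k, t_{k+1})$. So the new prefactor is $Q(t_k,t_{k+1})\, Q_{\rho^{-1}}(\mathbf t^{(n)}_{\pi_k})$, where $\mathbf t^{(n)}_{\pi_k}$ denotes $\mathbf t^{(n)}$ with the $k$th and $(k+1)$th entries interchanged, and the function argument becomes $(\mathbf t^{(n)}_{\pi_k})_\rho = \mathbf t^{(n)}_{\pi_k \rho} = \mathbf t^{(n)}_{\pi}$, which is already the right argument. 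It remains to verify the prefactor identity
\[
Q(t_k, t_{k+1})\, Q_{\rho^{-1}}(\mathbf t^{(n)}_{\pi_k}) = Q_{\pi^{-1}}(\mathbf t^{(n)}).
\]
For this I would track how the inversion set of $\rho^{-1}$ relates to that of $\pi^{-1} = \rho^{-1}\pi_k$: conjugating/relabeling by $\pi_k$ inside the product $Q_{\rho^{-1}}$ turns $\prod_{\rho^{-1}(i)>\rho^{-1}(j)} Q(t_{\sigma_k(i)}, t_{\sigma_k(j)})$ into a product over the inversions of $\pi^{-1}$ — with one subtlety, the pair $\{k,k+1\}$ itself. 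Since $\pi^{-1}(k) > \pi^{-1}(k+1)$ while $\rho^{-1}(k) < \rho^{-1}(k+1)$ (as adding the transposition $\pi_k$ on the right of $\rho$ flips exactly the relative order of the values at positions mapped to $k,k+1$), the pair $(k,k+1)$ contributes to $Q_{\pi^{-1}}$ but not to the relabeled $Q_{\rho^{-1}}$; its contribution is exactly the factor $Q(t_k,t_{k+1})$. Using the Hermitian property $Q(s,t) = \overline{Q(t,s)}$ is not needed here, but $|Q|\le 1$ guarantees everything stays well-defined as an $L^2$ operator (already noted: each $\Psi_k$ is a contraction).

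The main obstacle is the bookkeeping in the prefactor identity: one must check carefully that no inversion other than $\{k,k+1\}$ is gained or lost when replacing $\rho$ by $\pi = \pi_k\rho$, and that the variable relabeling $\mathbf t \mapsto \mathbf t_{\pi_k}$ inside $Q_{\rho^{-1}}$ precisely realizes the index shift $\rho^{-1} \mapsto \pi^{-1}$ on the surviving inversions. Concretely, for a pair $1 \le i < j \le n$ with $\{i,j\} \ne \{k,k+1\}$, one checks case by case (depending on whether $i,j \in \{k,k+1\}$ or not) that $(i,j)$ is an inversion of $\pi^{-1}$ if and only if the correspondingly relabeled pair is an inversion of $\rho^{-1}$, and that the variable carried is $Q(t_i,t_j)$ on both sides. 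Once this combinatorial identity is verified, the induction closes, and as a byproduct the independence of $\Psi_\pi$ from the chosen reduced word follows, since the right-hand side of \eqref{R_pi(f)} manifestly depends only on $\pi$ (as promised in the remark after \eqref{operators Psi_pi}).
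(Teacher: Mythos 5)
Your proof is correct and follows essentially the same route as the paper: induction on the length of $\pi$, reduced to a combinatorial identity describing how the inversion product $Q_{\pi}$ changes under multiplication by a single adjacent transposition, which is then checked by the same case analysis on pairs $\{i,j\}$ relative to $\{k,k+1\}$. The only difference is a mirror image: you peel the transposition off the left of $\pi$ (so your prefactor identity reads $Q(t_k,t_{k+1})\,Q_{\rho^{-1}}(\mathbf t^{(n)}_{\pi_k})=Q_{\pi^{-1}}(\mathbf t^{(n)})$, with relabelled arguments), whereas the paper peels it off the right and proves $Q_{\pi_l\eta}=Q(t_{\eta^{-1}(l)},t_{\eta^{-1}(l+1)})\,Q_\eta$ without relabelling; both versions are equivalent and your bookkeeping, including the observation that exactly the pair $\{k,k+1\}$ is the new inversion because the word is reduced, is sound.
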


For $n\ge 2$, we  define an operator $\pn$ on
$\hcm$ by
\begin{equation}\label{Q-symmetrization}
{\pn}:= \frac{1}{n!}\sum_{\pi\in S_n}\Psi_{\pi}.
\end{equation}
The operator ${\pn}$ is  a self-adjoint contraction in $\mathcal H^{\otimes n}$, since so are the operators $\Psi_k$.

The following result is a special case of Theorem 1.1 in \cite{bozejkospeicherMathAnn1994}.

\begin{thm}[ \cite{bozejkospeicherMathAnn1994}] For each $n\ge 2$, we have $\mathcal P_n\ge0$.
\end{thm}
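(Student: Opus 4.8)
The plan is to reduce the positivity of $\mathcal P_n$ to the abstract result of Bo\.zejko and Speicher on symmetrization operators built from a Yang--Baxter operator. Recall that $\mathcal P_n = \frac{1}{n!}\sum_{\pi\in S_n}\Psi_\pi$, where each $\Psi_k$ is a self-adjoint contraction satisfying the braid relations by Lemma~\ref{ioho90u8y8}. The key structural observation, which I would state carefully first, is that $\Psi := \Psi_1$ acting on $\mathcal H^{\otimes 2}$ is not merely a contraction but satisfies $\|\Psi\|\le 1$ \emph{together with} the spectral condition needed to invoke the Bo\.zejko--Speicher machinery: the operator $\mathbb I + \Psi$ (equivalently the relevant symmetrization operators) is nonnegative. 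Since $\Psi$ is self-adjoint with $\|\Psi\|\le 1$, we automatically have $-\mathbb I \le \Psi \le \mathbb I$, so in particular $\Psi \ge -\mathbb I$, which is exactly the hypothesis under which Theorem~1.1 of \cite{bozejkospeicherMathAnn1994} (in the form quoted, e.g., also \cite{bozejkoDemMath2012}) guarantees $\mathcal P_n \ge 0$ for all $n$.

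Concretely, the steps would be: (i) identify the operator-theoretic data — a Hilbert space $\mathcal K := \mathcal H^{\otimes 2}$ (or rather the family of embeddings $\mathcal H^{\otimes n} \hookrightarrow$ ambient space on which the $\Psi_k$ act) and the single operator $T := \Psi_1$ — and verify it meets the hypotheses of the cited theorem: $T = T^*$, $T$ satisfies the braid (Yang--Baxter) relation $T_1 T_2 T_1 = T_2 T_1 T_2$, and $T \ge -\mathbb I$; (ii) note that the operators $\Psi_k$ in \eqref{operators Psi_j} are precisely the canonical ``lifts'' $T_k$ of $T$ to $\mathcal H^{\otimes n}$ acting on the $k$-th and $(k{+}1)$-st factors, and that $\Psi_\pi$ agrees with the Bo\.zejko--Speicher operator associated to $\pi$; (iii) invoke Theorem~1.1 of \cite{bozejkospeicherMathAnn1994}, which asserts that under exactly these hypotheses the operator $\mathcal P_n = \frac{1}{n!}\sum_{\pi\in S_n}\Psi_\pi$ is nonnegative on $\mathcal H^{\otimes n}$. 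The contractivity $\|\Psi_k\|\le 1$ — which follows from $|Q(s,t)|\le 1$ — is what delivers $\Psi_1 \ge -\mathbb I$, closing the argument.

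The one genuine point requiring care is the bookkeeping in step (ii): one must check that the operators $\Psi_k$ defined here via the explicit kernel formula \eqref{operators Psi_j} really do coincide with the abstract tensor-leg lifts of the single operator $\Psi_1$ used in \cite{bozejkospeicherMathAnn1994}, i.e., that $\Psi_k = \mathrm{id}^{\otimes(k-1)} \otimes \Psi_1 \otimes \mathrm{id}^{\otimes(n-k-1)}$ under the identification $\mathcal H^{\otimes n} \cong L^2(T^{(n)},\sigma^{\otimes n})$. This is immediate from comparing \eqref{tye654i} and \eqref{operators Psi_j}, but it is the hinge that lets us quote the theorem verbatim rather than redo its proof. (Alternatively, if one wanted a self-contained argument, the positivity can be proved directly by exhibiting $\mathcal P_n$ as built from the positive-definite ``$Q$-symmetrization'' inner product — essentially showing $\mathcal P_n = A_n^* A_n$ for a suitable operator, or using the combinatorial recursion $\mathcal P_n = (\text{something})\,\mathcal P_{n-1}(\text{something})$ on the Fock space — but invoking the cited theorem is cleanest.)

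I expect no serious obstacle here precisely because the statement is explicitly flagged as ``a special case of Theorem~1.1 in \cite{bozejkospeicherMathAnn1994}'': the whole content of the proof is the verification that our $\Psi_k$ fit the hypotheses (self-adjoint, braid, $\ge -\mathbb I$), all of which are either Lemma~\ref{ioho90u8y8} or a one-line consequence of $|Q|\le 1$. The only mild subtlety worth a sentence in the write-up is that $\sigma^{\otimes 2}(E)=0$ ensures $Q$ is well-defined a.e.\ and the formulas \eqref{operators Psi_j}, \eqref{R_pi(f)} are unambiguous on $T^{(n)}$, so that the identification with the abstract setup is legitimate $\sigma^{\otimes n}$-a.e.
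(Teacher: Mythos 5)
Your proposal is correct and matches the paper's treatment: the paper offers no independent proof, but simply establishes in Lemma~\ref{ioho90u8y8} that the $\Psi_k$ are self-adjoint contractions satisfying the braid relations (which is exactly your verification of the hypotheses) and then quotes Theorem~1.1 of \cite{bozejkospeicherMathAnn1994}. The identification of $\Psi_k$ with the tensor-leg lift of the single Yang--Baxter operator $\Psi$ from \eqref{tye654i} is indeed the only bookkeeping point, and it is immediate from \eqref{operators Psi_j}.
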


For any $f^{(n)},g^{(n)}\in \mathcal H^{\otimes n}$, we define
\begin{equation}\label{frte6u4e}
( f^{(n)},g^{(n)})_{\mathcal F_n(\mathcal H)}:=
( \mathcal P_n f^{(n)},g^{(n)})_{\mathcal H^{\otimes n}}\,.
\end{equation}
We consider the factor space
$$\mathcal F_n(\mathcal H):=\mathcal H^{\otimes n}\big/\big\{
f^{(n)}\in \mathcal H^{\otimes n}: ( f^{(n)},f^{(n)})_{\mathcal F_n(\mathcal H)}=0\big\},$$
and define a scalar product on $\mathcal F_n(\mathcal H)$ by \eqref{frte6u4e}.

Below, for a bounded linear operator $L$ in a Hilbert space $\mathfrak H$, we denote by $\Ker(L)$ and $\Ran(L)$ the kernel of $L$ and the range of $L$, respectively. Recall that $\Ker(L)$ is a closed linear subspace of $\mathfrak H$ and, if $L$ is self-adjoint,
$$\mathfrak H=\Ker(L)\oplus \overline{\Ran(L)},$$
where $ \overline{\Ran(L)}$ denotes the closure of the linear subspace $\Ran(L)$. The following lemma only uses the fact that $\mathcal P_n\ge0$.

\begin{lemma}\label{pn has no kernel}
 (i) We have
$$\big\{
f^{(n)}\in \mathcal H^{\otimes n}: ( f^{(n)},f^{(n)})_{\mathcal F_n(\mathcal H)}=0\big\} =\Ker (\mathcal P_n).$$

(ii) For each $f^{(n)}\in \overline{{\Ran}(\pn)}$, $f^{(n)}\ne0$,
$$( f^{(n)},f^{(n)})_{\mathcal F_n(\mathcal H)}>0.$$
\end{lemma}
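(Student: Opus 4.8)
The plan is to deduce both parts from the single hypothesis $\mathcal P_n\ge 0$ together with the fact that $\mathcal P_n$ is a bounded self-adjoint operator in the Hilbert space $\mathfrak H:=\mathcal H^{\otimes n}$, using the orthogonal decomposition $\mathfrak H=\Ker(\mathcal P_n)\oplus\overline{\Ran(\mathcal P_n)}$ recalled just above the lemma. The key elementary observation is that for a nonnegative self-adjoint operator $L$, the sesquilinear form $(f,g)\mapsto(Lf,g)$ has the property that $(Lf,f)=0$ if and only if $Lf=0$; this is the operator analogue of the fact that a positive semidefinite matrix has $x^{*}Ax=0\iff Ax=0$. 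I would prove this via the Cauchy--Schwarz inequality applied to the (genuine, positive semidefinite) form $\langle f,g\rangle_L:=(Lf,g)$, namely $|(Lf,g)|^2\le (Lf,f)\,(Lg,g)$ for all $g$; if $(Lf,f)=0$ this forces $(Lf,g)=0$ for every $g\in\mathfrak H$, hence $Lf=0$. (Alternatively one may use the positive square root $L^{1/2}$ and write $(Lf,f)=\|L^{1/2}f\|^2$, so $(Lf,f)=0\iff L^{1/2}f=0\iff Lf=0$; either route is routine.)

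For part (i), apply the above with $L=\mathcal P_n$. By definition $(f^{(n)},f^{(n)})_{\mathcal F_n(\mathcal H)}=(\mathcal P_n f^{(n)},f^{(n)})_{\mathcal H^{\otimes n}}$, so this quantity vanishes precisely when $\mathcal P_n f^{(n)}=0$, i.e.\ precisely when $f^{(n)}\in\Ker(\mathcal P_n)$. This gives the claimed set equality directly, with no further work.

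For part (ii), suppose $f^{(n)}\in\overline{\Ran(\mathcal P_n)}$ with $f^{(n)}\ne 0$, and assume for contradiction that $(f^{(n)},f^{(n)})_{\mathcal F_n(\mathcal H)}=0$. By part (i) (or by the square-root argument again) this forces $f^{(n)}\in\Ker(\mathcal P_n)$. But $\Ker(\mathcal P_n)$ and $\overline{\Ran(\mathcal P_n)}$ are orthogonal complements of each other in $\mathcal H^{\otimes n}$ (since $\mathcal P_n$ is bounded self-adjoint), so their intersection is $\{0\}$, contradicting $f^{(n)}\ne 0$. Hence $(f^{(n)},f^{(n)})_{\mathcal F_n(\mathcal H)}>0$, where nonnegativity of this number is guaranteed by $\mathcal P_n\ge 0$.

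I do not anticipate any genuine obstacle here: the statement is a formal consequence of positivity and self-adjointness, and the only mild care needed is the Cauchy--Schwarz (or square-root) step establishing that a nonnegative form cannot have an isotropic vector outside the kernel of the defining operator. The point worth emphasizing in the writeup is simply that all of this uses nothing about the specific combinatorial structure of $\mathcal P_n$ beyond $\mathcal P_n\ge 0$, which is exactly why the lemma is stated in that generality and placed before the finer analysis of $\Ker(\mathcal P_n)$ and $\Ran(\mathcal P_n)$.
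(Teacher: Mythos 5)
Your proposal is correct and follows essentially the same route as the paper: the paper proves part (i) by writing $(\mathcal P_n f^{(n)},f^{(n)})=\|\sqrt{\mathcal P_n}\,f^{(n)}\|^2$ and using $\Ker\sqrt{\mathcal P_n}\subset\Ker\mathcal P_n$ (your ``alternative'' square-root argument), and part (ii) exactly as you do, via $\overline{\Ran(\mathcal P_n)}\perp\Ker(\mathcal P_n)$. Your Cauchy--Schwarz variant for part (i) is an equally valid, equivalent formulation of the same elementary fact.
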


By Lemma \ref{pn has no kernel}, we can identify $\mathcal F_n(\mathcal H)$ with the set $\overline{{\Ran}(\pn)}$ equipped with scalar product \eqref{frte6u4e}.

The result below follows from Theorem~2 and Remark~4 in \cite{jorgensenproskurinsamoilenkoPacificJM}.

\begin{thm}[\cite{jorgensenproskurinsamoilenkoPacificJM}]\label{f7ir5i4i}
We have
\begin{equation}\label{kernel of Pn}
\Ker({\pn}) = \overline{\sum_{k=1}^{n-1} \Ker(\mathbf 1+\Psi_k)},
\end{equation}
i.e., the kernel of $\mathcal P_n$ is equal to the closure of the linear span of the subspaces\linebreak $ \Ker(\mathbf 1+\Psi_k)$, $k=1,\dots,n-1$.
\end{thm}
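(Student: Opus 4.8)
The plan is to prove the two inclusions in \eqref{kernel of Pn} separately. The inclusion "$\supseteq$" is elementary and uses only that $\mathcal P_n$ is self-adjoint and that the $\Psi_k$ satisfy \eqref{Yang-Baxter}: fixing $k$, one partitions $S_n$ into the right cosets of the subgroup $\{e,\pi_k\}$ and, in each coset, chooses the representative $\rho$ of smaller length, so that $|\rho\pi_k|=|\rho|+1$ and hence $\Psi_{\rho\pi_k}=\Psi_\rho\Psi_k$. Summing the two terms of each coset gives
\[
n!\,\mathcal P_n=\sum_\rho\bigl(\Psi_\rho+\Psi_{\rho\pi_k}\bigr)=\Bigl(\sum_\rho\Psi_\rho\Bigr)(\mathbf1+\Psi_k),
\]
so $\Ker(\mathbf1+\Psi_k)\subseteq\Ker(\mathcal P_n)$ for each $k$; since $\Ker(\mathcal P_n)$ is closed, the inclusion "$\supseteq$" follows.

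For the reverse inclusion, which is where $\mathcal P_n\ge0$ is essential, I would invoke the literature. By Lemma \ref{ioho90u8y8} the whole family $(\Psi_k)_k$ is generated by a single self-adjoint contraction $\Psi$ on $\mathcal H^{\otimes 2}$ satisfying the braid relation, so the data $(\mathcal H,\Psi)$ fit verbatim into the abstract setting of \cite{bozejkospeicherMathAnn1994,jorgensenproskurinsamoilenkoPacificJM}, and \eqref{kernel of Pn} is exactly Theorem~2 together with Remark~4 of \cite{jorgensenproskurinsamoilenkoPacificJM}. To argue from scratch one would use induction on $n$: the case $n=2$ is immediate since $\mathcal P_2=\tfrac12(\mathbf1+\Psi_1)$, and the inductive step rests on the parabolic factorization (standard; cf.\ \cite{bozejkospeicherMathAnn1994})
\[
n\,\mathcal P_n=(\mathcal P_{n-1}\otimes\mathbf1)\,R_n,\qquad
R_n:=\mathbf1+\Psi_{n-1}+\Psi_{n-1}\Psi_{n-2}+\dots+\Psi_{n-1}\Psi_{n-2}\dotsm\Psi_1,
\]
with $\mathcal P_{n-1}$ acting on the first $n-1$ variables. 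Then $\mathcal P_nf=0$ iff $R_nf\in\Ker(\mathcal P_{n-1}\otimes\mathbf1)$, and the induction hypothesis (plus the fact that $\Ker(B\otimes\mathbf1)$ is the completed tensor product of $\Ker(B)$ with $\mathcal H$) identifies $\Ker(\mathcal P_{n-1}\otimes\mathbf1)$ with $M:=\overline{\sum_{k=1}^{n-2}\Ker(\mathbf1+\Psi_k)}$. Hence $\Ker(\mathcal P_n)=R_n^{-1}(M)$, and it remains to show $R_n^{-1}(M)=\overline{M+\Ker(\mathbf1+\Psi_{n-1})}$.

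The "$\supseteq$" half of this last identity again follows from the first paragraph. I expect the "$\subseteq$" half to be the main obstacle: one must show that if $R_nf\in M$ while $f\perp M$ and $f\perp\Ker(\mathbf1+\Psi_{n-1})$, then $f=0$. Writing $R_n=\mathbf1+\Psi_{n-1}(R_{n-1}\otimes\mathbf1)$ and projecting onto $M^\perp$ turns this into the relation $f=-P_{M^\perp}\Psi_{n-1}(R_{n-1}\otimes\mathbf1)f$; ruling out a nonzero solution cannot be done by naive orthogonality bookkeeping (the subspaces $\Ker(\mathbf1+\Psi_k)$ need not be mutually orthogonal) and genuinely requires the braid relations together with an approximation/Neumann-series argument on the region $\{(s,t):|Q(s,t)|<1\}$, where $\mathbf1+\Psi_{n-1}$ is bounded below. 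This is precisely the technical core supplied by \cite{jorgensenproskurinsamoilenkoPacificJM}; the remaining ingredients — cosets, tensor-product identities, and passage to closures — are routine.
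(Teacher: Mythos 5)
The paper gives no proof of this theorem at all: it simply records that the statement follows from Theorem~2 and Remark~4 of the cited J{\o}rgensen--Proskurin--Samo\v{\i}lenko paper, which is exactly what you do for the substantive inclusion $\Ker(\mathcal P_n)\subseteq\overline{\sum_k\Ker(\mathbf 1+\Psi_k)}$, so your proposal takes the same route as the paper. Your coset argument for the elementary inclusion ``$\supseteq$'' is correct (and goes beyond what the paper writes down), and you correctly identify that the reverse inclusion is the technical core that neither you nor the paper proves from scratch.
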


We  will now give an explicit description of the space $\mathcal F_n(\mathcal H)=\overline{{\Ran}(\pn)}$. We denote
\begin{equation}\label{utfr7ro5t}
\Theta :=\big \{(s,t)\in {\td} : |Q(s,
t)|=1\big\},\quad \Theta':= T^{(2)}\setminus\Theta=\big\{(s,t)\in {\td} : |Q(s,
t)|<1\big\}.\end{equation}

\begin{thm}\label{main theorem for range of pn}
 The space $\mathcal F_n(\mathcal H)=\overline{{\Ran}(\pn)}$ is equal (as a set) to the subspace of $\hcm$
 consisting of all $f^{(n)}\in \hcm$ that are $Q$-quasisymmetric, i.e., formula \eqref{Q-symmetric functions} holds for each $k\in\{1,\dots,n-1\}$ and
for $\sigma^{\otimes n}$-a.a.\ $(t_1,\dots,t_n)\in T^{(n)}$ such that $|Q(t_k,t_{k+1})|=1$, i.e., for $\sigma^{\otimes n}$-a.a.\ $(t_1,\dots,t_n)\in T^{(n)}_k$, where
\begin{equation}\label{yree8wfw7i}
T^{(n)}_{k}:=\left\{(t_1, \dots , t_n)\in \tn: (t_k,t_{k+1})\in\Theta \right\}.\end{equation}

\end{thm}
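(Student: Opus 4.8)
The plan is to prove the set equality $\overline{\Ran(\pn)} = \bigcap_{k=1}^{n-1} \{f^{(n)} : f^{(n)} \text{ satisfies } \eqref{Q-symmetric functions} \text{ a.e.\ on } T^{(n)}_k\}$ by combining Theorem~\ref{f7ir5i4i} with an explicit description of each $\Ker(\mathbf 1 + \Psi_k)$ and its orthogonal complement. First I would observe that, since $\pn$ is self-adjoint, $\overline{\Ran(\pn)} = \Ker(\pn)^\perp$, and by Theorem~\ref{f7ir5i4i} this equals $\bigl(\overline{\sum_{k=1}^{n-1}\Ker(\mathbf 1 + \Psi_k)}\bigr)^\perp = \bigcap_{k=1}^{n-1}\Ker(\mathbf 1 + \Psi_k)^\perp$. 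So the problem reduces to showing, for each fixed $k$, that $\Ker(\mathbf 1 + \Psi_k)^\perp$ is precisely the set of $f^{(n)}\in\hcm$ satisfying \eqref{Q-symmetric functions} a.e.\ on $T^{(n)}_k$.

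The core of the argument is a pointwise/fibrewise analysis of $\Psi_k$. Since $\Psi_k$ acts only in the variables $t_k, t_{k+1}$, I would disintegrate $\hcm = L^2(T^{(n)},\sigma^{\otimes n})$ over the remaining $n-2$ coordinates and reduce to understanding the operator $\Psi$ of \eqref{tye654i} on $L^2(T^{(2)},\sigma^{\otimes 2})$, namely $(\Psi f)(s,t) = Q(s,t) f(t,s)$. On the set $\Theta'$ (where $|Q(s,t)|<1$) one checks that $\mathbf 1 + \Psi$ is injective with dense range on $L^2(\Theta')$: indeed $\|\Psi f\|_{L^2(\Theta')} \le \||Q|\,\|_\infty$-type bounds do not immediately suffice, so instead I would argue directly that if $(\mathbf 1+\Psi)f = 0$ on $\Theta'$ then $f(s,t) = -Q(s,t)f(t,s)$ and iterating gives $f(s,t) = |Q(s,t)|^2 f(s,t)$, forcing $f = 0$ there since $|Q|<1$; hence $\Ker(\mathbf 1+\Psi)\cap L^2(\Theta') = \{0\}$. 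On the set $\Theta$ (where $|Q(s,t)| = 1$), $\Psi$ restricted to $L^2(\Theta)$ is a self-adjoint \emph{involution} (since $\Psi^2 f(s,t) = Q(s,t)\overline{Q(s,t)}f(s,t) = f(s,t)$ using $|Q|=1$ and the Hermitian property $Q(t,s) = \overline{Q(s,t)}$), so $L^2(\Theta)$ splits orthogonally into the $(+1)$- and $(-1)$-eigenspaces of $\Psi$; the $(-1)$-eigenspace is exactly $\Ker(\mathbf 1+\Psi)\cap L^2(\Theta)$, and its orthogonal complement within $L^2(\Theta)$ is the $(+1)$-eigenspace $\{f : f(s,t) = Q(s,t)f(t,s)\ \text{a.e.\ on }\Theta\}$, which is precisely the quasisymmetry condition. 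Combining the two pieces, $\Ker(\mathbf 1+\Psi)^\perp$ in $L^2(T^{(2)})$ is all of $L^2(\Theta')$ plus the $(+1)$-eigenspace on $\Theta$, i.e.\ exactly the functions satisfying \eqref{Q-symmetric functions} a.e.\ on $\Theta$. Lifting this back via the disintegration gives the description of $\Ker(\mathbf 1+\Psi_k)^\perp$ in $\hcm$.

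Finally I would assemble the pieces: $f^{(n)}\in\overline{\Ran(\pn)}$ iff $f^{(n)}\perp\Ker(\mathbf 1+\Psi_k)$ for every $k$, iff \eqref{Q-symmetric functions} holds a.e.\ on $T^{(n)}_k$ for every $k$, which is the asserted $Q$-quasisymmetry. One should also note as part of the bookkeeping that the set $T^{(n)}_k$ of \eqref{yree8wfw7i} is exactly the lift of $\Theta$ to the $(t_k,t_{k+1})$-fibre, so the measure-theoretic statement matches up.

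I expect the main obstacle to be the measure-theoretic care in the disintegration step and in handling the set $\Theta'$: the operator $\mathbf 1 + \Psi_k$ need not have closed range, so one cannot simply say $\Ker(\mathbf 1+\Psi_k)^\perp = \Ran(\mathbf 1+\Psi_k)$; instead the clean route is the orthogonal-complement characterization via $\Ker$, together with the involution structure on $\Theta$ and the injectivity-with-trivial-kernel argument on $\Theta'$. A secondary technical point is justifying that the fibrewise eigenspace decomposition assembles measurably into a genuine orthogonal decomposition of $\hcm$ (e.g.\ by writing the relevant orthogonal projection as multiplication-type operators built from $\Psi_k$ and $\mathbf 1$ restricted to the measurable set $T^{(n)}_k$), but this is routine once the $n=2$ picture is in place.
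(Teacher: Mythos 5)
Your proposal is correct and follows essentially the same route as the paper: reduce via Theorem~\ref{f7ir5i4i} to $\overline{\Ran(\mathcal{P}_n)}=\bigcap_{k=1}^{n-1}\Ker(\mathbf 1+\Psi_k)^\perp$, split $\mathcal H^{\otimes n}$ according to whether $|Q(t_k,t_{k+1})|=1$ or $<1$, show the kernel of $\mathbf 1+\Psi_k$ is trivial on the second piece, and identify the orthogonal complement of the kernel on the first piece as the $(+1)$-eigenspace of the self-adjoint involution $\Psi_k$ (which the paper writes out explicitly as the decomposition $f=f_{k,+}+f_{k,-}$). The only real variation is that on $\{|Q|<1\}$ you obtain triviality of the kernel by the direct iteration $f=-\Psi_k f\Rightarrow f=|Q|^2 f$, whereas the paper slices that region into layers $\frac{l-1}{l}\le|Q|<\frac{l}{l+1}$ on which $\|\Psi_k\|<1$ makes $\mathbf 1+\Psi_k$ invertible; your version is, if anything, slightly more economical, and the fibrewise disintegration you worry about is not actually needed since all steps go through directly in $L^2(T^{(n)},\sigma^{\otimes n})$ using the indicator of $T^{(n)}_k$.
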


\subsection{Orthogonal projection onto $\overline{\Ran(\pn)}$.}

We will now describe the orthogonal projection $\mathbb P_n$ of $\mathcal H^{\otimes n}$ onto $\overline{\Ran(\pn)}=\mathcal F_n(\mathcal H)$.
For this purpose, we define a function
$$R(s,t):=\begin{cases}
Q(s,t),&\text{if }(s,t)\in\Theta,\\
0,&\text{if }(s,t)\in\Theta'.\end{cases}$$
Observe that
$$|R(s,t)|=\begin{cases}
1,&\text{if }(s,t)\in\Theta,\\
0,&\text{if }(s,t)\in\Theta',\end{cases}$$
and that the function $R$ is Hermitian. Hence, for each $\pi\in S_n$, similarly to the operator $\Psi_\pi:\mathcal H^{\otimes n}\to \mathcal H^{\otimes n}$ defined in subsec.~\ref{ydr6ur5ei} for the function $Q(s,t)$, we may define an operator $\Phi_\pi:\mathcal H^{\otimes n}\to \mathcal H^{\otimes n}$ for the function $R(s,t)$. By Proposition~\ref{R_pi(f) formula on tensors}, we get
\begin{equation}\label{rdu6ed6e}
(\Phi_\pi f^{(n)})(\mathbf t^{(n)})=R_{\pi^{-1}}(\mathbf t^{(n)})f^{(n)}(\mathbf t^{(n)}_\pi),\end{equation}
where
\begin{equation}\label{utf7ir}
R_{\pi}(\mathbf t^{(n)}):=\!\!\! \prod_{\substack{ 1\le i < j \le n\\[1mm]
\pi(i) > \pi(j)}}
\!\!\! R(t_i, t_j),\quad \mathbf t^{(n)}\in T^{(n)}.\end{equation}

Let $\pi\in S_n$ and let $\mathbf t^{(n)}\in T^{(n)}$  be such that, for some $1\le i<j\le n$, we have $\pi(i)>\pi(j)$ and $(t_i,t_j)\in\Theta'$. Then, it follows from  \eqref{utf7ir} that $R_{\pi}(\mathbf t^{(n)})=0$. Otherwise, i.e., if such $i$ and $j$ do not exist, we get $|R_{\pi}(\mathbf t^{(n)})|=1$.

Given $\mathbf t^{(n)}\in \tn$, we define a splitting
$$S_n=S_n^1(\mathbf t^{(n)})\sqcup S_n^0(\mathbf t^{(n)})$$
 of the set $S_n$ into two disjoint subsets:
\begin{align}
S_n^1(\mathbf t^{(n)}):&= \{\pi\in S_n: |{R}_{\pi^{-1}}(\mathbf t^{(n)})|=1\},\notag\\
S_n^0(\mathbf t^{(n)}):&=
\{\pi\in S_n: |{R}_{\pi^{-1}}(\mathbf t^{(n)})|=0\}.\label{uyr75er5}
\end{align}

Let $c_n(\mathbf t^{(n)}):=|S_n^1(\mathbf t^{(n)})|$ denote the cardinality.
We define an operator $\mathbb P_n:\hcm \to \hcm$ by setting, for each $f^{(n)}\in\hcm$,
\begin{align}
(\mathbb P_n f^{(n)})(\mathbf t^{(n)})
:&= \frac{1}{c_n(\mathbf t^{(n)})} \sum_{\pi\in S_n^1(\mathbf t^{(n)})}
(\Phi_{\pi}f^{(n)})(\mathbf t^{(n)})\notag\\
& = \frac{1}{c_n(\mathbf t^{(n)})} \sum_{\pi\in S_n^1(\mathbf t^{(n)})}
{R}_{\pi^{-1}}(\mathbf t^{(n)})f^{(n)}(\mathbf t^{(n)}_{\pi}).
\label{R-symmetrization}\end{align}

\begin{thm}\label{orthogonal projection}
For each $n\ge2$, the operator
$\mathbb P_n$ is the orthogonal projection of $ \hcm$ onto $ \overline{\Ran(\pn)}=\mathcal F_n(\mathcal H)$.
\end{thm}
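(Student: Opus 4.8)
The plan is to verify directly that $\mathbb P_n$ is idempotent, self-adjoint, and has range exactly $\overline{\Ran(\pn)}$, using the pointwise description from Theorem~\ref{main theorem for range of pn} and the combinatorial structure of the sets $S_n^1(\mathbf t^{(n)})$. First I would establish that $\Ran(\mathbb P_n)\subseteq \mathcal F_n(\mathcal H)$: given $f^{(n)}\in\hcm$, I must check that $\mathbb P_n f^{(n)}$ is $Q$-quasisymmetric, i.e., that for a.a.\ $\mathbf t^{(n)}\in T^{(n)}_k$ we have $(\mathbb P_n f^{(n)})(\mathbf t^{(n)})=Q(t_k,t_{k+1})(\mathbb P_n f^{(n)})(\mathbf t^{(n)}_{\pi_k})$. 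The key observation is that the transposition $\pi_k$ gives a bijection $\pi\mapsto \pi\pi_k$ of $S_n$, and when $(t_k,t_{k+1})\in\Theta$ this bijection \emph{preserves} the splitting into $S_n^1$ and $S_n^0$ relative to the two points $\mathbf t^{(n)}$ and $\mathbf t^{(n)}_{\pi_k}$ (here one uses that $|R(t_k,t_{k+1})|=1$ so that passing a crossing at positions $k,k+1$ never kills the product $R_{\pi^{-1}}$), and moreover $R_{(\pi\pi_k)^{-1}}(\mathbf t^{(n)}_{\pi_k})$ and $R_{\pi^{-1}}(\mathbf t^{(n)})$ differ exactly by the factor $Q(t_k,t_{k+1})$ or its conjugate. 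Tracking these factors through \eqref{R-symmetrization} and using $c_n(\mathbf t^{(n)})=c_n(\mathbf t^{(n)}_{\pi_k})$ yields the quasisymmetry.

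Next I would show $\mathbb P_n$ acts as the identity on $\mathcal F_n(\mathcal H)$: if $f^{(n)}$ is $Q$-quasisymmetric, then iterating \eqref{Q-symmetric functions} along a reduced word shows that for every $\pi\in S_n^1(\mathbf t^{(n)})$ one has $R_{\pi^{-1}}(\mathbf t^{(n)})f^{(n)}(\mathbf t^{(n)}_\pi)=f^{(n)}(\mathbf t^{(n)})$ — the point being that every adjacent transposition occurring in such a reduced word acts at a crossing $(t_i,t_j)$ lying in $\Theta$ (otherwise $\pi\notin S_n^1$), so the $Q$-factors picked up exactly cancel the $R$-factors. Hence the sum in \eqref{R-symmetrization} collapses to $c_n(\mathbf t^{(n)})\,f^{(n)}(\mathbf t^{(n)})$ and $\mathbb P_n f^{(n)}=f^{(n)}$. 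Combined with the previous paragraph this gives $\mathbb P_n^2=\mathbb P_n$ and $\Ran(\mathbb P_n)=\mathcal F_n(\mathcal H)$ as sets; that $\mathbb P_n$ is bounded follows since $|R_{\pi^{-1}}|\le1$ and each summand is an isometry composed with multiplication by a unimodular-or-zero function, averaged.

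It then remains to prove self-adjointness, equivalently that $\Ker(\mathbb P_n)\perp\Ran(\mathbb P_n)$, or directly that $(\mathbb P_n f^{(n)},g^{(n)})_{\hcm}=(f^{(n)},\mathbb P_n g^{(n)})_{\hcm}$. Writing both sides as integrals over $T^{(n)}$ and changing variables $\mathbf t^{(n)}\mapsto\mathbf t^{(n)}_{\pi^{-1}}$ in each summand, the identity reduces to the pointwise symmetry $R_{\pi^{-1}}(\mathbf t^{(n)})=\overline{R_{\pi}(\mathbf t^{(n)}_{\pi^{-1}})}$ together with the invariance $c_n(\mathbf t^{(n)})=c_n(\mathbf t^{(n)}_{\pi^{-1}})$ for $\pi\in S_n^1(\mathbf t^{(n)})$ and the compatible relabeling of the index set $S_n^1(\mathbf t^{(n)}_{\pi^{-1}})$; these follow from the Hermitian property of $R$ and the fact that $R$-crossings are intrinsic to the unordered pair. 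I expect the main obstacle to be the bookkeeping in the first and third steps: carefully checking that the map $\pi\mapsto\pi\pi_k$ (resp.\ $\pi\mapsto\pi^{-1}\,\cdot$) really does restrict to a bijection $S_n^1(\mathbf t^{(n)})\to S_n^1(\mathbf t^{(n)}_{\pi_k})$ and that all the $Q$- and $R$-phase factors combine correctly — this is where Proposition~\ref{R_pi(f) formula on tensors}, the Hermitian symmetry, and the constancy of $c_n$ along $\Theta$-transpositions must be invoked in just the right way. Everything else is routine measure-theoretic manipulation.
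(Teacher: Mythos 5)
Your plan is correct and rests on exactly the same combinatorial ingredients as the paper's proof --- the compatibility of the sets $S_n^1(\mathbf t^{(n)})$ under composition, the invariance $c_n(\mathbf t^{(n)})=c_n(\mathbf t^{(n)}_\pi)$, and the conjugation identity \eqref{gt87t5} for self-adjointness (these are the paper's Lemma~\ref{relations L1}) --- but you organize the logic differently, and the reorganization is a genuine improvement in economy. The paper proves self-adjointness first, then proves $\mathbb P_n^2=\mathbb P_n$ directly via a multiplicativity lemma $\Phi_\pi\Phi_\nu=\Phi_{\pi\nu}$ on the relevant index sets (Lemma~\ref{tydf6ir5}, established through an auxiliary unimodular function $G$ whose operators $\Gamma_\pi$ form a unitary representation of $S_n$), and only then identifies $\Ran(\mathbb P_n)$ with $\mathcal F_n(\mathcal H)$, omitting the inclusion $\Ran(\mathbb P_n)\subseteq\mathcal F_n(\mathcal H)$ as ``similar.'' You instead make that inclusion primary (quasisymmetry of $\mathbb P_n f^{(n)}$, which needs the phase identity only for a single adjacent transposition, i.e.\ essentially Lemma~\ref{higft7qwri} transplanted to $R$), show $\mathbb P_n$ is the identity on $\mathcal F_n(\mathcal H)$ by telescoping \eqref{Q-symmetric functions} along a reduced word, and then get $\mathbb P_n^2=\mathbb P_n$ for free; this bypasses the full multiplicativity lemma and the representation-theoretic detour. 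Two small index slips to repair in the write-up: the bijection implementing quasisymmetry is \emph{left} multiplication $\varphi\mapsto\pi_k\varphi$ (since $(\mathbf t^{(n)}_{\pi_k})_\nu=\mathbf t^{(n)}_{\pi_k\nu}$), not $\pi\mapsto\pi\pi_k$; and the conjugation identity is $R_{\pi^{-1}}(\mathbf t^{(n)})=\overline{R_\pi(\mathbf t^{(n)}_{\pi})}$, not with $\mathbf t^{(n)}_{\pi^{-1}}$ on the right. Neither affects the viability of the argument.
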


The  corollary below is a straightforward consequence of Theorem~\ref{orthogonal projection}.

\begin{cor}\label{dryde6644e36}
For each $n\ge2$,
$$\mathbb P_n\mathcal P_n=\mathcal P_n\mathbb P_n=\mathcal P_n.$$\end{cor}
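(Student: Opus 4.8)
The plan is to deduce everything from the two facts already established: by Theorem~\ref{orthogonal projection}, $\mathbb P_n$ is the orthogonal projection onto $\overline{\Ran(\pn)}$, and $\pn$ is a self-adjoint contraction with $\pn\ge 0$. First I would recall the general operator-theoretic fact that for any bounded self-adjoint operator $L$ on a Hilbert space $\mathfrak H$, one has $\mathfrak H=\Ker(L)\oplus\overline{\Ran(L)}$, and that the orthogonal projection $P$ onto $\overline{\Ran(L)}$ satisfies $PL=LP=L$. Indeed, $\Ran(L)\subseteq\overline{\Ran(L)}$ gives $PLf=Lf$ for all $f$, hence $PL=L$; taking adjoints (using $L^*=L$ and $P^*=P$) yields $LP=L$ as well.

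Next I would apply this with $L=\pn$ and $P=\mathbb P_n$. Since $\pn$ is self-adjoint (stated right after \eqref{Q-symmetrization}) and $\mathbb P_n$ is precisely the orthogonal projection onto $\overline{\Ran(\pn)}$ by Theorem~\ref{orthogonal projection}, the displayed identities $\mathbb P_n\pn=\pn\mathbb P_n=\pn$ follow immediately. I would write out the one-line verification explicitly: for every $f^{(n)}\in\hcm$ we have $\pn f^{(n)}\in\Ran(\pn)\subseteq\overline{\Ran(\pn)}$, and $\mathbb P_n$ acts as the identity on $\overline{\Ran(\pn)}$, so $\mathbb P_n\pn f^{(n)}=\pn f^{(n)}$; thus $\mathbb P_n\pn=\pn$. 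Then $\pn\mathbb P_n=(\mathbb P_n^*\pn^*)^*=(\mathbb P_n\pn)^*=\pn^*=\pn$.

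There is essentially no obstacle here: the corollary is a formal consequence of Theorem~\ref{orthogonal projection} together with self-adjointness of $\pn$, and the only thing to be careful about is invoking $\pn=\pn^*$ and $\mathbb P_n=\mathbb P_n^*$ correctly when passing to adjoints. I would keep the proof to two or three sentences.
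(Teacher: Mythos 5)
Your argument is correct and is exactly the "straightforward consequence of Theorem~\ref{orthogonal projection}" that the paper invokes without spelling out: $\mathbb P_n$ fixes $\Ran(\pn)\subseteq\overline{\Ran(\pn)}$, giving $\mathbb P_n\pn=\pn$, and the other identity follows by taking adjoints using $\pn^*=\pn$ and $\mathbb P_n^*=\mathbb P_n$. Nothing further is needed.
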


We will also need the following result about the operators $\mathbb P_n$, which  follows from Theorem~\ref{orthogonal projection} and its proof.

\begin{cor}\label{uyt67ct}
For each $n\ge2$ and $k\in\{1,\dots,n-1\}$, we have
\begin{equation}\label{R_{n-k}}
\mathbb P_n=\mathbb P_n(\mathbb P_k\otimes\mathbb P_{n-k}).
\end{equation}
Here we denote by $\mathbb P_1:=\mathbf 1$ the identity operator in $\mathcal H$.
\end{cor}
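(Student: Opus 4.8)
The claim is that $\mathbb P_n=\mathbb P_n(\mathbb P_k\otimes\mathbb P_{n-k})$. Since $\mathbb P_n$ and $\mathbb P_k\otimes\mathbb P_{n-k}$ are orthogonal projections (the latter is a tensor product of orthogonal projections by Theorem~\ref{orthogonal projection} applied in sizes $k$ and $n-k$, with $\mathbb P_1=\mathbf 1$), the identity \eqref{R_{n-k}} is equivalent to the range inclusion $\overline{\Ran(\pn)}=\Ran(\mathbb P_n)\subseteq\Ran(\mathbb P_k\otimes\mathbb P_{n-k})=\overline{\Ran(\mathcal P_k)}\otimes\overline{\Ran(\mathcal P_{n-k})}$; equivalently, $\Ker(\mathbb P_k\otimes\mathbb P_{n-k})\subseteq\Ker(\mathbb P_n)$. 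So the first step is to reduce the operator identity to this subspace statement. To do this cleanly I would record that for orthogonal projections $P,P'$ one has $P=PP'\iff\Ran P\subseteq\Ran P'$, and that $\Ran(\mathbb P_k\otimes\mathbb P_{n-k})$ is the closed span of $f\otimes g$ with $f\in\mathcal F_k(\mathcal H)$, $g\in\mathcal F_{n-k}(\mathcal H)$.

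\textbf{The core inclusion.} The heart is therefore: every $Q$-quasisymmetric $f^{(n)}\in\mathcal F_n(\mathcal H)$ lies in $\mathcal F_k(\mathcal H)\otimes\mathcal F_{n-k}(\mathcal H)$. I would use the explicit description from Theorem~\ref{main theorem for range of pn}: $f^{(n)}\in\mathcal F_n(\mathcal H)$ iff, for every $j\in\{1,\dots,n-1\}$, relation \eqref{Q-symmetric functions} holds $\sigma^{\otimes n}$-a.e.\ on $T^{(n)}_j$. Now $\mathcal F_k(\mathcal H)\otimes\mathcal F_{n-k}(\mathcal H)$, viewed inside $L^2(T^{(n)})$, consists exactly of those $h\in\mathcal H^{\otimes n}$ which, for a.a.\ fixed $(t_1,\dots,t_k)$, as a function of $(t_{k+1},\dots,t_n)$ satisfy the quasisymmetry relations \eqref{Q-symmetric functions} for $j\in\{k+1,\dots,n-1\}$, and symmetrically for a.a.\ fixed $(t_{k+1},\dots,t_n)$ the quasisymmetry relations for $j\in\{1,\dots,k-1\}$ in the first $k$ variables. (This ``slicewise'' characterization of a tensor product of closed subspaces of $L^2$-spaces is standard and I would state it as a preliminary observation, or cite the analogous description in Theorem~\ref{main theorem for range of pn}.) But a $Q$-quasisymmetric $f^{(n)}$ satisfies \eqref{Q-symmetric functions} for \emph{all} $j\in\{1,\dots,n-1\}$, in particular for $j\in\{1,\dots,k-1\}\cup\{k+1,\dots,n-1\}$, and the relation for a given $j$ with the remaining variables held fixed is precisely a slicewise relation of the required type. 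Hence $f^{(n)}$ satisfies all the slice conditions defining $\mathcal F_k(\mathcal H)\otimes\mathcal F_{n-k}(\mathcal H)$, so $f^{(n)}\in\mathcal F_k(\mathcal H)\otimes\mathcal F_{n-k}(\mathcal H)$. This gives $\Ran(\mathbb P_n)\subseteq\Ran(\mathbb P_k\otimes\mathbb P_{n-k})$ and therefore \eqref{R_{n-k}}.

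\textbf{Alternative via the proof of Theorem~\ref{orthogonal projection}.} Since the statement says it follows ``from Theorem~\ref{orthogonal projection} and its proof,'' there is presumably a more computational route through the explicit formula \eqref{R-symmetrization}. One can factor permutations: every $\pi\in S_n$ has a unique ``shuffle'' decomposition relative to the splitting $\{1,\dots,k\}$, $\{k+1,\dots,n\}$, and one checks that the sets $S_n^1(\mathbf t^{(n)})$, and the weights $R_{\pi^{-1}}(\mathbf t^{(n)})$, behave multiplicatively under this factorization, so that applying $\mathbb P_k\otimes\mathbb P_{n-k}$ first merely reorganizes the average in \eqref{R-symmetrization} without changing its value. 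I would mention this as the route matching the phrasing, but carry out the cleaner range-inclusion argument above as the actual proof, since it sidesteps the bookkeeping with $c_n(\mathbf t^{(n)})$ and the shuffle weights.

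\textbf{Main obstacle.} The only real subtlety is the slicewise characterization of $\mathcal F_k(\mathcal H)\otimes\mathcal F_{n-k}(\mathcal H)$ as a subspace of $L^2(T^{(n)})$: one must justify that a closed subspace defined by a.e.\ functional equations is closed under the tensor operation in the expected way, and handle the measure-zero sets (the sets $E$, $T^n\setminus T^{(n)}$, and the null sets on which the quasisymmetry relations may fail) with Fubini so that ``slicewise a.e.'' matches ``a.e.\ on $T^{(n)}$''. Everything else is formal manipulation of orthogonal projections.
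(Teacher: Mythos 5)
Your proof is correct, but it follows a genuinely different route from the paper's. The paper argues combinatorially at the level of formula \eqref{R-symmetrization}: it first proves the one\nobreakdash-sided identities $\mathbb P_{n+1}(\mathbb P_n\otimes\mathbf 1)=\mathbb P_{n+1}$ and $\mathbb P_{n+1}(\mathbf 1\otimes\mathbb P_n)=\mathbb P_{n+1}$ by factoring permutations as $\pi(\nu\otimes\operatorname{id})$ and establishing the counting identity $\sum_{\pi\in S^1_{n+1,i}(\mathbf t^{(n+1)})}c_n(t_{\pi(1)},\dots,t_{\pi(n)})^{-1}=1$, then extends to $\mathbb P_{n+k}(\mathbb P_n\otimes\mathbf 1_k)=\mathbb P_{n+k}$ by induction on $k$, and finally composes the two one\nobreakdash-sided versions. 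You instead reduce \eqref{R_{n-k}} to the range inclusion $\Ran(\mathbb P_n)\subseteq\Ran(\mathbb P_k\otimes\mathbb P_{n-k})$ and deduce it from the quasisymmetry characterization of $\mathcal F_m(\mathcal H)$ in Theorem~\ref{main theorem for range of pn}: the range of $\mathbb P_k\otimes\mathbb P_{n-k}$ is cut out by the quasisymmetry conditions at all adjacent positions $j\ne k$, a subset of the conditions defining $\mathcal F_n(\mathcal H)$, so more conditions give a smaller subspace. Your argument is shorter and conceptually more transparent, at the price of relying on Theorems~\ref{main theorem for range of pn} and~\ref{orthogonal projection} (both available at this point) and of the one technical point you correctly flag: the slicewise description of $\Ran(\mathbb P_k\otimes\mathbb P_{n-k})$. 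That point is handled most cleanly not by an abstract statement about tensor products of closed subspaces but by noting that $\mathbb P_k\otimes\mathbf 1_{n-k}$ and $\mathbf 1_k\otimes\mathbb P_{n-k}$ are commuting orthogonal projections, so their product projects onto the intersection of their ranges, and each factor's range is identified slicewise (via Fubini and Theorem~\ref{main theorem for range of pn} in sizes $k$ and $n-k$) with the functions satisfying \eqref{Q-symmetric functions} for $j\in\{1,\dots,k-1\}$, respectively $j\in\{k+1,\dots,n-1\}$; with that spelled out, the proof is complete. The paper's computational route buys reusable shuffle/counting machinery but is heavier in bookkeeping; yours isolates the structural reason the corollary holds.
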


\begin{rem} For $f^{(n)}\in\mathcal F_n(\mathcal H)$
and $g^{(m)}\in\mathcal F_m(\mathcal H)$, we may define a {\it $Q$-quasisymmetric tensor product} of $f^{(n)}$ and $g^{(m)}$ by
$$f^{(n)}\circledast g^{(m)}:=\mathbb P_{n+m}(f^{(n)}\otimes g^{(m)}).$$
Then Corollary \ref{uyt67ct} implies that the $Q$-quasisymmetric tensor product $\circledast$ is associative.
\end{rem}

\subsection{Creation and annihilation operators and their $Q$-commutation relations}

Recall that we have defined complex Hilbert spaces $\mathcal F_n(\mathcal H)$ for $n\ge2$. Let also $\mathcal F_1(\mathcal H):=\mathcal H$ and  $\mathcal F_0(\mathcal H):=\mathbb C$.
We define a $Q$-deformed Fock space to be the Hilbert space
$$
\Fhc := \bigoplus_{n=0}^{\infty} {\Fnhc}\, n!\,.
$$
Thus, every $ f\in \Fhc$ is represented as $f=(f^{(n)})_{n=0}^{\infty}$, where $f^{(n)}\in \Fnhc$, and the norm of $f$ is given by
\[
\|f\|_{\Fhc}^2:=\sum_{n=0}^{\infty}\|f^{(n)}\|_{\mathcal F_n(\mathcal H)}^2\,n!\,.
\]
The vector $\Omega:=(1,0,0,\dots)$ is called the vacuum.

Let $\Fhcfin \subset \Fhc$ be the subspace consisting of all finite sequences of the form $f=(f^{(0)}, f^{(1)}, \dots , f^{(k)}, 0, 0, \dots)$ for some $k\in\N$. The subspace $\Fhcfin$ is evidently dense in  $\mathcal F(\mathcal H)$.

For each $h\in\mathcal H$, we define a creation operator
$a^+(h): \Fhcfin\to\Fhcfin$ by setting
\begin{equation}\label{uf7re7ic}
a^+(h)\Omega:=h,\qquad a^+(h)f^{(n)}:=\mathbb P_{n+1}(h\otimes f^{(n)}),\quad f^{(n)}\in\mathcal F_n(\mathcal H),\ n\in\mathbb N.\end{equation}
The domain of the adjoint operator of $a^+(h)$ in $\mathcal F(\mathcal H)$ contains $\Fhcfin$, and furthermore the annihilation operator $a^-(h):=(a^+(h))^*\restriction \Fhcfin$ also maps $\Fhcfin$ into itself.

The following proposition gives an explicit form of the action of the annihilation operator.

\begin{prop}\label{gdrd6i} For each $h\in\mathcal H$, we have $a^-(h)\Omega=0$, $a^-(h)g=(h,g)_{\mathcal H}$ for $g\in\mathcal H$, and
\begin{align}
&(a^-(h)f^{(n)})(t_1,\dots,t_{n-1})\notag\\
&\quad=\sum_{k=1}^n \mathbb P_{n-1}\left[\int_T \overline{h(s)}\left(\prod_{i=1}^{k-1}Q(s,t_i)\right)f^{(n)}(t_1,\dots,t_{k-1},s,t_k,\dots,t_{n-1})\,\sigma(ds)\right]
\label{d6ueu6eu6}\end{align}
for any $f^{(n)}\in\mathcal F_n(\mathcal H)$, $n\ge2$.
In formula \eqref{d6ueu6eu6}, the operator $\mathbb P_{n-1}$ acts on the function of $t_1,\dots,t_{n-1}$ variables. Furthermore, for any
$g^{(n)}\in\hcm$, $n\ge2$,
\begin{align}
&\big(a^-(h)\mathbb P_n g^{(n)}\big)(t_1,\dots,t_{n-1})\notag\\
&\quad=\sum_{k=1}^n \mathbb P_{n-1}\left[\int_T \overline{h(s)}\left(\prod_{i=1}^{k-1}Q(s,t_i)\right)g^{(n)}(t_1,\dots,t_{k-1},s,t_k,\dots,t_{n-1})\,\sigma(ds)\right].\label{ty456ew}
\end{align}
\end{prop}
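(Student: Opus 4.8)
The plan is to compute the adjoint of the creation operator directly on the dense domain $\Fhcfin$, using the defining scalar product \eqref{frte6u4e} on each $\mathcal F_n(\mathcal H)$ and the fact (Corollary~\ref{dryde6644e36}) that $\mathbb P_n$ is the orthogonal projection onto $\mathcal F_n(\mathcal H)$ which satisfies $\mathbb P_n\mathcal P_n=\mathcal P_n\mathbb P_n=\mathcal P_n$. First I would settle the low-degree cases: $a^-(h)\Omega=0$ is immediate since $(a^+(h)\Omega,g^{(1)})=(h,g^{(1)})_{\mathcal H}$ shows nothing maps into $\mathcal F_0$ from higher levels, and $a^-(h)g=(h,g)_{\mathcal H}$ follows from $(a^+(h)1,g)_{\mathcal H}=(h,g)_{\mathcal H}$ for scalars, i.e.\ testing $a^+(h):\mathcal F_0\to\mathcal F_1$ against $g\in\mathcal H$.

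For the main formula \eqref{d6ueu6eu6}, fix $f^{(n)}\in\mathcal F_n(\mathcal H)$ and $g^{(n-1)}\in\mathcal F_{n-1}(\mathcal H)$, and expand
$$(a^-(h)f^{(n)},g^{(n-1)})_{\mathcal F_{n-1}(\mathcal H)}=(f^{(n)},a^+(h)g^{(n-1)})_{\mathcal F_n(\mathcal H)}=(\mathcal P_n f^{(n)},\mathbb P_n(h\otimes g^{(n-1)}))_{\mathcal H^{\otimes n}}.$$
Since $f^{(n)}\in\mathcal F_n(\mathcal H)=\overline{\Ran(\pn)}$, we have $\mathbb P_n f^{(n)}=f^{(n)}$, and using self-adjointness of $\mathbb P_n$ together with $\mathbb P_n\mathcal P_n=\mathcal P_n$ one rewrites this as $(\mathcal P_n f^{(n)},h\otimes g^{(n-1)})_{\mathcal H^{\otimes n}}$. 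Now I would use the explicit form of $\mathcal P_n=\frac1{n!}\sum_{\pi\in S_n}\Psi_\pi$ from \eqref{Q-symmetrization}, \eqref{R_pi(f)}, \eqref{R_pi_wzor}; the key combinatorial step is to organize the sum over $\pi\in S_n$ according to the position $k=\pi^{-1}(1)$ (equivalently, to which slot the first variable, paired with $h$, is sent). Because $f^{(n)}$ is already $Q$-quasisymmetric and lies in $\Ran(\pn)$, the averaging collapses: the sum over all $\pi$ with $\pi^{-1}(1)=k$ produces, after integrating the $h$-variable against the first coordinate, exactly the single term with the phase factor $\prod_{i=1}^{k-1}Q(s,t_i)$ coming from moving the $s$-variable past $t_1,\dots,t_{k-1}$ as in \eqref{R_pi_wzor}, while the remaining symmetrization over the other $n-1$ variables is absorbed by the fact that the result is tested against $g^{(n-1)}\in\mathcal F_{n-1}(\mathcal H)$, i.e.\ one may insert $\mathbb P_{n-1}$ for free on the left entry. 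Summing over $k=1,\dots,n$ gives \eqref{d6ueu6eu6}. I expect this bookkeeping — tracking exactly how the phase $\prod_{i<j,\pi(i)>\pi(j)}Q(t_i,t_j)$ factorizes when one singles out the orbit of the index $1$, and verifying that the residual sum is precisely $(n-1)!$ times a $\mathcal P_{n-1}$-type average that can be replaced by $\mathbb P_{n-1}$ after pairing with $g^{(n-1)}$ — to be the main obstacle; it is essentially a recursive/factorization identity for the $Q_\pi$ cocycle under the coset decomposition $S_n=\bigsqcup_k\{\pi:\pi^{-1}(1)=k\}$.

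Finally, \eqref{ty456ew} follows from \eqref{d6ueu6eu6} applied to $f^{(n)}:=\mathbb P_n g^{(n)}\in\mathcal F_n(\mathcal H)$, provided one checks that the right-hand side of \eqref{d6ueu6eu6} does not see the difference between $g^{(n)}$ and $\mathbb P_n g^{(n)}$ in the integrand, i.e.\ that
$$\sum_{k=1}^n \mathbb P_{n-1}\!\left[\int_T \overline{h(s)}\Big(\prod_{i=1}^{k-1}Q(s,t_i)\Big)(\mathbb P_n g^{(n)})(t_1,\dots,t_{k-1},s,t_k,\dots,t_{n-1})\,\sigma(ds)\right]$$
equals the same expression with $g^{(n)}$ in place of $\mathbb P_n g^{(n)}$. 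This is a direct consequence of the structure of $\mathbb P_n$ in \eqref{R-symmetrization} combined with $\mathbb P_n=\mathbb P_n(\mathbb P_1\otimes\mathbb P_{n-1})$ from Corollary~\ref{uyt67ct}: the $k$-fold sum with the phase factors is exactly the operation that reconstructs the $\mathbb P_n$-average from its first-slot decomposition, so applying it to $g^{(n)}$ or to $\mathbb P_n g^{(n)}$ yields the same vector in $\mathcal F_{n-1}(\mathcal H)$. I would present this as a short lemma extracted from the proof of Theorem~\ref{orthogonal projection}, and then \eqref{ty456ew} is immediate.
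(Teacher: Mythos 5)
Your strategy --- realize $a^-(h)$ as the adjoint of $a^+(h)$ through the scalar product \eqref{frte6u4e}, and factor the sum over $S_n$ in $\mathcal P_n$ according to the coset decomposition $S_n=\bigsqcup_k\{\pi:\pi^{-1}(1)=k\}$ --- is essentially the paper's proof. The ``recursive/factorization identity for the $Q_\pi$ cocycle'' that you single out as the main obstacle is exactly the identity $(n+1)\mathcal P_{n+1}=(\mathbf 1\otimes\mathcal P_n)\mathcal R_{n+1}$ with $\mathcal R_{n+1}=\mathbf 1+\Psi_1+\Psi_1\Psi_2+\dots+\Psi_1\dotsm\Psi_n$ (Lemma~\ref{tyre6i4}), which the paper imports from Bo\.zejko--Speicher rather than re-deriving. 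The paper also runs the argument in the opposite order: it first obtains the operator identity $a^-(h)\mathbb P_n g^{(n)}=\mathbb P_{n-1}A^-(h)\mathcal R_n g^{(n)}$ for \emph{arbitrary} $g^{(n)}\in\mathcal H^{\otimes n}$, so that \eqref{ty456ew} and \eqref{d6ueu6eu6} drop out simultaneously, whereas you prove \eqref{d6ueu6eu6} first and then need the additional claim $\mathbb P_{n-1}A^-(h)\mathcal R_n\mathbb P_n=\mathbb P_{n-1}A^-(h)\mathcal R_n$; that claim is true but not ``immediate'' from Corollary~\ref{uyt67ct} --- it needs its own adjoint computation (pair against $\Ran(\mathcal P_{n-1})$ and use $\mathcal R_n^*(\mathbf 1\otimes\mathcal P_{n-1})=n\mathcal P_n$ together with $\mathcal P_n\mathbb P_n=\mathcal P_n$).

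Two concrete slips should be repaired. First, your opening display ignores the weights $n!$ in $\|f\|^2_{\mathcal F(\mathcal H)}=\sum_n\|f^{(n)}\|^2_{\mathcal F_n(\mathcal H)}\,n!$; since the adjoint is taken in $\mathcal F(\mathcal H)$, the correct relation is $(a^-(h)f^{(n)},g^{(n-1)})_{\mathcal F_{n-1}(\mathcal H)}\,(n-1)!=(f^{(n)},a^+(h)g^{(n-1)})_{\mathcal F_n(\mathcal H)}\,n!$. The resulting factor $n$ is precisely what cancels the $1/n$ discrepancy between $\mathcal P_n$ and $(\mathbf 1\otimes\mathcal P_{n-1})\mathcal R_n$; as written, your computation would produce \eqref{d6ueu6eu6} with a spurious prefactor $\frac1n$. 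Second, the coset sum does not collapse ``because $f^{(n)}$ is $Q$-quasisymmetric'': on the region where $|Q(s,t)|<1$ the operators $\Psi_\pi$ do not act as the identity on $\mathcal F_n(\mathcal H)$, so the vectors $\Psi_\pi f^{(n)}$ are genuinely distinct. The collapse is an operator identity valid on all of $\mathcal H^{\otimes n}$, and what makes it work is length-additivity of reduced words under the decomposition $\pi=\sigma\,(\pi_1\dotsm\pi_{k-1})$ with $\sigma\in 1\times S_{n-1}$, i.e.\ $\Psi_{\pi\nu}=\Psi_\pi\Psi_\nu$ whenever $|\pi\nu|=|\pi|+|\nu|$. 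With these two points fixed, your argument goes through and coincides with the paper's.
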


For $t\in T$, we now informally define creation and annihilation operators at point $t$, denoted by $\partial_t^\dag$ and $\partial_t$, respectively. A rigorous meaning to these operators is given through smearing them with functions $h\in\mathcal H$:
\begin{equation}\label{yre6ueu4e}
a^{+}(h)=\int_T\sigma(dt)\,h(t)\prd_t, \quad a^-(h)=\int_T\sigma(dt)\,\overline{h(t)}\,\pr_t.
\end{equation}
So we have the following informal equalities:
\begin{align*}
\partial_t^\dag f^{(n)}&=\mathbb P_{n+1}(\delta_t\otimes f^{(n)}),\\
\partial_t f^{(n)}(t_1,\dots,t_{n-1})&=\sum_{k=1}^n \mathbb P_{n-1}\left[ \left(\prod_{i=1}^{k-1}Q(t,t_i)\right)f^{(n)}(t_1,\dots,t_{k-1},t,t_k,\dots,t_{n-1})\right],
\end{align*}
where $\delta_t$ denotes the delta function at $t$.

Using \eqref{uf7re7ic} and Corollary~\ref{uyt67ct}, we see that, for any $g,h\in\mathcal H$ and $f^{(n)}\in\mathcal F_n(\mathcal H)$,
\begin{equation}\label{lyuefgutwe}
a^+(g)a^+(h)f^{(n)}:=\mathbb P_{n+2}(g\otimes h\otimes f^{(n)}).\end{equation}
In view of \eqref{yre6ueu4e} and \eqref{lyuefgutwe}, for each $\varphi^{(2)}\in\mathcal H^{\otimes 2}$, we can naturally define an operator
$$\int_{T^2}\sigma(ds)\,\sigma(dt)\, \varphi^{(2)}(s,t)\,\partial_s^\dag\partial_t^\dag: \Fhcfin\to\Fhcfin $$
by setting
\begin{equation}\label{rdrt}
 \int_{T^2}\sigma(ds)\,\sigma(dt)\, \varphi^{(2)}(s,t)\,\partial_s^\dag\partial_t^\dag f^{(n)}:=\mathbb P_{n+2}(\varphi^{(2)}\otimes f^{(n)})\end{equation}
for $f^{(n)}\in\mathcal F_n(\mathcal H)$. In particular, choosing $\varphi^{(2)}= g\otimes h$ with $g,h\in\mathcal H$, we get
$$ \int_{T^2}\sigma(ds)\,\sigma(dt)\, g(s)h(t)\,\partial_s^\dag\partial_t^\dag=
a^+(g)a^+(h).$$

\begin{rem}
Note that we also  accept the natural formula
 \begin{equation}\label{hyyt9}
 \int_{T^2}\sigma(ds)\,\sigma(dt)\, \varphi^{(2)}(s,t)\,\partial_t^\dag\partial_s^\dag= \int_{T^2}\sigma(ds)\,\sigma(dt)\, \varphi^{(2)}(t,s)\,\partial_s^\dag\partial_t^\dag.\end{equation}
\end{rem}

Similarly, using also Proposition~\ref{gdrd6i}, we may define, for each $\varphi^{(2)}\in\mathcal H^{\otimes 2}$, linear operators
\begin{align*}
&\int_{T^2}\sigma(ds)\,\sigma(dt)\, \varphi^{(2)}(s,t)\,\partial_s\partial_t: \Fhcfin\to\Fhcfin,\\
&\int_{T^2}\sigma(ds)\,\sigma(dt)\, \varphi^{(2)}(s,t)\,\partial_s^\dag\partial_t: \Fhcfin\to\Fhcfin.
\end{align*}
Note that
\begin{align}
\left(\int_{T^2}\sigma(ds)\,\sigma(dt)\, \varphi^{(2)}(s,t)\,\partial_s^\dag\partial_t^\dag\right)^*&= \int_{T^2}\sigma(ds)\,\sigma(dt)\, \overline{\varphi^{(2)}(s,t)}\,\partial_t\partial_s\notag\\
&= \int_{T^2}\sigma(ds)\,\sigma(dt)\, \overline{\varphi^{(2)}(t,s)}\,\partial_s\partial_t.\label{jgfruy7r}
\end{align}
Also, for any $g,h\in\mathcal H$, we denote
$$\int_{T^2}\sigma(ds)\,\sigma(dt)\,g(s)h(t)\partial_s\partial_t^\dag:=a^-(\overline{g}) a^+(h).$$

We will now present the commutation relations for the
creation and annihilation operators.

\begin{thm}[$Q$-CR]\label{Q-CR} The operators $\partial_t^\dag$, $\partial_t$ ($t\in T$) satisfy the (informal) commutations relations \eqref{QCR_0} and \eqref{vyi76}. Rigorously, this means the following: for any $g,h\in\mathcal H$,
\begin{equation}\label{byr75ro5}
\int_{T^2}\sigma(ds)\,\sigma(dt)\,g(s)h(t)\partial_s\partial_t^\dag=\int_T g(t)h(t)\,\sigma(dt)+\int_{T^2}\sigma(ds)\,\sigma(dt)\,g(s)h(t)Q(s,t)\partial_t^\dag \partial_s,
\end{equation}
 and for any function $\varphi^{(2)}\in\mathcal H^{\otimes 2}$ that vanishes a.e.\ in
 $\Theta'$ (see \eqref{utfr7ro5t}),
 \begin{align}
\int_{T^2}\sigma(ds)\,\sigma(dt)\, \varphi^{(2)}(s,t)\,\partial_s^\dag\partial_t^\dag&= \int_{T^2}\sigma(ds)\,\sigma(dt)\, \varphi^{(2)}(s,t)Q(t,s)\,\partial_t^\dag\partial_s^\dag,\label{ft7r54}\\
\int_{T^2}\sigma(ds)\,\sigma(dt)\, \varphi^{(2)}(s,t)\,\partial_s\partial_t&= \int_{T^2}\sigma(ds)\,\sigma(dt)\, \varphi^{(2)}(s,t)Q(t,s)\,\partial_t\partial_s.\label{crte64ue} \end{align}

\end{thm}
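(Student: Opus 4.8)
The plan is to verify each of the three identities \eqref{byr75ro5}, \eqref{ft7r54}, \eqref{crte64ue} by testing against an arbitrary $f^{(n)}\in\mathcal F_n(\mathcal H)$ (equivalently, checking equality of the two sides as operators on $\Fhcfin$) and then computing the action on the $n$-particle level using the explicit formulas \eqref{R-symmetrization}, \eqref{d6ueu6eu6} and \eqref{rdrt}. For the creation-creation relation \eqref{ft7r54}, both sides applied to $f^{(n)}$ give $\mathbb P_{n+2}$ of a function in $\mathcal H^{\otimes 2}\otimes\mathcal H^{\otimes n}$: the left side yields $\mathbb P_{n+2}(\varphi^{(2)}\otimes f^{(n)})$, while by \eqref{hyyt9} the right side yields $\mathbb P_{n+2}\big((\varphi^{(2)}(t,s)Q(t,s))\otimes f^{(n)}\big)$ after relabelling. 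So the claim reduces to the statement that, for $\psi^{(2)}\in\mathcal H^{\otimes 2}$ vanishing a.e.\ on $\Theta'$, one has $\mathbb P_{n+2}(\psi^{(2)}\otimes f^{(n)})=\mathbb P_{n+2}\big((\Psi\psi^{(2)})\otimes f^{(n)}\big)$, where $\Psi$ is the flip-with-$Q$ operator of \eqref{tye654i}. Since $\overline{\Ran(\pn)}=\mathcal F_n(\mathcal H)$ is exactly the space of $Q$-quasisymmetric functions by Theorem~\ref{main theorem for range of pn}, and since $\mathbb P_{n+2}$ projects onto it, it suffices to observe that $\psi^{(2)}\otimes f^{(n)}$ and $(\Psi_1\psi^{(2)})\otimes f^{(n)}$ differ by an element of $\Ker(\mathbf 1+\Psi_1)$ — indeed $\psi^{(2)}$ vanishing on $\Theta'$ means $\Psi_1$ acts on $\psi^{(2)}\otimes f^{(n)}$ as multiplication by a unimodular (or zero) factor in the relevant coordinates, so $(\mathbf 1-\Psi_1)\big(\psi^{(2)}\otimes f^{(n)}\big)$ maps into $\Ker(\mathbf 1+\Psi_1)\subset\Ker(\mathbb P_{n+2})$ by Theorem~\ref{f7ir5i4i} and Corollary~\ref{dryde6644e36}. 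Actually the cleanest route is: $\mathbb P_{n+2}$ is $\Psi_k$-invariant on the ``$\Theta$-part'' by Theorem~\ref{main theorem for range of pn}, so $\mathbb P_{n+2}\Psi_1 g = \mathbb P_{n+2}g$ whenever $g$ is supported where $|Q(t_1,t_2)|=1$; applying this with $g=\psi^{(2)}\otimes f^{(n)}$ gives the result, and $\eqref{crte64ue}$ follows by taking adjoints via \eqref{jgfruy7r} and \eqref{ft7r54}.

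For the main relation \eqref{byr75ro5}, I would again test on $f^{(n)}\in\mathcal F_n(\mathcal H)$. Write the left-hand side as $a^-(\overline g)a^+(h)$ (for $\varphi^{(2)}=g\otimes h$; the general $\varphi^{(2)}$ then follows by linearity and density). Then $a^+(h)f^{(n)}=\mathbb P_{n+1}(h\otimes f^{(n)})$, and applying \eqref{ty456ew} (the version of Proposition~\ref{gdrd6i} valid on $\mathbb P_{n+1}$ of a bare tensor) with the integration variable $s$ produces a sum over $k=1,\dots,n+1$ of $\mathbb P_n$ applied to $\int_T g(s)\big(\prod_{i<k}Q(s,\cdot_i)\big)(h\otimes f^{(n)})(\dots,s,\dots)\,\sigma(ds)$. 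The $k=1$ term, where $s$ is inserted in the first slot, picks out $\int_T g(s)h(s)\sigma(ds)\cdot \mathbb P_n f^{(n)} = \big(\int_T gh\,d\sigma\big) f^{(n)}$ (using $\mathbb P_n f^{(n)}=f^{(n)}$), giving the $\delta$-term on the right of \eqref{byr75ro5}. For $k\ge2$ the variable $s$ lands in an $f^{(n)}$-slot, and after shifting the index, these terms must be matched against the right-hand side: there, $\int g(s)h(t)Q(s,t)\partial_t^\dag\partial_s f^{(n)}$ first applies $\partial_s$ — i.e.\ formula \eqref{d6ueu6eu6} — producing a sum over the slot where $s$ is removed, and then applies $\partial_t^\dag=\mathbb P_n(\delta_t\otimes\cdot)$ reinserting $t$ in the first slot, weighted by $Q(s,t)$. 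The bookkeeping then amounts to checking that the product of $Q$-factors $\prod_{i=1}^{k-1}Q(s,t_i)$ from the creation-then-annihilation side matches $Q(s,t)\prod_{i}Q(s,t_i)$ from the annihilation-then-creation side once $t$ is prepended, which is exactly the cocycle-type identity built into \eqref{R_pi_wzor}; the outer $\mathbb P_n$'s coincide by Corollary~\ref{uyt67ct}.

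The main obstacle, and where I expect the real work to lie, is the careful index-shuffling in \eqref{byr75ro5}: one must show that the $\mathbb P_{n}$-composed-with-$\mathbb P_{n+1}$ structure on the left collapses correctly (this needs Corollary~\ref{uyt67ct}, $\mathbb P_n=\mathbb P_n(\mathbb P_1\otimes\mathbb P_{n-1})$ and its iterates, together with $\mathbb P_n f^{(n)}=f^{(n)}$ for $f^{(n)}\in\mathcal F_n(\mathcal H)$), and that the $n$ terms with $k\ge2$ reassemble — after relabelling the removed/reinserted variable — into precisely the $n$ terms produced by first applying $\partial_s$ and then $\partial_t^\dag$ on the right. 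A subtle point is that formula \eqref{d6ueu6eu6} for $a^-$ is stated for $f^{(n)}\in\mathcal F_n(\mathcal H)$ whereas here $\partial_s$ must be applied to $\mathbb P_n(\delta_t\otimes f^{(n-1)})$-type objects, so one genuinely needs the $\mathbb P_n$-version \eqref{ty456ew} and must be attentive to the order in which $\mathbb P$'s and the integrations are performed; all the operator-valued integrals should be read in the weak sense against $\Fhcfin$, as prepared in the paragraphs preceding the theorem. Once the combinatorial identity on $Q$-products is isolated, it is an elementary consequence of the definition \eqref{R_pi_wzor}, so the proof is essentially a bookkeeping verification rather than requiring any new idea beyond Theorems~\ref{main theorem for range of pn} and~\ref{orthogonal projection} and Corollaries~\ref{dryde6644e36} and~\ref{uyt67ct}.
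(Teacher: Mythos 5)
Your plan is correct and follows essentially the same route as the paper: \eqref{ft7r54} is obtained from $\mathbb P_{n+2}\Psi_1=\mathbb P_{n+2}$ on functions vanishing over $\Theta'$ (the paper phrases this as $\mathbb P_2\Psi\varphi^{(2)}=\mathbb P_2\varphi^{(2)}$ combined with Corollary~\ref{uyt67ct} and \eqref{hyyt9}), \eqref{crte64ue} by taking adjoints, and \eqref{byr75ro5} by expanding $a^-(\overline g)a^+(h)f^{(n)}$ via Proposition~\ref{gdrd6i} and matching against the creation-then-annihilation side. The only organizational difference is that the paper isolates the $k=1$ term through the recursion $\mathcal R_{n+1}=\mathbf 1_{n+1}+\Psi_1(\mathbf 1\otimes\mathcal R_n)$ and compares the remainder with $a^+(h)a^-(g)f^{(n)}$ as a whole, rather than matching the $k\ge 2$ terms one by one; the underlying bookkeeping is identical.
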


We finish this section with several remarks.

\begin{rem} We can naturally identify the diagonal $\Delta:=\{(s,t)\in T^2\mid s=t\}$
with $T$. Denote by $\tilde \sigma$ the measure $\sigma$ on $\Delta$. We may consider $\tilde \sigma$ as a measure on $T^2$ which is equal to zero outside of $\Delta$. Denote
$$\mathfrak G:=L^2(T^2\to\mathbb C,\sigma^{\otimes 2})\cap L^1(T^2\to\mathbb C,\tilde\sigma).$$
In view of \eqref{byr75ro5}, for each $\varphi^{(2)}\in\mathfrak G$, we may define an operator
$$\int_{T^2}\sigma(ds)\,\sigma(dt)\, \varphi^{(2)}(s,t)\,\partial_s\partial_t^\dag: \Fhcfin\to\Fhcfin ,$$
which satisfies
$$\int_{T^2}\sigma(ds)\,\sigma(dt)\,\varphi^{(2)}(s,t)\partial_s\partial_t^\dag=\int_T \varphi^{(2)}(t,t)\,\sigma(dt)+\int_{T^2}\sigma(ds)\,\sigma(dt)\,\varphi^{(2)}(s,t)Q(s,t)\partial_t^\dag \partial_s.$$
\end{rem}

\begin{rem} Denote $B(\varphi):=a^+(\varphi)+a^-(\varphi)$. The family of operators $(B(\varphi))_{\varphi\in\mathcal H}$ can be thought of as a {\it noncommutative Brownian motion} (or a {\it noncommutative Gaussian white noise}). Let $\mathcal P$ denote the complex unital $*$-algebra generated by $(B(\varphi))_{\varphi\in\mathcal H}$, i.e., the algebra of noncommutative polynomials in the variables $B(\varphi)$. We define a vacuum state on $\mathcal P$ by $\tau(p):=(p\Omega,\Omega)_{\mathcal F(\mathcal H)}$, $p\in\mathcal P$. By analogy with the proofs of Theorem~4.4 in \cite{bozejkospeicherMathAnn1994} and Corollary~4.9 in \cite{bozejkolytvynovwysoczanskiCMP2012}, one can prove the following result: the state $\tau$ is tracial (i.e., it satisfies $\tau(p_1p_2)=\tau(p_2p_1)$ for all $p_1,p_2\in\mathcal P$) if and only if the function $Q$ is real-valued, i.e., $Q:T^{(2)}\to[-1,1]$.

\end{rem}

\begin{rem} The results of this section  hold, in particular,  in the case where $\sigma^{\otimes 2}(\Theta')=0$, i.e., when $|Q(s,t)|<1$ for $\sigma^{\otimes 2}$-a.a.\ $(s,t)\in T^2$.
Then, for each $n\ge2$,  the equality $\mathcal F_n(\mathcal H)=\mathcal H^{\otimes n}$ holds (in the sense of sets). Evidently, there are no commutation relations \eqref{ft7r54}, \eqref{crte64ue} in this case. Note also that, if $|Q(s,t)|\leq r <1$ for some number $0<r<1$, then the creation  operators $a^+(h)$ and the annihilation operators $a^-(h)$ ($h\in\mathcal H$) are bounded in $\mathcal F(\mathcal H)$, see Theorem~3.1, (ii) in \cite{bozejkospeicherMathAnn1994}.
\end{rem}

\subsection{Discrete setting: the anyonic exclusion principle}

We will now make several observations about the discrete setting. We may choose $T$ to be a finite or countable set and $\sigma$ to be the counting measure on $T$, i.e., $\sigma(\{t\})=1$ for each $t\in T$. Hence, the space $\mathcal H$ becomes the complex $\ell^2$-space over $T$, i.e.,  $\mathcal H=\ell^2(T\to {\mathbb C})$.
We obviously have $T^{(2)}=T^2$, so that the function $Q(s,t)$ is defined for all $(s,t)\in T^2$.  Thus, we have, in particular, constructed Fock representations of the discrete commutation relations \eqref{ghdtrdeyk} with additional commutation relations between $\partial_s^\dag$, $\partial_t^\dag$ and between $\partial_s$, $\partial_t$  for those pairs $(s,t)\in T^2$ for which $|Q(s,t)|=1$. (Note that, in this case, the operators $\partial_t^\dag$, $\partial_t$ have a rigorous meaning.)

Since the function $Q$ is Hermitian, we  have $Q(t,t)\in\mathbb R$ for each $t\in T$. Hence,  $|Q(t,t)|=1$ if and only if either $Q(t,t)=1$ or $Q(t,t)=-1$. In the first case, we just get the tautological commutation relation $(\partial_t^\dag)^2=(\partial_t^\dag)^2$.
In the second case, we get $(\partial_t^\dag)^2=-(\partial_t^\dag)^2$, so that $(\partial_t^\dag)^2=\partial_t^2=0$. If the latter formulas hold for all $t\in T$, then we may call the corresponding commutation relations the {\it discrete $Q$-CR of fermion type}.

For the discrete  $Q$-CR of fermion type, the operators $\partial_t^\dag$, $\partial_t$ become bounded in $\mathcal F(\mathcal H)$ and have norm equal to 1, see \cite{bozejkospeicherMathAnn1994}, Corollary~3.2 and Remark after it. Hence, for each $h\in\ell^1(T\to {\mathbb C})$,
\begin{equation*}
\|a^+(h)\|=\|a^-(h)\|\le\|h\|_{\ell^1(T\to {\mathbb C})}.\end{equation*}

Let us now assume that $T\subset \mathbb N$ and fix $q\in\mathbb C$, $|q|=1$. We consider the function
\begin{equation*}
Q(s,t):=
\begin{cases}
q, & {\text{if}} \ s > t  \\
\bar{q}, & {\text{if}} \ s < t \\
-1, & {\text{if}} \ s = t, 
\end{cases}
\end{equation*}
The corresponding $Q$-CR describe a {\it discrete anyon system of fermion type}. Note that $|Q(s,t)|=1$ for all $(s,t)\in T^2$, hence $\mathcal P_n=\mathbb P_n$ is the projection of $\mathcal H^{\otimes n}$ onto $\mathcal F_n(\mathcal H)$.

\begin{thm}[Anyonic exclusion principle]\label{Theorem exclusion principle} Consider a discrete anyon system of fermion type.
 Let $m\in\mathbb N$, $m\ge2$. Assume that the parameter $q\in \mathbb C$, $q\ne1$, is an $m$th root of unity, i.e., $q^m=1$.   Then, for any $h\in \mathcal H$, we have
\begin{equation}\label{exclusion principle}
a^+(h)^m=a^-(h)^m=0.
\end{equation}
\end{thm}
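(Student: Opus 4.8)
The plan is to reduce the statement to showing that the function $h^{\otimes m}$ lies in $\Ker(\mathcal P_m)=\Ker(\mathbb P_m)$, and then to bootstrap this to the vanishing of all higher powers $a^+(h)^m$ on the whole Fock space. First I would record that, by \eqref{lyuefgutwe} and induction, $a^+(h)^m f^{(n)}=\mathbb P_{n+m}(h^{\otimes m}\otimes f^{(n)})$ for every $f^{(n)}\in\mathcal F_n(\mathcal H)$, and in particular $a^+(h)^m\Omega=\mathbb P_m(h^{\otimes m})$ (here $h^{\otimes m}$ means the function $(t_1,\dots,t_m)\mapsto h(t_1)\dotsm h(t_m)$). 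Since in the present setting $|Q(s,t)|\equiv1$, Theorem~\ref{main theorem for range of pn} says that $\mathcal F_m(\mathcal H)=\overline{\Ran(\mathcal P_m)}$ consists precisely of the $Q$-quasisymmetric functions, i.e.\ those $g^{(m)}$ with $g^{(m)}(\dots,t_k,t_{k+1},\dots)=Q(t_k,t_{k+1})g^{(m)}(\dots,t_{k+1},t_k,\dots)$ a.e., and $\mathbb P_m$ is the orthogonal projection onto this space.

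The key step is therefore to check that $h^{\otimes m}$ has zero component in $\mathcal F_m(\mathcal H)$, equivalently that $h^{\otimes m}\in\Ker(\mathbb P_m)$. I would use the explicit formula \eqref{R-symmetrization} for $\mathbb P_m$ (with $R=Q$ here since $\Theta'$ is null): fix a generic point $\mathbf t^{(m)}\in T^{(m)}$ with pairwise distinct, pairwise increasing-or-decreasing coordinates and compute $(\mathbb P_m h^{\otimes m})(\mathbf t^{(m)})=\frac1{m!}\sum_{\pi\in S_m}Q_{\pi^{-1}}(\mathbf t^{(m)})\,h^{\otimes m}(\mathbf t^{(m)}_\pi)$. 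Because $h^{\otimes m}$ is totally symmetric as a bare product, $h^{\otimes m}(\mathbf t^{(m)}_\pi)=h(t_1)\dotsm h(t_m)$ is independent of $\pi$, so this collapses to $h(t_1)\dotsm h(t_m)\cdot\frac1{m!}\sum_{\pi\in S_m}Q_{\pi^{-1}}(\mathbf t^{(m)})$. It then suffices to show the scalar factor $\sum_{\pi\in S_m}Q_{\pi^{-1}}(\mathbf t^{(m)})$ vanishes. For the discrete anyon function, when the coordinates are ordered so that $t_i$ are all distinct, the factor $Q_{\pi}(\mathbf t^{(m)})=\prod_{i<j,\ \pi(i)>\pi(j)}Q(t_i,t_j)$ equals $q^{a}\bar q^{b}$ where $a,b$ are the numbers of inversions of $\pi$ "seen" respectively on ascending and descending pairs of the $t$'s; after relabelling so that $t_1<\dots<t_m$ (which is legitimate since the product $h(t_1)\dotsm h(t_m)$ is symmetric and we are integrating), $Q_\pi=q^{\mathrm{inv}(\pi)}$, the inversion count. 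Hence $\sum_{\pi\in S_m}Q_{\pi^{-1}}(\mathbf t^{(m)})=\sum_{\pi\in S_m}q^{\mathrm{inv}(\pi)}$, and I would invoke the classical generating-function identity $\sum_{\pi\in S_m}q^{\mathrm{inv}(\pi)}=\prod_{j=1}^{m}\frac{1-q^j}{1-q}=[m]_q!$, the $q$-factorial. Since $q\neq1$ and $q^m=1$, the factor $[m]_q$ with $j=m$ is zero, so the whole sum vanishes. Therefore $\mathbb P_m h^{\otimes m}=0$, i.e.\ $a^+(h)^m\Omega=0$.

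To upgrade this to $a^+(h)^m=0$ on all of $\mathcal F(\mathcal H)$, I would use $a^+(h)^m f^{(n)}=\mathbb P_{n+m}(h^{\otimes m}\otimes f^{(n)})$ together with Corollary~\ref{uyt67ct}: $\mathbb P_{n+m}=\mathbb P_{n+m}(\mathbb P_m\otimes\mathbb P_n)$, so $\mathbb P_{n+m}(h^{\otimes m}\otimes f^{(n)})=\mathbb P_{n+m}\big((\mathbb P_m h^{\otimes m})\otimes(\mathbb P_n f^{(n)})\big)=\mathbb P_{n+m}(0\otimes f^{(n)})=0$ using $\mathbb P_m h^{\otimes m}=0$ (and $f^{(n)}\in\mathcal F_n(\mathcal H)$ so $\mathbb P_n f^{(n)}=f^{(n)}$). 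This gives $a^+(h)^m=0$ identically, and then $a^-(h)^m=(a^+(h)^m)^*=0$ (on $\Fhcfin$, the common invariant dense domain), as the operators $a^+(h)^m$ and $a^-(h)^m$ are mutually adjoint there. The main obstacle, and the only genuinely nontrivial point, is the combinatorial identity $\sum_{\pi\in S_m}q^{\mathrm{inv}(\pi)}=[m]_q!$ and in particular verifying carefully that, for the discrete anyon function, $Q_{\pi^{-1}}(\mathbf t^{(m)})$ really reduces to $q^{\mathrm{inv}(\pi)}$ after the coordinates are put in increasing order — one has to be attentive that inversions of $\pi^{-1}$ and of $\pi$ agree in number, and that the definition of $Q$ on the diagonal ($Q(t,t)=-1$) is irrelevant because $\{(t_1,\dots,t_m): t_i=t_j\text{ for some }i\ne j\}$ is $\sigma^{\otimes m}$-null.
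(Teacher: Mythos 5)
Your overall strategy coincides with the paper's: reduce everything to $\mathbb P_m(h^{\otimes m})=0$ via Corollary~\ref{uyt67ct}, then evaluate the $R$-symmetrization pointwise and invoke $\sum_{\pi\in S_m}q^{\mathrm{inv}(\pi)}=[m]_q!=0$. However, there is one genuine gap. You dismiss the tuples with repeated coordinates on the grounds that $\{(t_1,\dots,t_m):t_i=t_j\text{ for some }i\ne j\}$ is $\sigma^{\otimes m}$-null. In the discrete setting $\sigma$ is the \emph{counting} measure, so this set has positive (indeed infinite, or at least nonzero) measure: every singleton carries mass $1$. Your pointwise identity $(\mathbb P_m h^{\otimes m})(\mathbf t^{(m)})=h(t_1)\dotsm h(t_m)\cdot\frac1{m!}\sum_{\pi}Q_{\pi^{-1}}(\mathbf t^{(m)})$ is valid at such points too, but your ordering argument, which needs the $t_i$ to be pairwise distinct, says nothing about the value of the sum there. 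This is exactly where the hypothesis ``of fermion type'' must enter: the condition $Q(t,t)=-1$ forces the sum to vanish when two coordinates coincide (the paper's formula \eqref{discrete fermionic property}, proved via $\mathbb P_m\Psi_k=\mathbb P_m$ and $Q(t,t)=-1$). Your argument never uses this hypothesis, and the statement is false without it: if one instead had $Q(t,t)=1$, then at the point $(t,\dots,t)$ every $Q_{\pi^{-1}}$ equals $1$, the sum is $m!$, and $\mathbb P_m h^{\otimes m}$ does not vanish when $h(t)\ne0$. So you must add the case of coinciding coordinates and invoke the fermionic diagonal condition there.

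Two smaller points. First, the ``relabelling'' step needs a word of justification: you are evaluating pointwise, not integrating, so symmetry of $h(t_1)\dotsm h(t_m)$ is not by itself enough; what saves you is the cocycle identity $Q_{\nu^{-1}\pi^{-1}}(\mathbf t^{(m)})=Q_{\pi^{-1}}(\mathbf t^{(m)})Q_{\nu^{-1}}(\mathbf t^{(m)}_\pi)$ (valid here since $|Q|\equiv1$, cf.\ Lemma~\ref{tydf6ir5} and \eqref{ioy987}), which shows that $\sum_\pi Q_{\pi^{-1}}(\mathbf t^{(m)})$ changes only by a unimodular factor when the coordinates are permuted, so vanishing at the ordered point suffices. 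Second, for $t_1<\dots<t_m$ one actually gets $Q_\pi=\bar q^{\,\mathrm{inv}(\pi)}$ rather than $q^{\mathrm{inv}(\pi)}$; this is harmless since $\bar q$ is also an $m$th root of unity different from $1$, so $[m]_{\bar q}!=0$ as well, but it is worth stating correctly. The concluding steps ($\mathbb P_{n+m}(\mathbb P_m\otimes\mathbb P_n)$ and passing to the adjoint for $a^-(h)^m$) agree with the paper and are fine.
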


\section{Proofs}
In this section we collect the proofs of the results from Section 2.

\begin{proof}[Proof of Proposition \ref{R_pi(f) formula on tensors}]

We start with the following crucial lemma.

\begin{lemma}\label{higft7qwri}
 Let $\rho=\pi_l\eta$ be a reduced representation of a permutation $\rho\in S_n$.  Then
\begin{equation}\label{R_pi(f)_1}
Q_\rho(t_1,\dots,t_n) =  Q(t_{\eta^{-1}(l)},
t_{\eta^{-1}(l+1)})
Q_\eta(t_1,\dots,t_n),\quad (t_1,\dots,t_n)\in T^{(n)}.
\end{equation}
\end{lemma}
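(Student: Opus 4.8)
The plan is to compare the products $Q_\rho$ and $Q_\eta$ defined in \eqref{R_pi_wzor} term by term. Write $\mathrm{Inv}(\pi):=\{(i,j):1\le i<j\le n,\ \pi(i)>\pi(j)\}$ for the inversion set of a permutation $\pi\in S_n$, so that $Q_\pi(\mathbf t^{(n)})=\prod_{(i,j)\in\mathrm{Inv}(\pi)}Q(t_i,t_j)$ (all factors being well defined since $\mathbf t^{(n)}\in T^{(n)}$). Then \eqref{R_pi(f)_1} amounts to the combinatorial identity
\[
\mathrm{Inv}(\rho)=\mathrm{Inv}(\eta)\sqcup\{(\eta^{-1}(l),\eta^{-1}(l+1))\},\qquad\text{with }\eta^{-1}(l)<\eta^{-1}(l+1),
\]
because the single extra pair contributes exactly the factor $Q(t_{\eta^{-1}(l)},t_{\eta^{-1}(l+1)})$.

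First I would determine how inversions change under $\eta\mapsto\rho=\pi_l\eta$. Since $\pi_l$ transposes the values $l$ and $l+1$, we have $\rho(i)=\eta(i)$ whenever $\eta(i)\notin\{l,l+1\}$, and $\rho$ merely interchanges the two positions $\eta^{-1}(l)$ and $\eta^{-1}(l+1)$. A short case check then shows that a pair $(i,j)$ with $i<j$ has the same inversion status for $\eta$ and for $\rho$ \emph{unless} $\{\eta(i),\eta(j)\}=\{l,l+1\}$: if neither value lies in $\{l,l+1\}$ this is obvious, and if exactly one of them, say $\eta(i)$, lies in $\{l,l+1\}$ while $\eta(j)=:m\notin\{l,l+1\}$, then both $\eta(i)>m$ and $\rho(i)>m$ are equivalent to $m\le l-1$ (and symmetrically if it is $\eta(j)$ that lies in $\{l,l+1\}$). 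Consequently $\mathrm{Inv}(\rho)$ and $\mathrm{Inv}(\eta)$ can differ only in the single pair $\{\eta^{-1}(l),\eta^{-1}(l+1)\}$, written with the smaller index first.

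To fix the direction of the change I would invoke reducedness: $\rho=\pi_l\eta$ being reduced means $|\rho|=|\eta|+1$, i.e.\ $|\mathrm{Inv}(\rho)|=|\mathrm{Inv}(\eta)|+1$, so the distinguished pair must lie in $\mathrm{Inv}(\rho)\setminus\mathrm{Inv}(\eta)$. Writing it as $(a,b)$ with $a<b$, the fact that it is not an inversion of $\eta$ gives $\eta(a)<\eta(b)$, hence $\eta(a)=l$ and $\eta(b)=l+1$ (since $\{\eta(a),\eta(b)\}=\{l,l+1\}$); thus $a=\eta^{-1}(l)<\eta^{-1}(l+1)=b$, establishing the displayed identity. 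Multiplying $Q(t_i,t_j)$ over the two inversion sets then yields \eqref{R_pi(f)_1}. The only real work is the case check in the second paragraph — deciding precisely which pair changes status and, via reducedness, in which direction; it is elementary but must be done with care, since reducedness is genuinely needed (for instance in $S_2$ with $\eta=\pi_1$ one has $\pi_1\eta=e$, so $Q_{\pi_1\eta}=1$, whereas the right-hand side of \eqref{R_pi(f)_1} would be $Q(t_2,t_1)Q(t_1,t_2)=|Q(t_1,t_2)|^2\ne1$ in general).
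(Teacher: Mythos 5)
Your proposal is correct and follows essentially the same route as the paper: a case analysis on pairs $(i,j)$ showing that the inversion sets of $\rho=\pi_l\eta$ and $\eta$ agree except possibly at the single pair $\{\eta^{-1}(l),\eta^{-1}(l+1)\}$, with reducedness used to force that pair into $\mathrm{Inv}(\rho)\setminus\mathrm{Inv}(\eta)$ (the paper phrases this as ruling out the case $\eta(u)=l+1$, $\eta(v)=l$ because it would give $|\eta|>|\rho|$). Your closing remark that reducedness is genuinely needed, illustrated by the $S_2$ example, is a worthwhile observation not made explicit in the paper.
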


\begin{proof}
Let
\[
L_{\rho}:=Q_\rho(t_1,\dots,t_n)=\prod_{\substack{
1\le i < j \le n\\
\rho(i) > \rho(j)}} Q(t_i, t_j),\quad L_{\eta}:= Q_\eta(t_1,\dots,t_n)=\prod_{\substack{
1\le i < j \le n\\
\eta(i) > \eta(j)}} Q(t_i, t_j).
\]
Let $1\le u < v \le n$. We consider the following cases.

\begin{itemize}
\item
If $\eta(u), \eta(v) \notin \{l, l+1 \}$,  then both $\eta(u),
\eta(v)$ are fixed points for  $\pi_l$.
Consequently, $\rho(u)=\eta(u)$ and $\rho(v)=\eta(v)$, so that $\rho(u)>\rho(v)$ if and only if
$\eta(u)>\eta(v)$. Hence, the term $Q(t_u, t_v)$
appears in $L_{\rho}$ if and only if it appears in $L_{\eta}$.

\item
If $\eta(u) \in \{l, l+1 \}$ and $\eta(v)\notin \{l, l+1 \}$, then
$\rho(v)=v\notin \{l, l+1 \}$ and, since $\rho(u)=(\pi_l\eta)(u)\in
\{l, l+1 \}$, the order between $\eta(u)$ and $\eta(v)$ is the same
as between $\rho(u)$ and $\rho(v)$. Thus, the term $Q(t_u, t_v)$
appears in $L_{\rho}$ if and only if it appears in $L_{\eta}$.

\item
The case $\eta(u) \notin \{l, l+1 \}$ and $\eta(v)\in \{l, l+1 \}$
is analogous to the previous one.

\item
Consider the case $\eta(u)=l$ and $\eta(v)=l+1$. Then the term $Q(t_u, t_v)$
does not appear in $L_{\eta}$.  Further, $\rho(u)=(\pi_l\eta)(u)=\pi_l(l)=l+1$ and $\rho(v)=(\pi_l\eta)(v)=\pi_l(l+1)=l$, so that $\rho(u) > \rho(v)$. Hence,
the term $Q(t_u, t_v)$ appears in $L_{\rho}$. But we also have $Q(t_{\eta^{-1}(l)}, t_{\eta^{-1}(l+1)})=Q(t_u,t_v)$ on the right hand side of equality \eqref{R_pi(f)_1}.

\item Finally, consider the case $\eta(u)=l+1$ and $\eta(v)=l$. But then $\rho(u)=(\pi_l\eta)(u)=l$ and $\rho(v)=(\pi_l\eta)(v)=l+1$. Thus, $\eta$ changes the order of the pair $(u,v)$, while $\rho$ does not. Therefore,
$\eta$ has more inversions  than $\rho$: $|\eta|>|\rho|$. But this contradicts the assumption that  $\rho$
is in the reduced form. Thus, this case is impossible.
\end{itemize}
\end{proof}

We will now prove the proposition by induction on the length of a permutation $\pi$.
If $|\pi|=1$, then $\pi=\pi_k$ for some $k\in\{1,\dots,n-1\}$. In this case, the statement  trivially follows from the definition of $\Psi_k$, see \eqref{operators Psi_j}. Assume that the statement holds for each permutation of length $m$. Let $\pi$ be a a permutation of length $m+1$, and let $\pi=\varphi\pi_l$ be a reduced representation of $\pi$. Hence, the length of the permutation $\varphi$ is $m$. Denote $\eta:=\varphi^{-1}$ and  $\rho:=\pi^{-1}$, so that $\rho=\pi_l\eta$. Then, for each $f^{(n)}\in\hcm$, by using the induction's assumption and Lemma~\ref{higft7qwri}, we get
\begin{align*}
&(\Psi_\pi f^{(n)})(t_1,\dots,t_n)=(\Psi_\varphi\Psi_l f^{(n)})(t_1,\dots,t_n)\\
&\quad=Q_\eta(t_1,\dots,t_n)(\Psi_l f^{(n)})(t_{\varphi(1)},\dots,t_{\varphi(n)})\\
&\quad=Q_\eta(t_1,\dots,t_n)Q(t_{\varphi(l)},t_{\varphi(l+1)}) f^{(n)}(t_{\varphi(1)},\dots,t_{\varphi(l+1)},t_{\varphi(l)},\dots,t_{\varphi(n)})\\
&\quad=Q_\rho(t_1,\dots,t_n)f^{(n)}(t_{\pi(1)},\dots,t_{\pi(n)}).\end{align*}
\end{proof}

\begin{proof}[Proof of Lemma \ref{pn has no kernel}]

(i) Since $\pn$ is self-adjoint and $\pn\geq 0$, we can write $\pn=(\sqrt{\pn})^2$. Let $f^{(n)}\in\hcm$ be such that
$$0=( f^{(n)},f^{(n)})_{\mathcal F_n(\mathcal H)}=\|\sqrt{\pn} f^{(n)}\|^2_{\hcm}.$$
Hence, $f^{(n)}\in\Ker(\sqrt{\mathcal P_n})$. But $\Ker \sqrt{\pn}\subset \Ker \pn$, which implies
$$ \big\{
f^{(n)}\in \mathcal H^{\otimes n}\mid ( f^{(n)},f^{(n)})_{\mathcal F_n(\mathcal H)}=0\big\}\subset \Ker (\mathcal P_n).$$
The inverse inclusion trivially follows  from \eqref{frte6u4e}.

(ii) Let $f^{(n)}\in \overline{{\Ran}(\pn)}$ be such that $( f^{(n)},f^{(n)})_{\mathcal F_n(\mathcal H)}=0$. By part (i), $f^{(n)}\in\Ker (\mathcal P_n)$. But $\overline{{\Ran}(\pn)}\perp \Ker (\mathcal P_n)$. Hence,
$\overline{{\Ran}(\pn)}\cap \Ker (\mathcal P_n)=\{0\}$, and so $f^{(n)}=0$.
\end{proof}

\begin{proof}[Proof of Theorem~\ref{main theorem for range of pn}]

Using (\ref{kernel of Pn}), we have
\begin{equation}\label{tydeu645w38}
\overline{{\Ran}(\pn)} = \left(\sum_{k=1}^{n-1} \Ker(\mathbf 1+\Psi_k)\right)^{\perp} = \bigcap_{k=1}^{n-1} \Ker(\mathbf 1+\Psi_k)^{\perp}=\bigcap_{k=1}^{n-1}\overline{{\Ran}(\mathbf 1+\Psi_k)}.
\end{equation}
For $l\in \N$ and $k\in\{1,\dots,n-1\}$, we denote
\[
T^{(n)}_{k,l}:=\left\{(t_1, \dots , t_n)\in \tn : \frac{l-1}{l}\leq |Q(t_k, t_{k+1})|<  \frac{l}{l+1}\right\}
\]
and  recall the definition of $T_k^{(n)}$, see \eqref{yree8wfw7i}.
Then, for each $k\in\{1,\dots,n-1\}$, we have the orthogonal decomposition
\begin{equation}\label{L2 decomposition}
\mathcal H^{\otimes n}=\left(\bigoplus_{l=1}^{\infty} L^2(T^{(n)}_{k,l}\to\mathbb C, \sigma^{\otimes n})\right) \oplus L^2(T^{(n)}_{k}\to\mathbb C, \sigma^{\otimes n}).
\end{equation}
Each of the spaces on the right-hand side of \eqref{L2 decomposition} is invariant for the operator $\mathbf 1+\Psi_k$.  On each space $L^2(T^{(n)}_{k,l}\to\mathbb C, \sigma^{\otimes n})$, the norm of the operator $\Psi_k$ is bounded by $\frac{l}{l+1}<1$. Hence, the operator $\mathbf 1+\Psi_k$ is invertible in this space. Therefore the kernel of the operator $\mathbf 1+\Psi_k$ restricted to $\displaystyle  L^2(T^{(n)}_{k,l}\to\mathbb C, \sigma^{\otimes n})$ is trivial:
\[
\Ker(\mathbf 1+\Psi_k) \cap L^2(T^{(n)}_{k,l}\to\mathbb C, \sigma^{\otimes n}) =\{0\} \quad \text{for each } l\in \N.
\]

Let $ f^{(n)}\in L^2(T^{(n)}_{k}\to\mathbb C, \sigma^{\otimes n})$. Consider the decomposition $f^{(n)}=f^{(n)}_{k,+} + f^{(n)}_{k,-}$ with
$$
f^{(n)}_{k,\pm}(t_1, \dots , t_n):=\frac{1}{2}\big[f^{(n)}(t_1, \dots , t_n) \pm Q(t_k,t_{k+1})f^{(n)}(t_1, \dots, t_{k+1},t_k,\dots , t_n)\big].
$$
One can easily see that $f^{(n)}_{k,+}$ and $f^{(n)}_{k,-}$ are orthogonal and  $f^{(n)}_{k,+}\in \Ran(\mathbf 1+\Psi_k)$. Hence $f^{(n)}_{k,-}\in\Ker(\mathbf 1+\Psi_k)$.
Therefore, the orthogonal projection of $L^2(T^{(n)}_{k}\to\mathbb C, \sigma^{\otimes n})$ onto $\Ker(\mathbf 1+\Psi_k)$, denoted by  $D^{(n)}_k$, is given by
\[
(D^{(n)}_kf^{(n)})(t_1, \dots , t_n) = \frac{1}{2}\big[f^{(n)}(t_1, \dots , t_n) -Q(t_k,t_{k+1})f^{(n)}(t_1, \dots, t_{k+1},t_k,\dots , t_n) \big] .
\]
Hence, the orthogonal projection of $\mathcal H^{\otimes n}$ onto
$\Ker(\mathbf 1+\Psi_k)$, denoted by $E^{(n)}_k$, is given by
\begin{align*}
&(E^{(n)}_kf^{(n)})(t_1, \dots , t_n)\\
&\quad = \frac{1}{2}\chi_{T_k^{(n)}}(t_1,\dots,t_n)\big[f^{(n)}(t_1, \dots , t_n) -Q(t_k,t_{k+1})f^{(n)}(t_1, \dots, t_{k+1},t_k,\dots , t_n) \big],
\end{align*}
where $\chi_{A}$ denotes the indicator function of a set $A$. Therefore, the orthogonal projection
 of $\mathcal H^{\otimes n}$  onto
$ \Ker(\mathbf 1+\Psi_k)^\perp=\overline{\Ran(\mathbf 1+\Psi_k)}$, denoted by $F_k^{(n)}$, is given by
\begin{align*}
&(F_k^{(n)}f^{(n)})(t_1, \dots , t_n)= \chi_{T^{(n)}\setminus T_k^{(n)}}(t_1, \dots , t_n)f^{(n)}(t_1, \dots , t_n)\\
&\quad + \frac{1}{2}\chi_{T_k^{(n)}}(t_1,\dots,t_n)\big[f^{(n)}(t_1, \dots , t_n) +Q(t_k,t_{k+1})f^{(n)}(t_1, \dots, t_{k+1},t_k,\dots , t_n) \big].
\end{align*}

Thus, the set $\overline{\Ran(\mathbf 1+\Psi_k)}$ consists of all functions
from $\hcm$   that are $Q$-quasi\-symmetric in the $t_k,t_{k+1}$-variables on the set $T^{(n)}_k$, i.e., for $\sigma^{\otimes n}$-a.a\ $(t_1,\dots,t_n)\in T^{(n)}_k$, equality \eqref{Q-symmetric functions} holds. From here and formula \eqref{tydeu645w38}, the theorem follows.

\end{proof}

\begin{proof}[Proof of Theorem \ref{orthogonal projection}] We start with the following lemma.

\begin{lemma}\label{relations L1}
(i) Let  $\mathbf t^{(n)}\in \tn$. Then $\pi\in S_n^1(\mathbf t^{(n)})$ if and only if $\pi^{-1}\in S_n^1(\mathbf t_\pi^{(n)})$.

(ii) Let $\mathbf t^{(n)}\in \tn$, let $\pi \in S_n^1(\mathbf t^{(n)})$, and let
$\nu\in S_n^1(\mathbf t^{(n)}_{\pi})$. Then $\vp:=\pi\nu \in S_n^1(\mathbf t^{(n)})$.

(iii) For each $\mathbf t^{(n)}\in T^{(n)}$ and $\pi\in S_n^1(\mathbf t^{(n)})$, we have $c_n(\mathbf t^{(n)})=c_n(\mathbf t^{(n)}_\pi)$.
\end{lemma}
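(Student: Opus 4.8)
The plan is to reduce everything to the combinatorial description of $S_n^1(\mathbf t^{(n)})$ coming from \eqref{uyr75er5} and \eqref{utf7ir}. First I would record the basic reformulation: by the remark following \eqref{utf7ir}, for $\mathbf t^{(n)}\in T^{(n)}$ one has $|R_{\pi}(\mathbf t^{(n)})|=1$ unless there exist $1\le i<j\le n$ with $\pi(i)>\pi(j)$ and $(t_i,t_j)\in\Theta'$, in which case $|R_{\pi}(\mathbf t^{(n)})|=0$. Hence $\pi\in S_n^1(\mathbf t^{(n)})$ iff $|R_{\pi^{-1}}(\mathbf t^{(n)})|=1$ iff for \emph{every} pair $1\le i<j\le n$ with $(t_i,t_j)\in\Theta'$ one has $\pi^{-1}(i)<\pi^{-1}(j)$; equivalently, $\pi^{-1}$ preserves the order of every ``$\Theta'$-incomparable'' pair of indices. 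It is this ``order-preserving on $\Theta'$-pairs'' characterization that I will use throughout.

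For part (i): writing $\mathbf s^{(n)}:=\mathbf t^{(n)}_\pi$, so $s_a=t_{\pi(a)}$, the condition $\pi^{-1}\in S_n^1(\mathbf s^{(n)})$ means that for all $a<b$ with $(s_a,s_b)\in\Theta'$ one has $(\pi^{-1})^{-1}(a)=\pi(a)<\pi(b)$. Substituting $a=\pi^{-1}(i)$, $b=\pi^{-1}(j)$ and using $(s_a,s_b)=(t_i,t_j)$, one sees after relabelling that this is literally the same set of constraints as $\pi\in S_n^1(\mathbf t^{(n)})$. (One just has to be a little careful because the index pair $a<b$ vs.\ $b<a$ swaps, but since $\Theta'$ is a symmetric subset of $T^{(2)}$ — it is defined in \eqref{utfr7ro5t} in terms of $|Q(s,t)|<1$ and $Q$ is Hermitian — the condition ``order-preserving on $\Theta'$-pairs'' is symmetric under transposing $i,j$, so the bookkeeping goes through.) For part (ii): if $\pi$ sends every $\Theta'$-pair of $\mathbf t^{(n)}$ order-preservingly (in the inverse sense above) and $\nu$ does likewise for $\mathbf t^{(n)}_\pi$, then composing and chasing indices as in (i) shows $\varphi^{-1}=\nu^{-1}\pi^{-1}$ is order-preserving on every $\Theta'$-pair of $\mathbf t^{(n)}$; this is the transitivity of ``order is preserved'', nothing more. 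For part (iii): part (i) gives a bijection $S_n^1(\mathbf t^{(n)})\to S_n^1(\mathbf t^{(n)}_\pi)$, $\nu\mapsto$ (something like) $\pi^{-1}\nu$ — more precisely one checks, using (i) and (ii), that right translation by a fixed element of $S_n^1(\mathbf t^{(n)})$ carries $S_n^1(\mathbf t^{(n)})$ \emph{onto} $S_n^1(\mathbf t^{(n)}_\pi)$, so the two sets have equal cardinality, i.e.\ $c_n(\mathbf t^{(n)})=c_n(\mathbf t^{(n)}_\pi)$.

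The main obstacle is not conceptual but notational: keeping straight the two layers of inverses (the definition uses $R_{\pi^{-1}}$, and $S_n^1$ is stated via $\pi^{-1}$, while the $\Theta'$-criterion is most naturally phrased on the permutation acting on \emph{positions}), and correctly tracking how an index pair $(i,j)$ for $\mathbf t^{(n)}$ corresponds to an index pair for $\mathbf t^{(n)}_\pi$. I would set up one clean piece of notation at the outset — call a permutation $\rho$ \emph{admissible for} $\mathbf t^{(n)}$ if $\rho(i)<\rho(j)$ whenever $i<j$ and $(t_i,t_j)\in\Theta'$ — prove the one-line equivalence ``$\pi\in S_n^1(\mathbf t^{(n)})\iff\pi^{-1}$ admissible for $\mathbf t^{(n)}$'', and then phrase (i), (ii), (iii) entirely in terms of admissibility, where the arguments become short index chases. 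I expect no genuinely hard step once that translation is in place.
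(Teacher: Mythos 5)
Your proposal is correct and follows essentially the same route as the paper: both arguments reduce everything to tracking which inverted index pairs land in $\Theta'$ (using that $\Theta'$ is symmetric because $R$ is Hermitian), and part (iii) is obtained identically from (i) and (ii) via the two translation maps $\nu\mapsto\pi\nu$ and $\mu\mapsto\pi^{-1}\mu$. Your ``admissibility'' reformulation merely repackages the paper's identity $R_{\pi^{-1}}(\mathbf t^{(n)})=\overline{R_\pi(\mathbf t^{(n)}_\pi)}$ for (i) and turns the paper's proof-by-contradiction case analysis for (ii) into a direct transitivity statement; the only nitpick is that the bijection in (iii) is \emph{left} translation by $\pi^{-1}$, not right translation, though the map you write down is the correct one.
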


\begin{proof} (i) By \eqref{utf7ir},
\begin{equation}\label{gt87t5}
R_{\pi^{-1}}(\mathbf t^{(n)})=\overline{R_\pi(\mathbf t_\pi^{(n)})}.\end{equation}
From here the statement follows.

(ii) Assume that $\vp \notin S_n^1(\mathbf t^{(n)})$.
Then there exist $i<j$  such that $\vp^{-1}(i)>\vp^{-1}(j)$ and
$R(t_i, t_j)=0$. Let us consider two cases.

{\it Case 1: $\pi^{-1}(i)>\pi^{-1}(j)$}. But then \eqref{utf7ir} implies that $R_{\pi^{-1}}(\mathbf t^{(n)})=0$, hence $\pi\not \in  S_n^1(\mathbf t^{(n)})$, which is a contradiction.

{\it Case 2: $\pi^{-1}(i)<\pi^{-1}(j)$}. We then have
$$\nu^{-1}(\pi^{-1}(i))=\vp^{-1}(i)>\vp^{-1}(j)=\nu^{-1}(\pi^{-1}(j)).$$  By \eqref{utf7ir},
$$R_{\nu^{-1}}(\mathbf t_\pi^{(n)}):=\!\!\! \prod_{\substack{ 1\le a < b \le n\\[1mm]
\nu^{-1}(a) > \nu^{-1}(b)}}
\!\!\! R(t_{\pi(a)}, t_{\pi(b)}).$$
Choose $a=\pi^{-1}(i)$ and $b=\pi^{-1}(j)$. Then $a<b$, $\nu^{-1}(a)>\nu^{-1}(b)$, and
$$R(t_{\pi(a)}, t_{\pi(b)})=R(t_i,t_j)=0.$$
Therefore, $R_{\nu^{-1}}(\mathbf t_\pi^{(n)})=0$, which implies $\nu\not\in S_n^1(\mathbf t^{(n)}_{\pi})$.  This is again a contradiction. Thus, we must have $\vp \in S_n^1(\mathbf t^{(n)})$.

(iii) By part (ii), if $\nu\in S_n^1(\mathbf t_\pi^{(n)})$, then $\pi\nu\in S_n^1(\mathbf t^{(n)})$. Hence, $c_n(\mathbf t_\pi^{(n)})\le c_n(\mathbf t^{(n)})$.
On the other hand, by part (i),  $\pi^{-1}\in S_n^1(\mathbf t_\pi^{(n)})$. Hence, by part (i), if $\mu\in S_n^1(\mathbf t^{(n)})$ then $\pi^{-1}\mu\in S_n^1(\mathbf t^{(n)}_\pi)$. Hence, $c_n(\mathbf t^{(n)})\le c_n(\mathbf t_\pi^{(n)})$.
\end{proof}

 We first show that the operator $\mathbb P_n$ is self-adjoint.
By \eqref{rdu6ed6e}--\eqref{R-symmetrization}, we can write the operator $\mathbb P_n$ in the form
$$(\mathbb P_n f^{(n)})(\mathbf t^{(n)})
= \frac{1}{c_n(\mathbf t^{(n)})} \sum_{\pi\in S_n}
{R}_{\pi^{-1}}(\mathbf t^{(n)})f^{(n)}(\mathbf t^{(n)}_{\pi}).$$
Hence using Lemma~\ref{relations L1}, (iii) and \eqref{gt87t5}, we get, for any $f^{(n)},g^{(n)}\in\mathcal H^{\otimes n}$,
\begin{align}
&(\mathbb P_n f^{(n)},g^{(n)})_{\mathcal H^{\otimes n}}=\sum_{\pi\in S_n}\int_{T^{(n)}}\frac{1}{c_n(\mathbf t^{(n)})}\,\overline{
R_{\pi^{-1}}(\mathbf t^{(n)})f^{(n)}(\mathbf t_\pi^{(n)})}\,g^{(n)}(\mathbf t^{(n)})\,\sigma^{\otimes n}(d\mathbf t^{(n)})\notag\\
&\quad=\sum_{\pi\in S_n}\int_{T^{(n)}}\frac{1}{c_n(\mathbf t_{\pi^{-1}}^{(n)})}\,\overline{
R_{\pi^{-1}}(\mathbf t_{\pi^{-1}}^{(n)})f^{(n)}(\mathbf t^{(n)})}\,g^{(n)}(\mathbf t_{\pi^{-1}}^{(n)})\,\sigma^{\otimes n}(d\mathbf t^{(n)})\notag\\
&\quad=\sum_{\pi\in S_n}\int_{T^{(n)}}\frac{1}{c_n(\mathbf t_{\pi}^{(n)})}\,\overline{
R_{\pi}(\mathbf t_{\pi}^{(n)})f^{(n)}(\mathbf t^{(n)})}\,g^{(n)}(\mathbf t_{\pi}^{(n)})\,\sigma^{\otimes n}(d\mathbf t^{(n)})\notag\\
&\quad=\sum_{\pi\in S_n}\int_{T^{(n)}}\frac{1}{c_n(\mathbf t_{\pi}^{(n)})}R_{\pi^{-1}}(\mathbf t^{(n)})\,\overline{
f^{(n)}(\mathbf t^{(n)})}\,g^{(n)}(\mathbf t_{\pi}^{(n)})\,\sigma^{\otimes n}(d\mathbf t^{(n)})\notag\\
&\quad=\int_{T^{(n)}}\overline{
f^{(n)}(\mathbf t^{(n)})}\,\sum_{\pi\in S_n^1(\mathbf t^{(n)})}\frac{1}{c_n(\mathbf t_{\pi}^{(n)})}R_{\pi^{-1}}(\mathbf t^{(n)})\,g^{(n)}(\mathbf t_{\pi}^{(n)})\,\sigma^{\otimes n}(d\mathbf t^{(n)})\notag\\
&\quad=\int_{T^{(n)}}\overline{
f^{(n)}(\mathbf t^{(n)})}\,\frac{1}{c_n(\mathbf t^{(n)})}\sum_{\pi\in S_n^1(\mathbf t^{(n)})}R_{\pi^{-1}}(\mathbf t^{(n)})\,g^{(n)}(\mathbf t_{\pi}^{(n)})\,\sigma^{\otimes n}(d\mathbf t^{(n)})\notag\\
&\quad=( f^{(n)},\mathbb P_ng^{(n)})_{\mathcal H^{\otimes n}}.
\end{align}
Thus, $\mathbb P_n^*=\mathbb P_n$.

Our next aim is to prove that $\mathbb P_n^2=\mathbb P$, which will imply that $\mathbb P_n$ is an orthogonal projection in $\hcm$.
For $f^{(n)}\in\hcm$, we have, by Lemma~\ref{relations L1}, (ii) and (iii),
\begin{align}
(\mathbb P_n^2 f^{(n)})(\mathbf t^{(n)})&=\frac1{c_n(\mathbf t^{(n)})}\sum_{\pi\in S_n^1(\mathbf t^{(n)})}\frac{1}{c_n(\mathbf t_\pi^{(n)})}\sum_{\nu\in S_n^1(\mathbf t^{(n)}_\pi)} (\Phi_\pi\Phi_\nu f^{(n)})(\mathbf t^{(n)})\notag\\
&=\frac1{c_n(\mathbf t^{(n)})^2}\sum_{\pi\in S_n^1(\mathbf t^{(n)})}\sum_{\nu\in S_n^1(\mathbf t^{(n)}_\pi)} (\Phi_\pi\Phi_\nu f^{(n)})(\mathbf t^{(n)})\notag\\
&=\frac1{c_n(\mathbf t^{(n)})^2}\sum_{\varphi\in S_n^1(\mathbf t^{(n)})}
\sum_{\substack{{\pi\in S_n^1(\mathbf t^{(n)}),\, \nu\in S_n^1(\mathbf t^{(n)}_\pi)}\\\pi\nu=\varphi}}(\Phi_\pi\Phi_\nu f^{(n)})(\mathbf t^{(n)}).\label{tyde6e68}
\end{align}
Let $\varphi\in S_n^1(\mathbf t^{(n)})$ and $\pi\in S_n^1(\mathbf t^{(n)})$.
By  Lemma~\ref{relations L1}, (i), we have $\pi^{-1}\in S_n^1(\mathbf t_\pi^{(n)})$. Hence, by  Lemma~\ref{relations L1}, (ii), we get $\nu:=\pi^{-1}\varphi\in S_n^1(\mathbf t_\pi^{(n)})$. From here and \eqref{tyde6e68} we get:
\begin{equation}\label{uyf7r}
(\mathbb P_n^2 f^{(n)})(\mathbf t^{(n)})=
\frac1{c_n(\mathbf t^{(n)})^2}\sum_{\varphi\in S_n^1(\mathbf t^{(n)})}
\sum_{\pi\in S_n^1(\mathbf t^{(n)})}(\Phi_\pi\Phi_{\pi^{-1}\varphi} f^{(n)})(\mathbf t^{(n)}).
\end{equation}

\begin{lemma}\label{tydf6ir5}
Let $\mathbf t^{(n)}\in T^{(n)}$,  and $\pi\in S_n^1(\mathbf t^{(n)})$, and $\nu \in S_n^1(\mathbf t_\pi^{(n)})$. Then, for each $f^{(n)}\in\hcm$,
\begin{equation}\label{egyfoqegot8}
(\Phi_\pi\Phi_{\nu} f^{(n)})(\mathbf t^{(n)})=(\Phi_{\pi\nu} f^{(n)})(\mathbf t^{(n)})
\end{equation}
\end{lemma}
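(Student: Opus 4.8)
The plan is to reduce the operator identity \eqref{egyfoqegot8} to a scalar identity between products of values of $R$, and then to verify that scalar identity pair-by-pair; the hypothesis $\pi\in S_n^1(\mathbf t^{(n)})$ enters at exactly one point (in fact it is the only one of the two hypotheses used).

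First I would record the composition rule $(\mathbf t^{(n)}_\pi)_\nu=\mathbf t^{(n)}_{\pi\nu}$: writing $\mathbf s=\mathbf t^{(n)}_\pi$, one has $s_k=t_{\pi(k)}$, hence $(\mathbf s_\nu)_k=s_{\nu(k)}=t_{\pi(\nu(k))}$, which is the $k$-th component of $\mathbf t^{(n)}_{\pi\nu}$. Applying formula \eqref{rdu6ed6e} twice (first to $\Phi_\nu$, then to $\Phi_\pi$) and using $(\pi\nu)^{-1}=\nu^{-1}\pi^{-1}$, both sides of \eqref{egyfoqegot8} become multiples of $f^{(n)}(\mathbf t^{(n)}_{\pi\nu})$:
\[
(\Phi_\pi\Phi_{\nu}f^{(n)})(\mathbf t^{(n)})=R_{\pi^{-1}}(\mathbf t^{(n)})\,R_{\nu^{-1}}(\mathbf t^{(n)}_\pi)\,f^{(n)}(\mathbf t^{(n)}_{\pi\nu}),\qquad (\Phi_{\pi\nu}f^{(n)})(\mathbf t^{(n)})=R_{(\pi\nu)^{-1}}(\mathbf t^{(n)})\,f^{(n)}(\mathbf t^{(n)}_{\pi\nu}).
\]
So the lemma follows once one shows the scalar identity $R_{\pi^{-1}}(\mathbf t^{(n)})\,R_{\nu^{-1}}(\mathbf t^{(n)}_\pi)=R_{(\pi\nu)^{-1}}(\mathbf t^{(n)})$ for every $\mathbf t^{(n)}$ with $\pi\in S_n^1(\mathbf t^{(n)})$.

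To prove this, I would group the factors of $R$ on each side according to the unordered pair $\{i,j\}$, $i<j$, that they involve; by the definitions of $\Theta$, $\Theta'$ and $R$, each factor is $R(t_i,t_j)$ or $R(t_j,t_i)=\overline{R(t_i,t_j)}$, with $|R(t_i,t_j)|\in\{0,1\}$. Rewriting $R_{\nu^{-1}}(\mathbf t^{(n)}_\pi)=\prod_{a<b,\ \nu^{-1}(a)>\nu^{-1}(b)}R(t_{\pi(a)},t_{\pi(b)})$ by \eqref{utf7ir}, and observing that the pair $\{i,j\}$ enters this product through the positions $\{a,b\}=\{\pi^{-1}(i),\pi^{-1}(j)\}$, a short check shows that, with $P:=[\pi^{-1}(i)>\pi^{-1}(j)]$ and $W:=[(\pi\nu)^{-1}(i)>(\pi\nu)^{-1}(j)]$, the contribution of $\{i,j\}$ to the left-hand side equals $1$ if $(P,W)=(0,0)$, equals $R(t_i,t_j)$ if $(P,W)\in\{(0,1),(1,1)\}$, and equals $R(t_i,t_j)R(t_j,t_i)=|R(t_i,t_j)|^2$ if $(P,W)=(1,0)$; whereas its contribution to the right-hand side equals $R(t_i,t_j)$ if $W=1$ and $1$ if $W=0$. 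Hence the two sides agree on every pair, except a priori when $(P,W)=(1,0)$, where agreement is exactly the statement $|R(t_i,t_j)|=1$.

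This last case is the only real point, and it is where the hypothesis is used: if $P=1$, i.e.\ $\pi^{-1}(i)>\pi^{-1}(j)$, then $R(t_i,t_j)$ occurs as a factor of the product $R_{\pi^{-1}}(\mathbf t^{(n)})$; since $\pi\in S_n^1(\mathbf t^{(n)})$ means $|R_{\pi^{-1}}(\mathbf t^{(n)})|=1$, and a finite product of numbers each of modulus $0$ or $1$ has modulus $1$ only if none of the factors vanishes, we conclude $|R(t_i,t_j)|=1$. Thus the bad case reduces to $1=1$, and multiplying the (now matching) contributions over all $1\le i<j\le n$ gives the scalar identity, hence \eqref{egyfoqegot8}. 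The only mildly delicate step is the bookkeeping in the preceding paragraph — in particular, using $R(t_j,t_i)=\overline{R(t_i,t_j)}$ to keep track of the order in which each pair enters $R_{\nu^{-1}}(\mathbf t^{(n)}_\pi)$; the invocation of the hypothesis is then a one-line observation.
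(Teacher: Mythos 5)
Your proof is correct, and it takes a genuinely different route from the paper's. Both arguments begin by reducing \eqref{egyfoqegot8} to the scalar identity $R_{\pi^{-1}}(\mathbf t^{(n)})R_{\nu^{-1}}(\mathbf t^{(n)}_\pi)=R_{\nu^{-1}\pi^{-1}}(\mathbf t^{(n)})$, but from there the paper proceeds structurally: it first uses Lemma \ref{relations L1}\,(ii) to conclude that all three quantities have modulus one, then introduces the auxiliary unimodular Hermitian function $G$ (equal to $R$ on $\Theta$ and to $1$ on $\Theta'$), observes that the associated operators $\Gamma_\pi$ satisfy the braid relations together with $\Gamma_j^2=\mathbf 1$ and hence form a unitary representation of $S_n$ (citing \cite{CM}), deduces the multiplicative identity for $G$ at \emph{every} point, and finally transfers it back to $R$ because none of the relevant $R$-factors vanish. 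You instead verify the $R$-identity directly by matching, for each unordered pair $\{i,j\}$, the contributions to the two sides according to the pair of inversion indicators $(P,W)$; your bookkeeping (including the use of $R(t_j,t_i)=\overline{R(t_i,t_j)}$ in the case $(P,W)=(1,1)$ versus $(1,0)$) checks out, and the only nontrivial case $(P,W)=(1,0)$ is correctly resolved by noting that $P=1$ forces $R(t_i,t_j)$ to be a factor of $R_{\pi^{-1}}(\mathbf t^{(n)})$, whence $|R(t_i,t_j)|=1$. Your approach is more elementary and self-contained — it avoids the group-representation argument and the external reference, and it exposes the slightly stronger fact that only the hypothesis $\pi\in S_n^1(\mathbf t^{(n)})$ is needed for the scalar identity — while the paper's approach is shorter on the page because it reuses machinery (braid relations, the Coxeter presentation of $S_n$) already established elsewhere in the article.
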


\begin{proof} We first note that equality \eqref{egyfoqegot8} explicitly means that
\begin{equation}
R_{\pi^{-1}}(\mathbf t^{(n)})R_{\nu^{-1}}(\mathbf t^{(n)}_\pi)f^{(n)}(\mathbf t^{(n)}_{\pi\nu })=R_{\nu^{-1}\pi^{-1}}(\mathbf t^{(n)})f^{(n)}(\mathbf t^{(n)}_{\pi\nu }),\notag
\end{equation}
which is equivalent to the equality
\begin{equation}\label{yutr7i5}
R_{\pi^{-1}}(\mathbf t^{(n)})R_{\nu^{-1}}(\mathbf t^{(n)}_\pi)=R_{\nu^{-1}\pi^{-1}}(\mathbf t^{(n)}).
\end{equation}
Since $\pi\in S_n^1(\mathbf t^{(n)})$, $\nu \in S_n^1(\mathbf t_\pi^{(n)})$, and $\pi\nu\in S_n^1(\mathbf t^{(n)})$, we have
\begin{equation}\label{yutro78t6}
|R_{\pi^{-1}}(\mathbf t^{(n)})|=1,\quad |R_{\nu^{-1}}(\mathbf t^{(n)}_\pi)|=1,\quad |R_{\nu^{-1}\pi^{-1}}(\mathbf t^{(n)})|=1.\end{equation}

We define a Hermitian function $G:T^{(2)}\to\mathbb C$ by
\begin{equation}
G(s,t):=\begin{cases}\label{yuro86ro8}
R(s,t),&\text{if }|R(s,t)|=1,\\
1,&\text{if }R(s,t)=0.\end{cases}\end{equation}
For each $\pi\in S_n$, similarly to the operator $\Psi_\pi$ defined for the function $Q$ and to the operator $\Phi_\pi$ defined for the function $R$, we define an operator $\Gamma_\pi:\hcm\to\hcm$ for the function $G$. Thus,
$$
(\Gamma_\pi f^{(n)})(\mathbf t^{(n)})=G_{\pi^{-1}}(\mathbf t^{(n)})f^{(n)}(\mathbf t^{(n)}_\pi),$$
where
\begin{equation}\label{uygt8ot8}
G_{\pi}(\mathbf t^{(n)}):=\!\!\! \prod_{\substack{ 1\le i < j \le n\\
\pi(i) > \pi(j)}}
\!\!\! G(t_i, t_j),\quad \mathbf t^{(n)}\in T^{(n)}.\end{equation}
For an adjacent  transposition $\pi_j=(j,j+1)$, we denote $\Gamma_j:=\Gamma_{\pi_j}$. By Lemma~\ref{ioho90u8y8}, the operators $\Gamma_j$ satisfy the braid relations. Furthermore, since $|G(s,t)|=1$ for all $(s,t)\in T^{(2)}$, we get $\Gamma_j^2=\mathbf 1$. Using e.g.\ \cite{CM}, we therefore conclude that the operators $\Gamma_\pi$ with $\pi\in S_n$ form a unitary representation of $S_n$, i.e., for any $\pi,\nu\in S_n$, it holds that $\Gamma_\pi\Gamma_\nu=\Gamma_{\pi\nu}$, and in fact, for all  $\mathbf t^{(n)}\in T^{(n)}$,
\begin{equation}\label{yt65060}
G_{\pi^{-1}}(\mathbf t^{(n)})G_{\nu^{-1}}(\mathbf t^{(n)}_\pi)=G_{\nu^{-1}\pi^{-1}}(\mathbf t^{(n)}).
\end{equation}
But if $\mathbf t^{(n)}\in T^{(n)}$,   $\pi\in S_n^1(\mathbf t^{(n)})$, and $\nu \in S_n^1(\mathbf t_\pi^{(n)})$, then formulas \eqref{yutro78t6}--\eqref{yt65060} imply \eqref{yutr7i5}.
\end{proof}

Now formula \eqref{uyf7r} and Lemma~\ref{tydf6ir5} yield the equality
\begin{align}
(\mathbb P_n^2 f^{(n)})(\mathbf t^{(n)})&=
\frac1{c_n(\mathbf t^{(n)})^2}\sum_{\varphi\in S_n^1(\mathbf t^{(n)})}
\sum_{\pi\in S_n^1(\mathbf t^{(n)})}(\Phi_\varphi f^{(n)})(\mathbf t^{(n)})\notag\\
&=
\frac1{c_n(\mathbf t^{(n)})}\sum_{\varphi\in S_n^1(\mathbf t^{(n)})}
(\Phi_\varphi f^{(n)})(\mathbf t^{(n)})=(\mathbb P_n f^{(n)})(\mathbf t^{(n)}).
\label{tydyr}
\end{align}
Thus, $\mathbb P_n$ is an orthogonal projection in $\hcm$.

It remains to prove that $\Ran(\mathbb P_n)=\mathcal F_n(\mathcal H)$. Let $f^{(n)}\in \mathcal F_n(\mathcal H)$. Theorem~\ref{main theorem for range of pn} and the construction of the $\Phi_\pi$ operators imply that, for  $\sigma^{\otimes n}$-a.a.\ $\mathbf t^{(n)}\in T^{(n)}$ and for each $\pi\in  S_n^1(\mathbf t^{(n)})$, we have $(\Phi_\pi f^{(n)})(\mathbf t^{(n)})=f^{(n)}(\mathbf t^{(n)})$. Hence, by \eqref{R-symmetrization}, $\mathbb P_n f^{(n)}=f^{(n)}$, i.e., $f^{(n)}\in \Ran(\mathbb P_n)$.

Finally, we have to prove the inclusion $\Ran(\mathbb P_n)\subset \mathcal F_n(\mathcal H)$. This means that, for any $f^{(n)}\in\hcm$ and  $k\in\{1,\dots,n-1\}$,
\begin{equation}\label{uytr7i5ww}
(\Phi_k\mathbb P_n f^{(n)})(\mathbf t^{(n)})=(\mathbb P_n f^{(n)})(\mathbf t^{(n)})\quad \text{for $\sigma^{\otimes n}$-a.a.\ $\mathbf t^{(n)}\in T_k^{(n)}$}.
\end{equation}
The proof of  \eqref{uytr7i5ww} is similar to the proof of the equality $\mathbb P_n^2=\mathbb P$ (formulas~\eqref{tyde6e68}, \eqref{uyf7r}, and \eqref{tydyr}), so we omit it.
\end{proof}

\begin{proof}[Proof of Corllary \ref{uyt67ct}]
We start with the following lemma

\begin{lemma}
For each $n\in\mathbb N$, we have
\begin{align}
&\mathbb P_{n+1}(\mathbb P_n\otimes \mathbf 1)=\mathbb P_{n+1},\label{fyr65e}\\
&\mathbb P_{n+1}(\mathbf 1\otimes \mathbb P_n)=\mathbb P_{n+1}.\label{gyfg8q7}
\end{align}
\end{lemma}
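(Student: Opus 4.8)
The plan is to prove the two identities \eqref{fyr65e} and \eqref{gyfg8q7} and then bootstrap to the general case \eqref{R_{n-k}} by iterating them. The lemma as stated handles the extreme splittings $k=1$ (equation \eqref{gyfg8q7}, since $\mathbb P_1=\mathbf 1$) and $k=n$ (equation \eqref{fyr65e}); the general statement of Corollary~\ref{uyt67ct} requires $\mathbb P_n=\mathbb P_n(\mathbb P_k\otimes\mathbb P_{n-k})$ for arbitrary $k$, which I will deduce afterwards.

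First I would prove \eqref{fyr65e} and \eqref{gyfg8q7}. The cleanest route is to use the characterization of the range from Theorem~\ref{orthogonal projection}, namely $\Ran(\mathbb P_m)=\mathcal F_m(\mathcal H)=\overline{\Ran(\mathcal P_m)}$, combined with the fact that $\mathbb P_m$ is the \emph{orthogonal} projection onto this space. For \eqref{gyfg8q7}: it suffices to show that $\mathbf 1\otimes\mathbb P_n$ fixes the range of $\mathbb P_{n+1}$, i.e.\ that every $Q$-quasisymmetric $f^{(n+1)}$ satisfies $(\mathbf 1\otimes\mathbb P_n)f^{(n+1)}=f^{(n+1)}$. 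Indeed, if $f^{(n+1)}$ is $Q$-quasisymmetric in all variables, then for a.e.\ fixed $t_1$ the function $(t_2,\dots,t_{n+1})\mapsto f^{(n+1)}(t_1,t_2,\dots,t_{n+1})$ is $Q$-quasisymmetric in its $n$ arguments, hence is fixed by $\mathbb P_n$ by Theorem~\ref{orthogonal projection}; applying $\mathbb P_n$ in the last $n$ variables therefore does nothing. Since $\mathbb P_{n+1}$ is self-adjoint and $\mathbf 1\otimes\mathbb P_n$ is self-adjoint, the operator identity $\mathbb P_{n+1}(\mathbf 1\otimes\mathbb P_n)=\mathbb P_{n+1}$ follows: for any $g$, $(\mathbf 1\otimes\mathbb P_n)\mathbb P_{n+1}g=\mathbb P_{n+1}g$ because $\mathbb P_{n+1}g$ is in the range of $\mathbb P_{n+1}$, and then taking adjoints gives $\mathbb P_{n+1}(\mathbf 1\otimes\mathbb P_n)=((\mathbf 1\otimes\mathbb P_n)\mathbb P_{n+1})^*=\mathbb P_{n+1}^*=\mathbb P_{n+1}$. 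The argument for \eqref{fyr65e} is identical, slicing off the last variable instead of the first, and using that $Q$-quasisymmetry in the first $n$ of $n+1$ variables is preserved.

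Alternatively, one can prove \eqref{fyr65e}, \eqref{gyfg8q7} directly from the explicit formula \eqref{R-symmetrization} by a counting argument on cosets: embedding $S_n\hookrightarrow S_{n+1}$ as the stabilizer of $n+1$ (resp.\ of $1$), one checks that for $\mathbf t^{(n+1)}\in T^{(n+1)}$ the set $S_{n+1}^1(\mathbf t^{(n+1)})$ decomposes compatibly, and that the normalizing cardinalities $c_{n+1}$ and $c_n$ match up so that the double average over $S_{n+1}^1\times S_n^1$ collapses to the single average over $S_{n+1}^1$. This is essentially the same bookkeeping as in Lemma~\ref{relations L1} and the proof that $\mathbb P_n^2=\mathbb P_n$, so I would prefer the shorter range-characterization argument above.

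Finally, to obtain \eqref{R_{n-k}} for general $k$: iterate \eqref{fyr65e} and \eqref{gyfg8q7}. Writing $\mathbf 1_j$ for the identity on $\mathcal H^{\otimes j}$, one has $\mathbb P_n=\mathbb P_n(\mathbb P_{n-1}\otimes\mathbf 1_1)=\mathbb P_n(\mathbb P_{n-1}\otimes\mathbf 1_1)(\mathbf 1_{n-2}\otimes\mathbb P_2)=\dots$, and by repeatedly applying \eqref{fyr65e}/\eqref{gyfg8q7} within tensor factors (these lower-rank identities apply inside any block of consecutive variables, since $\mathbb P_m$ acts only on those variables) one reduces $\mathbb P_n$ to $\mathbb P_n(\mathbb P_k\otimes\mathbb P_{n-k})$. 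Concretely, $\mathbb P_n(\mathbb P_k\otimes\mathbb P_{n-k})=\mathbb P_n(\mathbf 1_k\otimes\mathbb P_{n-k})(\mathbb P_k\otimes\mathbf 1_{n-k})$, and applying \eqref{gyfg8q7}-type reductions (in the $k$-block and in the $(n-k)$-block) peels each symmetrizer off against $\mathbb P_n$; the point is that $\mathbb P_n(\mathbf 1_a\otimes\mathbb P_b\otimes\mathbf 1_c)=\mathbb P_n$ whenever $a+b+c=n$, which is itself proved by the same slicing argument as \eqref{gyfg8q7}. The main obstacle is the bookkeeping of which variables each operator acts on and making sure the iteration is written so that each step is a legitimate instance of the already-proved identities; the analytic content is entirely contained in the elementary slicing observation about $Q$-quasisymmetric functions.
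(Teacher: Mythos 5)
Your proof is correct, but it takes a genuinely different route from the paper's. The paper proves \eqref{fyr65e} by brute force on the explicit formula \eqref{R-symmetrization}: it decomposes $S_{n+1}^1(\mathbf t^{(n+1)})$ according to the value $i=\pi(n+1)$, identifies $\pi^{-1}\varphi$ with an embedded permutation $\nu\otimes\operatorname{id}$, and reduces everything to the counting identity $\sum_{\pi\in S_{n+1}^1(\mathbf t^{(n+1)}),\,\pi(n+1)=i}c_n(t_{\pi(1)},\dots,t_{\pi(n)})^{-1}=1$, proved via the cardinality equality $c_{n+1,i}(\mathbf t^{(n+1)})=c_n(t_{\pi(1)},\dots,t_{\pi(n)})$. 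You instead exploit what has already been established: $\mathbb P_{m}$ is the orthogonal projection onto the $Q$-quasisymmetric functions (Theorems~\ref{main theorem for range of pn} and~\ref{orthogonal projection}), quasisymmetry in all $n+1$ variables restricts, slice by slice, to quasisymmetry in the last (or first) $n$ variables, so $\mathbf 1\otimes\mathbb P_n$ acts as the identity on $\Ran(\mathbb P_{n+1})$, and self-adjointness of both projections converts $(\mathbf 1\otimes\mathbb P_n)\mathbb P_{n+1}=\mathbb P_{n+1}$ into the desired $\mathbb P_{n+1}(\mathbf 1\otimes\mathbb P_n)=\mathbb P_{n+1}$. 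Since both ingredients precede this lemma in the paper, there is no circularity. Your argument is shorter, avoids the coset bookkeeping entirely, and yields for free the more general identity $\mathbb P_n(\mathbf 1_a\otimes\mathbb P_b\otimes\mathbf 1_c)=\mathbb P_n$, which streamlines the passage to Corollary~\ref{uyt67ct}; the paper's computation, by contrast, is self-contained at the level of the pointwise formula and makes the normalizing constants $c_n$ visibly consistent. The only points you should spell out are the identification of $\mathbf 1\otimes\mathbb P_n$ with the slice-wise action of $\mathbb P_n$ under $\mathcal H^{\otimes(n+1)}\cong L^2(T\to\mathcal H^{\otimes n},\sigma)$, and the Fubini step showing that a.e.\ quasisymmetry of $f^{(n+1)}$ gives, for $\sigma$-a.e.\ $t_1$, a.e.\ quasisymmetry of the slice; both are routine.
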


\begin{proof}
We will only prove equality \eqref{fyr65e}, since the proof of \eqref{gyfg8q7} is similar. For a permutation $\nu\in S_n$, we denote by $\nu\otimes\operatorname{id}$ the permutation from $S_{n+1}$ defined by
$(\nu\otimes\operatorname{id})(i):=\nu(i)$ for $i\in\{1,\dots,n\}$ and $(\nu\otimes\operatorname{id})(n+1):=n+1$.
Analogously to the proof of Theorem \ref{orthogonal projection}, we get, for any $f^{(n+1)}\in\mathcal H^{\otimes(n+1)}$,
\begin{align}
&(\mathbb P_{n+1}(\mathbb P_n\otimes \mathbf 1)f^{(n+1)})(\mathbf t^{(n+1)})\notag\\
&\quad=\frac1{c_{n+1}(\mathbf t^{(n+1)})}\sum_{\pi\in S_{n+1}^1(\mathbf t^{(n+1)})}
\frac1{c_n(t_{\pi(1)},\dots,t_{\pi(n)})}\sum_{\nu\in S_n^1(t_{\pi(1)},\dots,t_{\pi(n)})}(\Phi_{\pi(\nu\otimes\operatorname{id})}f^{(n+1)})(\mathbf t^{(n+1)})\notag\\
&\quad=\frac1{c_{n+1}(\mathbf t^{(n+1)})}\sum_{i=1}^{n+1}
\sum_{\substack{\varphi\in S_{n+1}^1(\mathbf t^{(n+1)})
\\ \varphi(n+1)=i}}\sum_{\substack{\pi\in S_{n+1}^1(\mathbf t^{(n+1)})\\ \pi(n+1)=i\\\nu\in S_n^1(t_{\pi(1)},\dots,t_{\pi(n)})\\\pi(\nu\otimes\operatorname{id})=\varphi}}\frac1{c_n(t_{\pi(1)},\dots,t_{\pi(n)})}(\Phi_\varphi f^{(n+1)})(\mathbf t^{(n+1)}).\label{rrsrs}
\end{align}
Let $\varphi,\pi\in S_{n+1}^1(\mathbf t^{(n+1)})$ be such that $\pi(n+1)=\nu(n+1)=i$. Then $\nu':=\pi^{-1}\varphi\in S_{n+1}^1(\mathbf t^{(n+1)}_\pi)$ and $\nu'(n+1)=n+1$. Therefore, $\nu'=\nu\otimes\operatorname{id}$, where $\nu\in S_n^1(t_{\pi(1)},\dots,t_{\pi(n)})$. Hence, by \eqref{rrsrs},
\begin{align*}
&(\mathbb P_{n+1}(\mathbb P_n\otimes \mathbf 1)f^{(n+1)})(\mathbf t^{(n+1)})\notag\\
&\quad=\frac1{c_{n+1}(\mathbf t^{(n+1)})}\sum_{i=1}^{n+1}
\sum_{\substack{\varphi\in S_{n+1}^1(\mathbf t^{(n+1)})
\\ \varphi(n+1)=i}}
(\Phi_\varphi f^{(n+1)})(\mathbf t^{(n+1)})
\sum_{\substack{\pi\in S_{n+1}^1(\mathbf t^{(n+1)})\\ \pi(n+1)=i}}\frac1{c_n(t_{\pi(1)},\dots,t_{\pi(n)})}.
\end{align*}
Therefore, it is sufficient to prove that, for any $\mathbf t^{(n+1)}\in T^{(n+1)}$ and $i\in\{1,\dots,n+1\}$,
\begin{equation}\label{tyd6ed6}
\sum_{\substack{\pi\in S_{n+1}^1(\mathbf t^{(n+1)})\\ \pi(n+1)=i}}\frac1{c_n(t_{\pi(1)},\dots,t_{\pi(n)})}=1.
\end{equation}

To this end, we denote
$$S_{n,i}^1(\mathbf t^{(n+1)}):=\{\pi\in S_n^1(\mathbf t^{(n+1)}): \pi(n+1)=i\},$$
and let $c_{n+1,i}(\mathbf t^{(n+1)}):=|S_{n,i}^1(\mathbf t^{(n+1)})|$.
We state that, for any $\mathbf t^{(n+1)}\in T^{(n+1)}$ and
 $\pi\in S_{n,i}^1(\mathbf t^{(n+1)})$,
\begin{equation}\label{utf7rro}
c_{n+1,i}(\mathbf t^{(n+1)})=c_n(t_{\pi(1)},\dots,t_{\pi(n)}).
\end{equation}
Indeed, if $\nu\in S_n^1(t_{\pi(1)},\dots,t_{\pi(n)})$, then $\nu\otimes\operatorname{id}\in S_{n+1}^1(\mathbf t_\pi^{(n+1)})$. Therefore, $\pi(\nu\otimes\operatorname{id})\in S_{n+1}^1(\mathbf t^{(n+1)})$ and
$$\big(\pi(\nu\otimes\operatorname{id})\big)(n+1)=\pi(n+1)=i.$$
Hence, $\pi(\nu\otimes\operatorname{id})\in S_{n+1,i}^1(\mathbf t^{(n+1)})$. So
$c_n(t_{\pi(1)},\dots,t_{\pi(n)})\le c_{n+1,i}(\mathbf t^{(n+1)})$. On the other hand, take any $\varphi\in S_{n+1,i}^1(\mathbf t^{n+1})$. Let $\nu':=\pi^{-1}\varphi$. As shown above, $\nu'=\nu\otimes\operatorname{id}$, where $\nu\in S_n^1(t_{\pi(1)},\dots,t_{\pi(n)})$. Hence, $c_{n+1,i}(\mathbf t^{(n+1)})\le c_n(t_{\pi(1)},\dots,t_{\pi(n)})$, and formula  \eqref{utf7rro} is proven. Finally, formula \eqref{utf7rro}  implies \eqref{tyd6ed6}.
\end{proof}

\begin{lemma}\label{tyr75i45ir} For $k\in\mathbb N$, we denote by $\mathbf 1_k$ the identity operator in $\mathcal H^{\otimes k}$. Then, for each $n\ge 2$,
\begin{align}
\mathbb P_{n+k}(\mathbb P_n\otimes \mathbf 1_k)&=\mathbb P_{n+k},\label{jigr}\\
\mathbb P_{n+k}(\mathbf 1_k\otimes \mathbb P_n)&=\mathbb P_{n+k}.\label{tre4u6}
\end{align}
\end{lemma}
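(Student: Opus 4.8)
The plan is to obtain Lemma~\ref{tyr75i45ir} from the preceding lemma (identities \eqref{fyr65e} and \eqref{gyfg8q7}) by a straightforward induction on $k$. I will carry out the argument for \eqref{jigr}; the identity \eqref{tre4u6} then follows by the mirror argument, with \eqref{fyr65e} replaced by \eqref{gyfg8q7} and with the auxiliary identity tensored on the left rather than on the right.

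For $k=1$, identity \eqref{jigr} is precisely \eqref{fyr65e} (recall $\mathbf 1_1=\mathbf 1$). For the inductive step, assume \eqref{jigr} holds for some $k\ge1$ and all $n\ge2$, and fix $n\ge2$. First, applying \eqref{fyr65e} with $n$ there replaced by $n+k$ --- a legitimate substitution since $n+k\in\N$ --- gives
\[
\mathbb P_{n+k+1}(\mathbb P_{n+k}\otimes\mathbf 1)=\mathbb P_{n+k+1}.
\]
Second, the induction hypothesis says $\mathbb P_{n+k}=\mathbb P_{n+k}(\mathbb P_n\otimes\mathbf 1_k)$; tensoring this on the right with $\mathbf 1$ and using the rule $(A_1\otimes B_1)(A_2\otimes B_2)=(A_1A_2)\otimes(B_1B_2)$, we obtain
\[
\mathbb P_{n+k}\otimes\mathbf 1=(\mathbb P_{n+k}\otimes\mathbf 1)(\mathbb P_n\otimes\mathbf 1_{k+1}).
\]
Combining the two displays,
\[
\mathbb P_{n+k+1}=\mathbb P_{n+k+1}(\mathbb P_{n+k}\otimes\mathbf 1)=\mathbb P_{n+k+1}(\mathbb P_{n+k}\otimes\mathbf 1)(\mathbb P_n\otimes\mathbf 1_{k+1})=\mathbb P_{n+k+1}(\mathbb P_n\otimes\mathbf 1_{k+1}),
\]
the final equality being another application of \eqref{fyr65e} (with $n+k$ in place of $n$). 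This closes the induction and proves \eqref{jigr}. Running the same argument with \eqref{gyfg8q7} in place of \eqref{fyr65e}, i.e.\ tensoring the induction hypothesis $\mathbb P_{n+k}=\mathbb P_{n+k}(\mathbf 1_k\otimes\mathbb P_n)$ on the left with $\mathbf 1$, yields \eqref{tre4u6}.

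I do not anticipate any genuine difficulty here: the whole proof is bookkeeping with tensor factors, reduced to repeated use of the $k=1$ case. The one step deserving a moment's attention is the middle one, where after inserting the induction hypothesis one must recognize that $\mathbb P_{n+k}\otimes\mathbf 1$ factors as $(\mathbb P_{n+k}\otimes\mathbf 1)(\mathbb P_n\otimes\mathbf 1_{k+1})$, so that the preceding lemma applies directly to the left-hand factor.
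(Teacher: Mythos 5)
Your proof is correct and is essentially the same as the paper's: an induction on $k$ whose base case is \eqref{fyr65e} and whose inductive step combines \eqref{fyr65e} (applied with $n+k$ in place of $n$) with the tensor-factorization $(A_1\otimes B_1)(A_2\otimes B_2)=(A_1A_2)\otimes(B_1B_2)$ and the induction hypothesis. The paper merely writes the same chain of equalities in the reverse order, starting from $\mathbb P_{n+k}(\mathbb P_n\otimes\mathbf 1_k)$ and reducing it to $\mathbb P_{n+k}$.
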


\begin{proof}We will again only prove the first formula, \eqref{jigr}, the proof of \eqref{tre4u6} being similar.
 We prove \eqref{jigr}  by induction on $k$. For $k=1$, formula \eqref{jigr} becomes  \eqref{fyr65e}. Let $k\ge2$ and assume that formula \eqref{fyr65e} holds for  $k-1$. We then have
\begin{align*}
\mathbb P_{n+k}(\mathbb P_n\otimes \mathbf 1_k)&=\mathbb P_{n+k}(\mathbb P_{n+k-1}\otimes\mathbf 1)(\mathbb P_n\otimes \mathbf 1_{k-1}\otimes\mathbf 1)\\
&=\mathbb P_{n+k}\big[\big(\mathbb P_{n+k-1}(\mathbb P_n\otimes\mathbf 1_{k-1})\big)\otimes\mathbf 1\big]\\
&=\mathbb P_{n+k}(\mathbb P_{n+k-1}\otimes\mathbf 1)=\mathbb P_{n+k}.
\end{align*}
\end{proof}

Using Lemma \ref{tyr75i45ir}, we get, for $n\ge2$ and $k\in\{1,\dots,n-1\}$
$$\mathbb P_n (\mathbb P_k\otimes \mathbb P_{n-k})=\mathbb P_n (\mathbb P_k\otimes\mathbf 1_{n-k})(\mathbf 1_k\otimes\mathbb P_{n-k})=\mathbb P_n(\mathbf 1_k\otimes\mathbb P_{n-k})=\mathbb P_n.
$$
\end{proof}

\begin{proof}[Proof of Proposition~\ref{gdrd6i}]
The result below was shown in the proof of Theorem~3.1 in \cite{bozejkospeicherMathAnn1994}.

\begin{lemma}[\cite{bozejkospeicherMathAnn1994}]\label{tyre6i4}
Let a bounded linear operator $\mathcal R_n:\mathcal H^{\otimes n}\to \mathcal H^{\otimes n}$ be defined by
\begin{equation}\label{cy7r5i}
\mathcal R_n:=\mathbf 1_n+\Psi_1+\Psi_1\Psi_2+\dots+\Psi_1\Psi_2\dotsm \Psi_{n-1}.\end{equation}
Then, for $n\in\mathbb N$,
\begin{equation}\label{tye57i}
(n+1)\mathcal P_{n+1}=(\mathbf 1\otimes\mathcal P_n)\mathcal R_{n+1}.
\end{equation}
\end{lemma}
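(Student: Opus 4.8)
The plan is to expand both sides as weighted sums over the symmetric group and match them through a coset decomposition of $S_{n+1}$. Since $\mathcal P_{n+1}=\frac{1}{(n+1)!}\sum_{\pi\in S_{n+1}}\Psi_\pi$, the left-hand side equals $\frac{1}{n!}\sum_{\pi\in S_{n+1}}\Psi_\pi$. For the right-hand side, I would first identify $\mathbf 1\otimes\mathcal P_n$ with $\frac{1}{n!}\sum_{h\in H}\Psi_h$, where $H\subset S_{n+1}$ is the subgroup generated by $\pi_2,\dots,\pi_n$ (equivalently, the stabilizer of the point $1$); this uses only that the $j$-th adjacent transposition acting on the last $n$ tensor factors of $\mathcal H^{\otimes(n+1)}$ is $\pi_{j+1}$ on $\mathcal H^{\otimes(n+1)}$. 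Next, put $\delta_0:=\operatorname{id}$ and $\delta_k:=\pi_1\pi_2\cdots\pi_k$ for $1\le k\le n$; this is a reduced word, and $\delta_k$ is the $(k+1)$-cycle $(1,2,\dots,k+1)$, of length $k$. By well-definedness of $\pi\mapsto\Psi_\pi$ (Proposition \ref{R_pi(f) formula on tensors}), $\Psi_{\delta_k}=\Psi_1\Psi_2\cdots\Psi_k$, so $\mathcal R_{n+1}=\sum_{k=0}^{n}\Psi_{\delta_k}$ and hence $(\mathbf 1\otimes\mathcal P_n)\mathcal R_{n+1}=\frac{1}{n!}\sum_{h\in H}\sum_{k=0}^{n}\Psi_h\Psi_{\delta_k}$. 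The statement then reduces to: $(h,k)\mapsto h\delta_k$ is a bijection $H\times\{0,\dots,n\}\to S_{n+1}$ with $|h\delta_k|=|h|+|\delta_k|$. Given this length additivity, concatenating a reduced word for $h$ with $\pi_1\cdots\pi_k$ produces a reduced word for $h\delta_k$, so $\Psi_h\Psi_{\delta_k}=\Psi_{h\delta_k}$, once more by well-definedness of $\Psi_\bullet$.

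The combinatorial heart is thus the assertion that $\delta_0,\dots,\delta_n$ is a complete set of minimal-length representatives for the cosets $H\pi$. I would verify this in two steps. First, $\delta_k^{-1}(1)=k+1$, and two permutations lie in the same coset $H\pi$ precisely when $\pi^{-1}(1)={\pi'}^{-1}(1)$; since $[S_{n+1}:H]=n+1$, the $n+1$ elements $\delta_0,\dots,\delta_n$ already hit every coset, exactly once. Second, using the standard descent criterion $|\pi_i w|>|w|\iff w^{-1}(i)<w^{-1}(i+1)$ and the explicit values $\delta_k^{-1}(m)=m-1$ for $2\le m\le k+1$, $\delta_k^{-1}(1)=k+1$, $\delta_k^{-1}(m)=m$ for $m>k+1$, one checks $|\pi_i\delta_k|>|\delta_k|$ for every generator index $i\in\{2,\dots,n\}$ of $H$. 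By the standard fact on minimal coset representatives of parabolic subgroups of Coxeter groups, this yields both that $\delta_k$ is the shortest element of $H\delta_k$ and that $|h\delta_k|=|h|+|\delta_k|$ for all $h\in H$. Consequently $S_{n+1}=\bigsqcup_{k=0}^{n}H\delta_k$ with lengths adding, and
$$(\mathbf 1\otimes\mathcal P_n)\mathcal R_{n+1}=\frac{1}{n!}\sum_{h\in H}\sum_{k=0}^{n}\Psi_{h\delta_k}=\frac{1}{n!}\sum_{\pi\in S_{n+1}}\Psi_\pi=\frac{(n+1)!}{n!}\,\mathcal P_{n+1}=(n+1)\mathcal P_{n+1}.$$

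I expect the main obstacle to be bookkeeping rather than anything conceptual: keeping the index shift straight when translating "$\mathcal P_n$ on the last $n$ slots" into "sum over $H=\langle\pi_2,\dots,\pi_n\rangle$", and nailing down the identity $\Psi_{hw}=\Psi_h\Psi_w$ whenever $|hw|=|h|+|w|$ — the place where the braid relations (Lemma \ref{ioho90u8y8}) are really used, since they are exactly what makes $\Psi_\bullet$ depend only on the permutation and not on its chosen reduced word. The coset count, the descent computation for the cycles $\delta_k$, and the parabolic-representative fact are all routine. (One could instead argue by induction on $n$, peeling off the first tensor slot, but the coset decomposition seems more transparent and self-contained.)
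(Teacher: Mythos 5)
Your argument is correct and complete: the right-coset decomposition $S_{n+1}=\bigsqcup_{k=0}^{n}H\delta_k$ with the length additivity $|h\delta_k|=|h|+|\delta_k|$, combined with the well-definedness of $\pi\mapsto\Psi_\pi$ on reduced words (which lets you concatenate reduced words to get $\Psi_h\Psi_{\delta_k}=\Psi_{h\delta_k}$), is exactly what is needed, and your descent check for the cycles $\delta_k$ and the coset count via $\delta_k^{-1}(1)=k+1$ are both right. The paper itself offers no proof of this lemma --- it only points to the proof of Theorem~3.1 in \cite{bozejkospeicherMathAnn1994} --- and your argument is essentially the one found there (the length-additive coset decomposition of $S_{n+1}$ over the parabolic subgroup stabilizing the first letter), so there is nothing to reconcile.
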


Analogously to $\mathcal F_{\mathrm{fin}}(\mathcal H)$ we define a linear space $\mathbf F_{\mathrm{fin}}(\mathcal H)$ that consists of finite sequence $(f^{(0)},f^{(1)},\dots,f^{(n)},\dots)$ with $f^{(i)}\in\mathcal H^{\otimes i}$. For $h\in\mathcal H$, we define a linear operator $A^-(h):\mathbf F_{\mathrm{fin}}(\mathcal H)\to\mathbf F_{\mathrm{fin}}(\mathcal H) $ by setting
$$ (A^-(h)f^{(n)})(t_1,\dots,t_{n-1}):=\int_T \overline{h(s)}f^{(n)}(s,t_1,\dots,t_{n-1})\,\sigma(ds)$$
for $f^{(n)}\in\mathcal H^{\otimes n}$, $n\in\mathbb N$, and $A^-(h)(1,0,0,\dots):=0$.

Let $f^{(n)}\in\mathcal F_n(\mathcal H)$,  $g^{(n+1)}\in\mathcal H^{\otimes (n+1)}$, and $h\in\mathcal H$. Then, by \eqref{uf7re7ic} and Lemmas~\ref{dryde6644e36} and \ref{tyre6i4}, we get
\begin{align}
&\left(a^+(h)f^{(n)},\mathbb P_{n+1}g^{(n+1)}\right)_{\mathcal F_{n+1}(\mathcal H)}(n+1)!\notag\\
&\quad=\left(\mathcal P_{n+1}\mathbb P_{n+1} (h\otimes f^{(n)}), \mathbb P_{n+1}g^{(n+1)}\right)_{\mathcal H^{\otimes(n+1)}}(n+1)!\notag\\
&\quad=\left(\mathcal P_{n+1} (h\otimes f^{(n)}), g^{(n+1)}\right)_{\mathcal H^{\otimes(n+1)}}(n+1)!\notag\\
&\quad=\left(\mathcal R_{n+1}^*(\mathbf 1\otimes\mathcal P_n) (h\otimes f^{(n)}), g^{(n+1)}\right)_{\mathcal H^{\otimes(n+1)}}n!\notag\\
&\quad =\left(h\otimes (\mathcal P_n f^{(n)}), \mathcal R_{n+1} g^{(n+1)}\right)_{\mathcal H^{\otimes(n+1)}}n!\notag\\
&\quad=\left(\mathcal P_n f^{(n)},A^-(h)  \mathcal R_{n+1} g^{(n+1)}\right)_{\hcm} n!\notag\\
&\quad=\left( f^{(n)},\mathbb P_nA^-(h)  \mathcal R_{n+1} g^{(n+1)}\right)_{\mathcal F_n(\mathcal H)} n!\,.\notag
\end{align}
 From here both formulas \eqref{d6ueu6eu6} and  \eqref{ty456ew} follow.
\end{proof}

\begin{rem}\label{yuftut8}Note that formula \eqref{ty456ew} can now be written in the form
\begin{equation}\label{uy96y9p}
a^-(h)\mathbb P_n g^{(n)}=\mathbb P_{n-1}A^-(h)\mathcal R_n g^{(n)}
\end{equation}
for $h\in\mathcal H$ and $g^{(n)}\in\hcm$.
\end{rem}

\begin{proof}[Proof of Theorem~\ref{Q-CR}]
By choosing an orthonormal basis $(e_n)_{n\in\mathbb N}$ of $\mathcal H$ and writing the infinite matrix of the operator $\Psi$ (see \eqref{tye654i}) in terms of the orthonormal basis $(e_n\otimes e_m)_{n,m\in\mathbb N}$ of $\mathcal H^{\otimes 2}$, one can derive the commutation relation \eqref{byr75ro5} from Section~3 of \cite{bozejkospeicherMathAnn1994}. For the reader's convenience, we will now present a complete  proof of this commutation relation without use of an orthonormal basis.

Let $g,h\in\mathcal H$ and $f^{(n)}\in\mathcal F_n(\mathcal H)$. By formulas \eqref{uf7re7ic} and \eqref{uy96y9p},
 we get
\begin{equation}
a^-(g)a^+(h)f^{(n)}=\mathbb P_n A^-(g)\mathcal R_{n+1}(h\otimes f^{(n)}).\label{vgtyf76ir5ei}
\end{equation}
By \eqref{cy7r5i},
\begin{equation}\label{eq4tqpup}
\mathcal R_{n+1}=\mathbf 1_{n+1}+\Psi_1(\mathbf 1\otimes\mathcal R_{n}).\end{equation}
Formulas \eqref{vgtyf76ir5ei} and  \eqref{eq4tqpup} yield
\begin{equation}\label{tyre6e64sxsx}
a^-(g)a^+(h)f^{(n)}=(g,h)_{\mathcal H}f^{(n)}+
\mathbb P_n u^{(n)},
\end{equation}
where
$$
u^{(n)}:=A^-(g)\Psi_1\big(h\otimes(\mathcal R_n f^{(n)})\big).
$$
A direct calculation shows that
\begin{equation}\label{tur6574e}
u^{(n)}(t_1,\dots,t_n)=\int_T\sigma(ds)\, \overline{g(s)}\,h(t_1)Q(s,t_1)\big(\mathcal R_n f^{(n)}\big)(s,t_2,\dots,t_n).
\end{equation}

On the other hand, using additionally \eqref{gyfg8q7}, we get
\begin{equation}
a^+(h)a^-(g)f^{(n)}=\mathbb P_n\left(h\otimes
(\mathbb P_{n-1}A^-(g)\mathcal R_n f^{(n)})\right)
=\mathbb P_nv^{(n)},\label{tyr6e438w}
\end{equation}
where
\begin{equation*}
v^{(n)}:=h\otimes
(A^-(g)\mathcal R_n f^{(n)}).
\end{equation*}
Note that
\begin{equation}\label{uyrt67r}
v^{(n)}(t_1,\dots,t_n)=\int_T\sigma(ds)\, \overline{g(s)}\,h(t_1)\big(\mathcal R_n f^{(n)}\big)(s,t_2,\dots,t_n).\end{equation}
Formulas \eqref{tyre6e64sxsx}--\eqref{uyrt67r} prove \eqref{byr75ro5}.

Corollary~\ref{uyt67ct} and formula \eqref{rdrt} show that, for each $\varphi^{(2)}\in\mathcal H^{\otimes 2}$ and $f^{(n)}\in\mathcal F_n(\mathcal H)$,
\begin{equation}\label{drter6u}
\int_{T^2}\sigma(ds)\,\sigma(dt)\, \varphi^{(2)}(s,t)\,\partial_s^\dag\partial_t^\dag f^{(n)}=\mathbb P_{n+2}\big((\mathbb P_2\varphi^{(2)})\otimes f^{(n)}\big).
\end{equation}
By Theorem~\ref{main theorem for range of pn}, since $\varphi^{(2)}$ has support in $\Theta$, we get $\mathbb P_2\Psi\varphi^{(2)}=\mathbb P_2\varphi^{(2)}$. Hence, formulas \eqref{hyyt9} and \eqref{drter6u} imply
\begin{align*}
\int_{T^2}\sigma(ds)\,\sigma(dt)\, \varphi^{(2)}(s,t)\,\partial_s^\dag\partial_t^\dag f^{(n)}&=\int_{T^2}\sigma(ds)\,\sigma(dt)\, Q(s,t)\varphi^{(2)}(t,s)\,\partial_s^\dag\partial_t^\dag f^{(n)}\\
&=\int_{T^2}\sigma(ds)\,\sigma(dt)\, \varphi^{(2)}(s,t)Q(t,s)\,\partial_t^\dag\partial_s^\dag f^{(n)},
\end{align*}
 which gives \eqref{ft7r54}.

 Finally, formula \eqref{crte64ue}  is obtained by taking the adjoint operators on the left and right hand sides of formula  \eqref{ft7r54}, see \eqref{jgfruy7r}.
\end{proof}

\begin{proof}[Proof of Theorem \ref{Theorem exclusion principle}]
 Using Corollary \ref{uyt67ct}, we get, for each $f^{(n)}\in\mathcal F_n(\mathcal H)$ and $h\in\mathcal H$,
 \begin{equation*}\label{t7r7r}
 a^+(h)^mf^{(n)}=\mathbb P_{m+n}(h^{\otimes m}\otimes f^{(n)})=\mathbb P_{m+n}((\mathbb P_m h^{\otimes m})\otimes f^{(n)}).\end{equation*}
 Hence, it suffices to prove that $\mathbb P_m(h^{\otimes m})=0$.

Denote by $(e_t)_{t\in T}$ the canonical orthonormal basis in $\mathcal H=\ell^2(T\to\mathbb C)$, i.e.,  $e_t(s)=1$ if $s=t$ and $e_t(s)=0$ if $s\ne t$. In view of \eqref{tye654i}, we get
$$\Psi e_s\otimes e_t=Q(t,s)e_t\otimes e_s,\quad (s,t)\in T^2.$$
Note that the operators $(\Psi_\pi)_{\pi\in S_m}$ form a unitary representation of the group $S_m$, see the proof of Lemma~\ref{tydf6ir5}. Therefore, for each  $k\in\{1,\dots,m-1\}$, we have  $\mathbb P_m\Psi_k=\mathbb P_m$. Hence, for any $t_1,\dots,t_m\in T$,
\begin{equation*}
\mathbb P_m(e_{t_1}\otimes\dots\otimes e_{t_{k-1}}\otimes e_{t_{k+1}}\otimes e_{t_k}\otimes e_{t_{k+2}}\otimes\dots\otimes e_{t_m})
=Q(t_{k},t_{k+1})\mathbb P_n(e_{t_1}\otimes\dots\otimes e_{t_m}).\end{equation*}
This implies that
\begin{equation}\label{discrete fermionic property}
\mathbb P_m(e_{t_1}\otimes \dots \otimes e_{t_{m}})=0 \quad \text{if } \left|\{t_1, \dots , t_m\}\right|<m
\end{equation} (i.e., if some index $t_i$ appears twice or more times).
Analogously, for any $(t_1,\dots,t_m)\in T^m$ and $\pi\in S_m$,
\begin{equation}\label{ioy987}
\mathbb P_m(e_{t_{\pi(1)}}\otimes\dots\otimes e_{t_{\pi(m)}})=Q_\pi(t_1,\dots,t_m)\mathbb P_m(e_{t_1}\otimes \dots \otimes e_{t_{m}}).
\end{equation}

 Let $h=\sum_{t\in T}h_te_t\in\mathcal H$. We get, by \eqref{discrete fermionic property} and \eqref{ioy987},
\begin{align}
\mathbb P_m h^{\otimes m}&=\sum_{t_1,\dots,t_m\in T}h_{t_1}\dotsm h_{t_m}
\mathbb P_m(e_{t_1}\otimes\dots\otimes e_{t_m})\notag\\
&=\sum_{\substack{t_1,\dots,t_m\in T\\
t_i\ne t_j\ \text{if }i\ne j}}h_{t_1}\dotsm h_{t_m}
\mathbb P_m(e_{t_1}\otimes\dots\otimes e_{t_m})\notag\\
&=\sum_{\substack{t_1,\dots,t_m\in T\\
t_1<t_2\dots<t_m}}\sum_{\pi\in S_m}
h_{t_{\pi(1)}}\dotsm h_{t_{\pi(m)}}
\mathbb P_m(e_{t_{\pi(1)}}\otimes\dots\otimes e_{t_{\pi(m)}})\notag\\
&=\sum_{\substack{t_1,\dots,t_m\in T\\
t_1<t_2\dots<t_m}}h_{t_1}\dotsm h_{t_m}\sum_{\pi\in S_m}
\mathbb P_m(e_{t_{\pi(1)}}\otimes\dots\otimes e_{t_{\pi(m)}})\notag\\
&=\sum_{\substack{t_1,\dots,t_m\in T\\
t_1<t_2\dots<t_m}} h_{t_1}\dotsm h_{t_m}
\left(\sum_{\pi\in S_m}
Q_\pi(t_1,\dots,t_m)\right)\mathbb P_m(e_{t_1}\otimes\dots\otimes e_{t_m}),\label{tyei67eii}
\end{align}
where we used that
$h_{t_{\pi(1)}}\dotsm h_{t_{\pi(m)}}=h_{t_1}\dotsm h_{t_m}$
for any permutation $\pi\in S_m$.
 It can be easily proven by induction on $m$ that, for any $t_1,\dots,t_m\in T$ with $t_1<t_2\dots<t_m$, we have
\begin{equation}\label{tye6ue6iess}
\sum_{\pi\in S_m}Q_\pi(t_1,\dots,t_m)=[m]_q!\,.
\end{equation}
Here we used the notation, for $m\in\mathbb N$ and $q\ne1$,
$$
[m]_q!:=\prod_{i=1}^{m}[i]_q,\quad\text{where }[i]_q:=
1+q+q^2+\dots+q^{i-1}=\frac{1-q^i}{1-q}. $$
Since $q^m=1$, we get $[m]_q!=0$. Hence, the theorem follows from \eqref{tyei67eii} and \eqref{tye6ue6iess}.
  \end{proof}

\section*{Acknowledgements}
The authors acknowledge the financial support for this work by the Polish National Science Center, grant no.~2012/05/B/ST1/00626. M.B. was partially supported by the MAESTRO grant DEC-2011/02/A/ST1/00119. M.B. and E.L. acknowledge the financial support of
the SFB 701 `Spectral structures and topological methods in mathematics,' Bielefeld University.


\end{document}